\title{A Circuit-Based Approach to Efficient Enumeration}
\author[1]{Antoine Amarilli}
\author[2]{Pierre Bourhis}
\author[3]{Louis Jachiet}
\author[4]{Stefan Mengel}
\affil[1]{LTCI, Télécom ParisTech, Université Paris-Saclay; France\\
  \texttt{antoine.amarilli@telecom-paristech.fr}}
\affil[2]{CRIStAL, CNRS UMR 9189 \& Inria Lille; France\\
  \texttt{pierre.bourhis@univ-lille1.fr}}
\affil[3]{Université Grenoble Alpes; France\\
  \texttt{louis.jachiet@inria.fr}}
\affil[4]{CNRS, CRIL UMR 8188; France\\
  \texttt{mengel@cril.fr}}
\subjclass{F.2.2 Nonnumerical Algorithms and Problems}
\keywords{circuits; constant-delay; enumeration; d-DNNFs; MSO}
\theoremstyle{theorem}
\newcommand{\calA}{\mathcal{A}}
\newcommand{\calD}{\mathcal{D}}
\newcommand{\calE}{\mathcal{E}}
\newcommand{\calI}{\mathcal{I}}
\newcommand{\calO}{\mathcal{O}}
\newcommand{\calS}{\mathcal{S}}
\newcommand{\NN}{\mathbb{N}}
\renewcommand{\r}{\mathrm{r}}
\newcommand{\semp}{\{\}}
\newcommand{\ssemp}{\{\semp\}}
\newcommand{\var}{\mathrm{var}}
\newcommand{\ins}{\mathrm{in}}
\newcommand{\reach}{\mathrm{reach}}
\newcommand{\eqg}{=\!}
\newcommand{\geg}{\ge\!}
\newcommand{\bog}{\bowtie\!}
\newcommand{\hamw}[1]{\card{#1}}
\newcommand{\LVal}{\mathrm{LVal}}
\newcommand{\Leaf}{\mathrm{Leaf}}
\newcommand{\Assign}{\mathrm{Assign}}
\newcommand{\card}[1]{\left|{#1}\right|}
\begin{document}

\maketitle

\begin{abstract}
  We study the problem of enumerating the satisfying valuations of a circuit while
bounding the \emph{delay}, i.e., the time needed to compute each successive
valuation. We focus on the class of \emph{structured d-DNNF circuits} originally
introduced in knowledge compilation, a sub-area of artificial intelligence. We
propose an algorithm for these circuits that enumerates valuations with linear
preprocessing and delay linear in the Hamming weight of each valuation.
Moreover, valuations of constant Hamming weight can be enumerated with linear
preprocessing and constant delay.
 
Our results yield a framework for efficient enumeration that applies to all
problems whose solutions can be compiled to structured d-DNNFs. In particular,
we use it to recapture classical results in database theory, for factorized
database representations and for MSO evaluation. This gives an independent proof
of constant-delay enumeration for MSO formulae with first-order free variables
on bounded-treewidth structures.

\end{abstract}

\section{Introduction}
\label{sec:intro}
When a computational problem has a large number of solutions, 
computing all of them at once can take an unreasonable amount of time.
\emph{Enumeration algorithms} are an answer to this challenge, and have been
studied in many contexts (see \cite{Wasa16} for an overview).
They generally consist of two phases. First, in a \emph{preprocessing phase}, the
input is read and indexed.
Second, in an \emph{enumeration phase} that uses the result of the preprocessing, 
the solutions are computed one after the other. The goal is to limit the amount of time between each pair of successive
solutions, which is called \emph{delay}.

We focus on a well-studied class of efficient enumeration algorithms with very
strict requirements: the
preprocessing must be \emph{linear} in the input size, and the delay between
successive solutions must be \emph{constant}. Such algorithms have been
studied in particular
for database applications, to enumerate query answers
(see~\cite{DurandG07,bagan2006mso,DurandSS14,BaganDFG10,BaganDG07,kazana2013enumeration,KazanaS13}
and the recent survey~\cite{Segoufin14}), or to enumerate the tuples of
factorized database representations~\cite{olteanu2015size}.

One shortcoming of these existing enumeration algorithms is that they are typically
shown by building a custom index structure tailored to the 
problem, and
designing ad~hoc preprocessing and enumeration algorithms.
This makes it 
hard
to generalize 
them to other problems, or to
implement them efficiently. In our opinion, it would be far 
better if enumeration for multiple problems
could be 
done
via one generic representation of the results to enumerate,
reusing general algorithms for the preprocessing and enumeration phases.

This paper accordingly proposes a new framework for constant-delay enumeration algorithms,
inspired by the field of \emph{knowledge compilation} in artificial
intelligence. Knowledge compilation studies how the solutions to computational problems can
be \emph{compiled} to generic representations, in particular classes of \emph{Boolean
circuits}, on which reasoning tasks can then be solved using general-purpose algorithms. In this paper, we show how 
this knowledge compilation approach can be implemented for constant-delay enumeration, by compiling
to a
prominent class of circuits from knowledge compilation called \emph{deterministic decomposable negation normal form} (in short, d-DNNF)~\cite{darwiche2001tractable}.
These circuits 
generalize several forms of branching programs such as OBDDs~\cite{DarwicheM02}
and were recently shown to be more expressive than Boolean circuits of bounded treewidth~\cite{bova2017circuit}.
Further, there are many efficient algorithms to compute d-DNNF representations
of small width CNF formulae for a wide range of width notions~\cite{BovaCMS15},
and even software implementations to compute such representations for given
Boolean functions~\cite{OztokD15,ChoiD13}. d-DNNFs are also intimately related
to state-of-the-art propositional model counters based on exhaustive
DPLL~\cite{HuangD05}, to syntactically multi-linear arithmetic
circuits~\cite{RazSY08}, and to probabilistic query evaluation in
database theory~\cite{JhaS13}.

Our main technical contribution is an efficient algorithm to enumerate the
satisfying valuations of 
a d-DNNF under a standard structuredness assumption, namely, assuming that a
so-called \emph{v-tree} is given~\cite{PipatsrisawatD08}: this assumption holds in all works cited above. 
Our first main result (Theorem~\ref{thm:main}) shows that we can enumerate the satisfying valuations of such a circuit
with linear preprocessing and delay linear in the Hamming weight of each
valuation. Further, our second main result (Theorem~\ref{thm:constant}) shows
that, if we impose a constant bound on the Hamming weight, we can
enumerate the valuations with constant delay. In these results we express valuations succinctly as the set of
the variables that they set to true.

To show our results, we consider d-DNNFs under a semantics where negation is
implicit, i.e., variables that are not tested must be set to zero. 
In analogy to zero-suppressed OBDDs~\cite{wegener2000branching}, we call this semantics \emph{zero-suppressed}. The preprocessing phase of our algorithm rewrites such circuits to
a normal form (Section~\ref{sec:normal}) and pre-computes a multitree
reachability index on them (Section~\ref{sec:multitrees}), which allows us to
enumerate efficiently the \emph{traces} of the circuit, and obtain the desired valuations
(Section~\ref{sec:enum}).
To enumerate for d-DNNFs in standard semantics, we show how to rewrite the input circuit to zero-suppressed semantics, using
the structuredness assumption, and using a new notion of \emph{range gates} to make the process efficient
(Section~\ref{sec:reducing}). The overall proof is very modular; for an outline see Figure~\ref{fig:schema}.
\begin{figure}[t]
  \begin{tikzpicture}[yscale=.48, every node/.style={inner sep=2pt}]

\node[draw] (ddnnf) at (0, .8) {d-DNNF};

\node[draw] (vtree) at (-.22, -.8) {v-tree};

\node[draw,align=center] (zss) at (2.4, 0)
    {augmented\\[-.2em] d-DNNF\\[-.2em]
{\footnotesize (Def.~\ref{def:zss})}};

\node[draw,align=center] (nf) at (4.8,0)
    {normal\\[-.2em] d-DNNF\\[-.2em]
{\footnotesize (Def.~\ref{def:normalform})}};

\node[draw,align=center] (indx) at (7.3,0)
    {normal\\[-.2em] d-DNNF\\[-.2em]+OR-index};

\node[draw,align=center] (ctra) at (9.85,0)
    {compressed\\[-.2em] traces\\[-.2em]
{\footnotesize (Def.~\ref{def:comptrace})}};

\node[draw,align=center] (val) at (12.33,0)
    {satisfying\\[-.2em] valuations};

\node[align=center] (redlabel) at (1.15,.05)
{\footnotesize Prp.\\[-.3em]\ref{prp:reducing}};

\draw[->] (vtree) -- (vtree-|zss.west);
\draw[->] (ddnnf) -- (ddnnf-|zss.west);
\draw[->] (zss) -- node[above] {\footnotesize Prp.}
node[below] {\footnotesize \ref{prp:normalize}} (nf);
\draw[->] (nf) -- node[above] {\footnotesize \vphantom{p}Thm.}
node[below]{\footnotesize \ref{thm:orindex}} (indx);
\draw[->] (indx) -- node[above]
{\footnotesize Prp.} node[below]{\footnotesize\ref{prp:enumtraces}} (ctra);
\draw[->] (ctra) -- node[above]
{\footnotesize Prp.} node[below]{\footnotesize\ref{prp:enumsol}} (val);

\draw (-.8,1.65) -- (-.8,1.9) -- 
node[above] {\emph{Linear-time preprocessing phase
(Sec.~\ref{sec:reducing}--\ref{sec:multitrees})}}
(8.25, 1.9) -- (8.25, 1.65);

\draw (8.35, 1.65) -- (8.35, 1.9) -- 
node[above] {\emph{Enumeration phase
(Sec.~\ref{sec:enum})}}
(13.25, 1.9) -- (13.25, 1.65);

\end{tikzpicture}
  \caption{Overview of the proof of Theorem~\ref{thm:main}}
\label{fig:schema}
\end{figure}
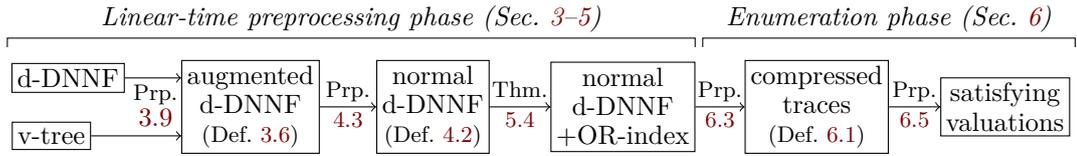

Our second contribution is to illustrate how our circuit-based
framework and enumeration results can be useful in database theory.
As a proof of concept, we present two known results that we can extend, or recapture with
an independent proof.
First,
we re-prove with our framework that the answers to MSO queries on trees and
bounded-treewidth structures can be enumerated with linear preprocessing and delay
linear in each assignment, i.e., constant-delay if the free variables are
first-order. This was previously shown by Bagan~\cite{bagan2006mso} with a
custom construction, by Kazana and Segoufin~\cite{kazana2013enumeration}
using a powerful result of Colcombet~\cite{colcombet2007combinatorial}, and by
Courcelle~\cite{courcelle2009linear} in a more general setting (but with
$O(n \log n)$ preprocessing) using \mbox{AND/OR-DAGs} (that share some
similarities with DNNFs). 
Our proof follows our proposed approach: we compute a circuit representation of
the output following the provenance constructions
in~\cite{amarilli2015provenance}, and simply apply our enumeration result to this
circuit.
Second, we show how d-DNNFs generalize the deterministic factorized
representations of relational instances studied in database theory~\cite{olteanu2015size}.
This allows us to give enumeration algorithms with linear preprocessing and
constant delay for arbitrary deterministic d-representations, extending the
enumeration result of~\cite{olteanu2015size}. 

The paper is structured as follows. Section~\ref{sec:prelim} gives the main
definitions and results. We then describe the preprocessing phase of our
algorithm: we reduce the input circuit to
zero-suppressed semantics in Section~\ref{sec:reducing}, rewrite it to a normal
form in Section~\ref{sec:normal}, and compute the multitree index in
Section~\ref{sec:multitrees}. We then describe the enumeration algorithm in
Section~\ref{sec:enum}. We present our two applications in
Section~\ref{sec:applications} and conclude in Section~\ref{sec:conclusion}. Due to space restrictions, many details and the proofs are found in the appendix.

\section{Preliminaries and Problem Statement}
\label{sec:prelim}
\subparagraph*{Circuits.}
A circuit $C = (G, W, g_0, \mu)$ 
is a directed acyclic graph $(G, W)$ whose vertices $G$
are called \emph{gates}, whose edges $W$ are called \emph{wires}, which has an
\emph{output gate} $g_0 \in G$, and where each gate $g \in G$ has a
\emph{type} $\mu(g)$ among $\land$ (AND-gate), $\lor$ (OR-gate), $\neg$
(NOT-gate), or $\var$ (variable).
We represent the circuit with adjacency lists 
that indicate, for each gate $g \in G$, the gates having a wire to~$g$ (called the
\emph{inputs} of~$g$),
and the gates
of which $g$ is an input;
the number of such gates is called respectively the \emph{fan-in} and
\emph{fan-out} of~$g$. The size $\card{C}$ of this 
representation
is then $\card{G} + \card{W}$.
We require that variables have fan-in zero, that NOT-gates have
fan-in one, and we will always work on \emph{negation normal form} (NNF)
circuits where 
the input of NOT-gates is always a variable. 
A circuit without NOT-gates is called
\emph{monotone}.

We write $C_\var$ for the set of variables of~$C$.
A \emph{valuation} of~$C_\var$ is a function $\nu: C_\var \to \{0, 1\}$. A
circuit defines a \emph{Boolean function} on~$C_\var$, that is, a function
$\phi$ that maps each valuation of~$C_\var$ to $\{0, 1\}$. For any
valuation~$\nu$, the image of~$\nu$ by~$\phi$ is defined by substituting each
gate in~$C_\var$ by its value according to~$\nu$, evaluating the circuit using
the standard semantics of Boolean operations, and returning the value of the
output gate~$g_0$.
Note that AND-gates (resp., OR-gates) with no inputs always evaluate to~$1$
(resp., to~$0$) in this process. We call a gate \emph{unsatisfiable} if it
evaluates to~$0$ under all valuations (and \emph{satisfiable} otherwise); we
call it \emph{0-valid} if it evaluates to~$1$
under the valuation which sets all variable gates to~$0$.
We say that $\nu$ \emph{satisfies} $C$ 
if $\phi$ maps~$\nu$ to~$1$ (i.e., $g_0$ evaluates to~$1$ under~$\nu$), and
call $\nu$ a \emph{satisfying valuation}.

For enumeration, we represent a valuation $\nu$ of~$C$ as the set $S_\nu$ of variables
of~$C_\var$ that it sets to~$1$, i.e., $\{g \in C_\var \mid \nu(g) = 1\}$.
We call $S_\nu$ an \emph{assignment},
and a \emph{satisfying assignment} if~$\nu$ is a satisfying valuation.
The \emph{Hamming weight} $\hamw{\nu}$ of~$\nu$ is the cardinality
of~$S_\nu$. 
Unlike valuations, assignments of
constant Hamming weight are of constant size, no matter the size of~$C_\var$.
We write $\semp$ for the empty assignment, and write $\emptyset$ for an empty
set of assignments.

The main class of circuits that we will study are \emph{d-DNNFs}
\cite{darwiche2001tractable}, of which we now recall the definition.
We say that an AND-gate $g$ of a circuit~$C$ is
\emph{decomposable} if there is no pair $g_1 \neq g_2$ of input gates to~$g$
such that some variable $g' \in C_\var$ has a directed path both to~$g_1$ and
to~$g_2$: intuitively, a decomposable AND-gate is a conjunction of inputs on disjoint
sets of variables. We say that an OR-gate $g$ of~$C$ is \emph{deterministic} if there is
no pair $g_1 \neq g_2$ of input gates of~$g$ and valuation $\nu$ of~$C$ such
that $g_1$ and $g_2$ both evaluate to~$1$ under~$\nu$: intuitively, a
deterministic OR-gate is a disjunction of mutually exclusive inputs. A circuit~$C$ is a
\emph{d-DNNF} if all its AND-gates are decomposable, and all its OR-gates are
deterministic.

We will further study the subclass of d-DNNFs called \emph{structured d-DNNFs},
consisting of the d-DNNFs having a \emph{v-tree} \cite{PipatsrisawatD08}.
A \emph{v-tree} on a set $S$ of variables is a rooted unranked ordered
tree $T$ whose set of leaves is exactly $S$. We write $<_T$ for the order on~$T$
in which the nodes are visited in a pre-order traversal. For a circuit $C$, we
say that a v-tree~$T$ on the set~$C_\var$ is a \emph{v-tree}
of~$C$ if there is a mapping $\lambda$ from the gates of~$C$ to the nodes of~$T$
such that:
(i) $\lambda$ maps the variables of~$C$ to themselves; 
(ii) for each 
wire $(g, g')$ of~$C$, the node $\lambda(g)$ is a descendant of~$\lambda(g')$ in~$T$; and
(iii) for each AND-gate $g$ of~$C$ with inputs $g_1,
    \ldots, g_n$ (in this order),
the nodes $\lambda(g_1), \ldots, \lambda(g_n)$ are descendants
  of~$\lambda(g)$, none of them
  is a descendant of another, and we have $\lambda(g_1) <_T \cdots <_T
  \lambda(g_n)$.
Note that having a v-tree implies (by point~iii) that all AND-gates are 
decomposable. A \emph{structured d-DNNF} is a d-DNNF $C$ given with a
v-tree $T$ of~$C$.

\subparagraph*{Enumeration.}
As usual for efficient enumeration algorithms~\cite{Segoufin14}, we work
in the RAM model with uniform cost measure (see,
e.g.,~\cite{AhoHU74}), 
where pointers, numbers, labels for vertices and edges, etc.,
have constant size; thus an assignment has size linear in its Hamming weight.

An \emph{enumeration algorithm with linear-time preprocessing} computes a
set of results $\calS(\calI)$ from an input instance $\calI$. It consists of two
parts.
First, the \emph{preprocessing phase} 
takes as input
an instance $\calI$ and produces in \emph{linear time} an indexed
instance $\calI'$ and an initial \emph{state}. 
Second, 
the \emph{enumeration phase}
repeatedly calls an algorithm $\calA$. Each call to~$\calA$ takes as input the
indexed instance $\calI'$ and the current state,
and returns a result and a new state: a special state value indicates that
the enumeration is over so $\calA$ should not be called again.
The results produced by the calls to~$\calA$
must be exactly the elements of $\calS(\calI)$, with no
duplicates.

We say that the enumeration
algorithm has \emph{linear delay} if the
time to produce each new output element $\calE$ is linear in the size of $\calE$
(and independent of the input instance $\calI$).
In particular, when the output elements have constant size, 
each element 
must be produced with constant delay, which we call \emph{constant-delay
enumeration}.
The \emph{memory usage} of an enumeration algorithm is the maximum number of
memory cells used during the 
enumeration phase (not counting the indexed instance $\calI'$, which resides in
read-only memory), expressed
as a function of
the input instance size $\card{\calI}$ and of the size $\card{\calO}$ of the
largest output (as in~\cite{bagan2006mso}).
Note that constant delay does not imply a bound on memory usage, because
the state can become large even if we only add a constant
quantity of information at each step.

\subparagraph*{Main results.}
Our main theorem on circuit enumeration is the following:

\begin{theorem}
  \label{thm:main}
  Given a structured d-DNNF $C$ with a v-tree $T$, we
  can enumerate its satisfying assignments with linear-time preprocessing, linear
  delay, and memory usage \mbox{$O(\card{\calO} \cdot \log \card{C})$},
  where $\card{\calO}$ is the Hamming weight of the largest assignment.
\end{theorem}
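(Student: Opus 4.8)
The plan is to establish Theorem~\ref{thm:main} by a modular chain of reductions that rewrites the input circuit through progressively more structured representations, exactly as outlined in Figure~\ref{fig:schema}; each rewriting runs in linear time and preserves the set of satisfying assignments, possibly after reinterpreting the circuit under a different semantics. Concretely, I would split the work into a linear-time preprocessing phase (which produces the final structured object plus an index) and an enumeration phase that walks this object to output assignments with delay linear in their Hamming weight.

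\textbf{Preprocessing.} First I would pass from the standard Boolean semantics to the \emph{zero-suppressed} semantics, in which a satisfying valuation is named only by the set of variables it sets to~$1$, all other variables being implicitly~$0$. Using the v-tree~$T$, I would rewrite the input structured d-DNNF into an \emph{augmented d-DNNF} (Definition~\ref{def:zss}) whose zero-suppressed satisfying assignments coincide with the satisfying assignments of the original circuit; this is Proposition~\ref{prp:reducing}. The delicate point here is to do this in linear time: re-expressing ``variable~$x$ is~$0$'' calls for negative information that d-DNNFs do not directly provide, so the paper introduces \emph{range gates} that compactly encode ``every variable in this v-tree range may be set freely''. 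Next I would normalize the augmented d-DNNF to the normal form of Definition~\ref{def:normalform} via Proposition~\ref{prp:normalize}: this enforces convenient structural invariants (alternation of gate types, bounded fan-in of decomposable AND-gates, explicit treatment of the empty and zero assignments) so that accepting computations become clean tree-like objects. Finally, on this normal d-DNNF I would precompute in linear time the \emph{OR-index} of Section~\ref{sec:multitrees} via Theorem~\ref{thm:orindex}, i.e.\ a multitree reachability index that, given an OR-gate, lets us jump in constant time to the ``next relevant'' gate below it without re-traversing large OR-subcircuits that contribute no new variable.

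\textbf{Enumeration.} With the normal d-DNNF and the OR-index in hand, I would enumerate the \emph{compressed traces} of the circuit (Proposition~\ref{prp:enumtraces}): a trace is essentially a proof-tree witnessing that a given assignment is accepted, and determinism of OR-gates guarantees that distinct traces yield distinct assignments, so the enumeration is duplicate-free without any extra bookkeeping; the OR-index is what makes each navigation step constant-time, so advancing from one compressed trace to the next costs time linear in the Hamming weight of the assignment just produced. Decoding a compressed trace into its satisfying assignment is then Proposition~\ref{prp:enumsol}, again with delay linear in the assignment's weight. The memory bound $O(\card{\calO} \cdot \log \card{C})$ follows because the enumeration state is a single root-to-cursor path in the current trace: it has length $O(\card{\calO})$, and each of its entries is a gate identifier, of size $O(\log \card{C})$ in the RAM model.

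\textbf{Main obstacle.} I expect the hardest part to be the linear-time conversion to zero-suppressed semantics: a naive rewriting that makes negative information explicit can blow up the circuit size, so one must verify that the range-gate construction is simultaneously correct (it preserves determinism and decomposability, respects the v-tree, and restores exactly the original models) and size-linear. A secondary difficulty is proving that the OR-index really yields constant-time navigation, i.e.\ that after each jump we land on a gate contributing a fresh variable to the current assignment; this is precisely what upgrades ``linear in the circuit per step'' to ``linear in the output per solution'', and hence what turns the reduction chain into the claimed linear-delay enumeration algorithm.
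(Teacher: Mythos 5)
Your proposal follows the paper's own route step for step (Proposition~\ref{prp:reducing} $\to$ Proposition~\ref{prp:normalize} $\to$ Theorem~\ref{thm:orindex} $\to$ Proposition~\ref{prp:enumtraces} $\to$ Proposition~\ref{prp:enumsol}), and the correctness and delay analysis is essentially the paper's. However, your memory analysis is mis-attributed. The paper works in the RAM model with uniform cost measure, where a gate identifier or pointer occupies a single memory cell; so a root-to-cursor path of length $O(\card{\calO})$ would cost only $O(\card{\calO})$ cells, and your justification would ``prove'' a stronger bound than the construction actually achieves. The genuine source of the $\log \card{C}$ factor is the enumeration state of the OR-index itself: at each OR-gate of the current compressed trace, enumerating its exits via Theorem~\ref{thm:orindex} maintains a stack of pending nodes of the multitree index, and this stack is only guaranteed to stay of size $O(\log \card{C})$ (by always pushing the heavier child first). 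Since a compressed trace has $O(\card{\calO})$ gates by Lemma~\ref{lem:compress}, the total is $O(\card{\calO} \cdot \log \card{C})$; also note the OR-index gives constant \emph{delay} per exit, not a constant-time jump to a gate ``contributing a fresh variable''.

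A second, smaller gap: your claim that every rewriting ``preserves the set of satisfying assignments'' fails for normalization. Proposition~\ref{prp:normalize} requires $S(C) \neq \emptyset$ and $S(C) \neq \ssemp$, and it outputs a circuit capturing $S(C) \setminus \ssemp$, because normal circuits are $\emptyset$-pruned and $\semp$-pruned. The top-level algorithm must therefore handle these cases explicitly: test in linear time whether the d-DNNF is satisfiable and whether it is satisfied only by the all-zero valuation (consistency checking and model counting are linear on d-DNNFs), and, during enumeration, output $\semp$ separately as the first result whenever the all-zero valuation satisfies~$C$. Without this, you either invoke the normalization outside its preconditions or silently drop the empty assignment from the enumeration.
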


If we fix a maximal Hamming weight $k \in \NN$, we can show constant-delay
enumeration:

\begin{theorem}
  \label{thm:constant}
  For any $k \in \NN$, 
  given a structured d-DNNF $C$ with a v-tree $T$,
  we can enumerate its satisfying assignments of Hamming weight $\leq k$ with
  preprocessing in time $O(\card{T} + k^2 \cdot \card{C})$, delay in $O(k)$, 
  and memory in $O(k \cdot \log \card{C})$,
  i.e., linear-time preprocessing and constant delay for fixed~$k$.
\end{theorem}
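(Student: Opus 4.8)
The plan is to reuse the preprocessing and enumeration machinery behind Theorem~\ref{thm:main}, but to run it on a modified circuit that only accepts valuations of Hamming weight at most~$k$. Concretely, I would re-run the whole pipeline of Theorem~\ref{thm:main} (Sections~\ref{sec:reducing}--\ref{sec:enum}) but insert one extra step right after the reduction to zero-suppressed semantics of Section~\ref{sec:reducing}. So the first step is to apply Proposition~\ref{prp:reducing} to $(C, T)$, obtaining in linear time an equivalent augmented d-DNNF $C_1$ of size $O(\card C)$ together with a v-tree $T_1$ of size $O(\card T)$, whose satisfying assignments are exactly those of $C$ and in particular have the same Hamming weights. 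Moving to zero-suppressed semantics is crucial here: in that semantics every OR-gate is ``tight'', i.e., its disjuncts effectively range over the same variables, so the Hamming weight of an assignment obtained through a disjunct depends only on that disjunct. In standard semantics this fails, and bounding the weight would seem to require smoothing the circuit, which would be far too expensive.

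Next I would perform the weight-restriction step. After binarizing the AND-gates of $C_1$ (as in the normalization of Section~\ref{sec:normal}; this only refines~$T_1$ and is linear), build the circuit $C_2$ as a product of $C_1$ with a counter capped at~$k$: for every gate $g$ of $C_1$ and every $i \in \{0, \dots, k\}$, the circuit $C_2$ has a gate $g^{(i)}$ intended to capture ``$g$ is satisfied and exactly $i$ variable gates below $g$ are selected'', built by the natural recursion --- $x^{(1)} \colonequals x$ for a variable gate $x$ (with $x^{(i)}$ unsatisfiable for $i \neq 1$), $g^{(0)} \colonequals g$ for an auxiliary gate not contributing to the weight, $g^{(i)} \colonequals \bigvee_{j + \ell = i} g_1^{(j)} \land g_2^{(\ell)}$ for a binary AND-gate $g = g_1 \land g_2$, and $g^{(i)} \colonequals \bigvee_m g_m^{(i)}$ for an OR-gate $g = \bigvee_m g_m$ --- and whose output is a fresh OR-gate $\bigvee_{i=0}^{k} g_0^{(i)}$ over the copies of the output $g_0$ of $C_1$. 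One then checks that $C_2$ is again a structured augmented d-DNNF, with v-tree $T_2 \colonequals T_1$ where every $g^{(i)}$ maps to the node of $g$: decomposability is inherited from $C_1$; determinism of the OR-gates $\bigvee_m g_m^{(i)}$ follows from determinism in $C_1$, and determinism of the OR-gates $\bigvee_{j + \ell = i} g_1^{(j)} \land g_2^{(\ell)}$ and of the new output gate follows since the number of selected variable gates below a fixed gate is a function of the valuation, so the $g^{(i)}$ for distinct~$i$ are pairwise mutually exclusive. By construction the satisfying assignments of $C_2$ are exactly those of $C_1$, i.e.\ of $C$, of Hamming weight $\leq k$.

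For the complexity, since AND-gates are binary each contributes only $O(k^2)$ gates and wires to $C_2$ (for each target count $i \leq k$, an OR over at most $k+1$ fresh binary AND-gates), while OR-gates and variable gates contribute $O(k)$; hence $\card{C_2} = O(k^2 \card C)$, the construction takes time $O(k^2 \card C)$, and $\card{T_2} = O(\card T + \card C)$. We may assume $k \leq \card{C_\var}$, as otherwise the weight bound is vacuous and Theorem~\ref{thm:main} applies directly. Running the remaining steps of the pipeline (Sections~\ref{sec:normal}--\ref{sec:enum}) on $(C_2, T_2)$ then takes preprocessing $O(\card{C_2} + \card{T_2}) = O(\card T + k^2 \card C)$, produces each assignment with delay linear in its size and hence in $O(k)$, and uses memory $O(\card{\calO} \cdot \log \card{C_2}) = O(k \cdot \log \card C)$ since $\card{C_2}$ is polynomial in $\card C$. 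The two delicate points are the preservation of determinism under the counter product --- handled by the tightness of OR-gates in zero-suppressed semantics together with the functional-determinacy argument above --- and the bookkeeping ensuring the blowup is $O(k^2)$ rather than exponential, which is exactly why the AND-gates must be binarized first; I expect this accounting to be the main obstacle.
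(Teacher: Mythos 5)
Your plan is essentially the paper's own proof: after Proposition~\ref{prp:reducing}, the paper inserts exactly such a weight-splitting step (Proposition~\ref{prp:homogenize}, a Strassen-style homogenization that first binarizes the circuit and then creates one copy of each gate per weight class, with AND-gates expanded into ORs of pairs of copies whose weights sum correctly), and then runs the rest of the pipeline unchanged; your size, delay and memory accounting, and your determinism argument via the disjointness of the captured sets of the different weight copies, match the paper's.

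There is, however, one concrete defect in your recursion: the gates you dismiss as ``auxiliary gates not contributing to the weight'' with $g^{(0)} \colonequals g$. The circuit produced by Proposition~\ref{prp:reducing} is a $0$-augmented circuit, i.e., it contains ${\geq}0$-range gates, and in zero-suppressed semantics such a gate with inputs $g_1,g_2$ captures \emph{all} subsets of the interval $[g_1,g_2]$, of every cardinality from $0$ up to $\card{[g_1,g_2]}$ --- it very much contributes to the weight. Treating it as a weight-$0$ gate breaks the invariant that $g^{(i)}$ captures exactly the weight-$i$ assignments below $g$, so the final circuit would admit assignments of weight $>k$ (and violate determinism between the copies). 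The fix is what the paper does: the weight-$i$ copy of a range gate is an ${=}i$-range gate with the same two inputs, so the homogenized circuit becomes a $(k{+}1)$-augmented circuit rather than a $0$-augmented one; this is the one place in the step where the range-gate formalism is genuinely needed, and it is why Proposition~\ref{prp:homogenize2} outputs a $\max(l,k{+}1)$-augmented circuit. (Your motivating remark that zero-suppressed OR-gates are ``tight'' so the weight depends only on the chosen disjunct is also not accurate --- a disjunct's captured set can contain assignments of many weights --- but this does not affect the construction, which splits by weight recursively anyway.)
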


In both results,
remember that $\card{C}$ is the number of gates \emph{and wires} of~$C$.
We prove our two results in Sections~\ref{sec:reducing}--\ref{sec:enum}: the first
three sections present the three steps of the linear-time preprocessing algorithm,
and the last one presents the enumeration algorithm.
We then use the results for database applications in
Section~\ref{sec:applications}, in particular 
re-proving constant-delay enumeration for MSO queries with free first-order
variables on bounded-treewidth structures.

The memory bound in our results is not constant and depends
logarithmically on the input. While we think that this is reasonable,
we also show constant-memory enumeration for some restricted
circuit classes: the details are deferred to Appendix~\ref{apx:memory} for lack
of space.

\section{Reducing to Zero-Suppressed Semantics}
\label{sec:reducing}
\begin{toappendix}
  \label{apx:reducing}
  We now introduce some additional notation to be used throughout the appendix.
We will call a \emph{decomposable circuit} or \emph{DNNF} a
circuit where all AND-gates are decomposable, and a \emph{deterministic
circuit} a circuit where all OR-gates are deterministic.

We introduce an additional notation related to zero-suppressed semantics:
\begin{definition}
\label{def:capset}
For a gate $g$ in a monotone augmented circuit~$C$, we will call its
  \emph{captured set} $S(g)$ the set of the minimal valuations of~$g$ (written
  as assignments)
  in the sense of Definition~\ref{def:minval} (and for which an alternative
  characterization is given as Lemma~\ref{lem:botup}). The captured set $S(g_0)$ of the
  output gate~$g_0$ of~$C$ is then equal to its set of assignments $S(C)$; so we
  may also call $S(C)$ the \emph{captured set} of~$C$.
\end{definition}

We also introduce an additional definition related to partial traces (in
particular, traces):

\begin{definition}
  \label{def:tested}
  The variables \emph{tested} by a partial trace~$T$ of~$C$ are the variables that
  occur in~$T$ or occur in the interval of a range gate of~$T$. Note that
  these are a subset of the interval of the root of~$T$.
\end{definition}

\end{toappendix}

We start our linear preprocessing by rewriting the input circuit to an
alternative \emph{zero-suppressed} semantics
where negation is coded implicitly.
For this rewriting, we will use the structuredness assumption on the circuit, in a weaker
form called having a \emph{compatible
order}: this is the first thing that we present. We will also extend
slightly our circuit formalism, to concisely represent sets of inputs with
\emph{range gates} that use this order: we present this second. Last, we present the alternative
semantics, and give our translation result (Proposition~\ref{prp:reducing}).

\subparagraph*{Compatible orders.}
Our structuredness requirement is to have a
\emph{compatible order}:

\begin{definition}
  \label{def:compatible}
  An \emph{order} for a circuit~$C$ is a total order $<$ on~$C_\var$.
  For two
  variables $g_1, g_2 \in C_\var$, the \emph{interval} $[g_1, g_2]$ consists of
  the variables $g$ which are between $g_1$ and~$g_2$ for~$<$, i.e., $g_1 \leq g \leq g_2$
  or $g_2 \leq g \leq g_1$.
  The \emph{interval} of a gate~$g$ is then $[\min(g), \max(g)]$, where $\min(g)$
  denotes the smallest gate according to~$<$ that has a directed path to~$g$, 
  and $\max(g)$ is defined analogously.
  In particular, the interval of any $g \in C_\var$ is $[g, g] = \{g\}$.
  
  We say that the
  order $<$ is \emph{compatible} with $C$ if, for every AND-gate $g$ with inputs
  $g_1, \ldots, g_n$ (in this order), for all $1 \leq i < j \leq n$, we have
  $\max(g_i) < \min(g_j)$; in particular, the intervals of~$g_1, \ldots, g_n$ are pairwise disjoint.
\end{definition}

Note that, if a circuit $C$ has compatible order $<$, every AND-gate $g$
is decomposable: if some $g' \in C_\var$ had a directed path
to two inputs of~$g$ then their intervals would intersect.

Observe further that, given a structured d-DNNF $C$ with a v-tree~$T$, we 
can easily compute a compatible order~$<$ for~$C$
in linear time in~$T$. Indeed, let $<$ be the restriction to~$C_\var$ of the order~$<_T$ on~$T$
given by pre-order traversal.
Considering any suitable mapping~$\lambda$ from~$C$ to~$T$, for any gate~$g$,
we know that $\min(g)$ is
no less than the first leaf of~$T$ in~$<$ reachable from~$\lambda(g)$, and that
$\max(g)$ is no greater than the last leaf reachable from~$\lambda(g)$.
The
intervals of the inputs $g_1, \ldots, g_n$ to an AND-gate are then pairwise
disjoint, because they are
included in the sets of reachable leaves from the nodes $\lambda(g_1), \ldots,
\lambda(g_n)$ in the
v-tree, and none of these nodes is a
descendant of another, so they cannot share any descendant leaf. Hence, 
if we know a \mbox{v-tree}~$T$ for~$C$
then we know an order~$<$ for~$C$.

\subparagraph*{Augmented circuits.}
We use compatible orders to 
define circuits with a new type of gates:

\begin{definition}
  \label{def:augmented}
  For $k \in \NN$, we define a \emph{$k$-augmented circuit} $C$ as a circuit
  with a compatible order~$<$ and with
  $k$ additional types of gates, called \emph{range gates}: there are the
  \emph{${\eqg i}$-range gates} for $0 \leq i < k$, and the \emph{${\geg k}$-range
  gates}. These gates must have exactly two inputs, which must be variables
  of~$C$ (they are not necessarily different, so we allow multi-edges in
  circuits for this purpose).
  We talk of \emph{augmented circuits} when the value of~$k$
  does not matter.

  When evaluating a $k$-augmented circuit under a valuation $\nu$, each ${\eqg
  i}$-range gate~$g$ (resp., ${\geg k}$-range gate $g$)
  with inputs $g_1$ and $g_2$ evaluates to~$1$ if there
  are exactly $i$ gates (resp., at least~$k$ gates) in $[g_1, g_2]$ set to~$1$
  by~$\nu$; note that $g$ may be unsatisfiable if $\card{[g_1, g_2]}$ is
  too small.
\end{definition}

Range gates are related to the threshold gates studied in circuit complexity (see
e.g.~\cite{BarringtonIS90}), but we only apply them directly to variables.
We can of course emulate range gates with standard gates, 
e.g.,
${\geg 0}$-gates always
evaluate to~$1$, and
a ${\geg 1}$-range gate on $g_1$ and $g_2$ can be expressed
as an OR-gate $g$ having the interval $[g_1, g_2]$ as its set of inputs. However, the point of range gates
is that we can now write this in constant space, thanks to~$<$. This will be
important to rewrite circuits in linear time to our alternative semantics.

\subparagraph*{Zero-suppressed semantics.}
We are now ready to introduce our alternative semantics for 
augmented circuits. We will do so only on \emph{monotone} augmented circuits,
i.e., without NOT-gates, because
negation will be coded implicitly. 
We use the notion of \emph{traces}:

\begin{definition}
  \label{def:trace}
  An \emph{upward tree} $T$ of a monotone augmented circuit $C = (G, W, \mu,
  g_0)$ is a subgraph $(G', W')$ of~$C$, with $G' \subseteq G$ and $W' \subseteq
  W$, which is a rooted tree up to reversing the direction of the wires.
  For all $(g', g) \in W'$, we 
  call $g' \in G'$ a \emph{child} of $g \in G'$ in~$T$, and call~$g$ the
  \emph{parent} of~$g'$ in~$T$;
  note that $g'$
  is an input of~$g$ in~$C$. A gate $g \in G'$ in~$T$ is an \emph{internal gate}
  of~$T$ if it
  has a child in~$T$, and a \emph{leaf} otherwise.
  $T$ is a \emph{partial trace} if its internal gates are
  AND-gates and OR-gates and if its gates satisfy the following:
  \begin{itemize}
    \item for every AND-gate $g$ in~$T$, \emph{all} its
      inputs in~$C$ are children of~$g$ in~$T$;
    \item for every OR-gate $g$ in~$T$, \emph{exactly one} of its inputs in~$C$ is a child
      of~$g$ in~$T$.
  \end{itemize}
  Note that $T$ cannot contain OR-gates with no inputs, and that its leaves
  consist of
  range gates, variable gates, and AND-gates with no inputs.
  We call $T$ a \emph{trace} of~$C$ if its root is~$g_0$.
\end{definition}

We define traces as trees, not general DAGs, because we cannot reach the same
gate in a trace by two different paths (remember that AND-gates in augmented circuits are
decomposable).
We can see each trace $(G', W')$ of~$C = (G, W, \mu, g_0)$ as an augmented
circuit $(G', W', \mu, g_0)$,
up to adding to range gates in the trace their inputs in~$C$,
and we then have:

\begin{observation}\label{obs:trace}
 A valuation $\nu$ of a monotone augmented circuit $C$ satisfies $C$ if and only if $\nu$ satisfies a trace of $C$.
\end{observation}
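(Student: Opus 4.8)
The plan is to prove both directions by structural induction on the circuit, or more precisely by induction on gates following a bottom-up topological order. For a gate $g$ of the monotone augmented circuit $C$, let me say that a valuation $\nu$ \emph{satisfies $g$} if $g$ evaluates to~$1$ under~$\nu$, and let me say that a partial trace \emph{rooted at $g$} is a partial trace whose root is~$g$. I will prove the stronger statement: for every gate $g$ and every valuation $\nu$, the gate $g$ evaluates to~$1$ under~$\nu$ if and only if there is a partial trace rooted at~$g$ that is satisfied by~$\nu$ (where a partial trace $T$, seen as an augmented circuit by restoring the inputs of its range gates, is satisfied by~$\nu$ in the usual sense). Applying this to $g = g_0$ gives the claim, since traces are exactly the partial traces rooted at~$g_0$.

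First I would handle the base cases. If $g$ is a variable gate, the only partial trace rooted at~$g$ is the single-node tree $\{g\}$, which evaluates exactly as~$g$ does, so both sides coincide. If $g$ is a range gate with inputs $g_1, g_2$, the only partial trace rooted at~$g$ is again the single gate~$g$ (with its two input wires restored), and again this evaluates exactly as~$g$ does under~$\nu$. If $g$ is an AND-gate or OR-gate with no inputs, the single-node tree $\{g\}$ is a leaf partial trace and evaluates to~$1$ (resp.~$0$) as required; note an empty OR-gate admits no partial trace with a child, consistent with it never evaluating to~$1$.

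For the inductive step, suppose $g$ is an AND-gate with inputs $g_1, \dots, g_n$ (with $n \geq 1$). Then $g$ evaluates to~$1$ under~$\nu$ iff every $g_i$ does, iff (by induction) for each~$i$ there is a partial trace $T_i$ rooted at~$g_i$ satisfied by~$\nu$. By the definition of a partial trace, the subtrees $T_i$ glued under a common root~$g$ (with all $n$ wires from the $g_i$ to~$g$) form a partial trace rooted at~$g$; conversely every partial trace rooted at an AND-gate~$g$ decomposes this way since \emph{all} inputs of~$g$ must be children. The one subtlety is that these subtrees must be vertex-disjoint so that their union is genuinely a tree: this holds because the circuit is augmented, hence decomposable (the intervals of $g_1,\dots,g_n$ are pairwise disjoint, so no gate can be reached from two distinct~$g_i$), which is exactly the remark the paper makes right before the observation. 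Finally, if $g$ is an OR-gate with inputs $g_1,\dots,g_n$, then $g$ evaluates to~$1$ iff \emph{some} $g_i$ does, iff (by induction) some $g_i$ has a partial trace $T_i$ rooted at it and satisfied by~$\nu$; attaching that single $T_i$ under~$g$ gives a partial trace rooted at~$g$, and conversely every partial trace rooted at an OR-gate has exactly one child subtree, giving the equivalence. The main obstacle is this disjointness point in the AND-case, but it is dispatched immediately by decomposability, so overall the argument is a routine induction.
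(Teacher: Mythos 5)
Your proof is correct and matches how the paper treats this statement: the paper states it without proof as immediate, and the only nontrivial point---that the subtrees glued under an AND-gate are pairwise vertex-disjoint, which is justified by decomposability---is exactly the remark the paper makes in the sentence preceding the observation, which you invoke in the same way in your inductive AND-case. One cosmetic slip: for an OR-gate with no inputs the singleton $\{g\}$ is not a partial trace under Definition~\ref{def:trace} (such a gate admits no partial trace at all), but as you yourself note this is consistent with the gate never evaluating to~$1$, so the stated equivalence is unaffected.
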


Observe that we can check if a valuation $\nu$ of~$C$ satisfies a trace $T$
simply by looking at the value of~$\nu$ on the leaves of~$T$; the
definition of~$\nu$ outside the intervals of the leaves does not matter.
We will change this point to define zero-suppressed semantics, 
where $\nu$ can only satisfy~$T$ if it maps to~$0$ all the other variables. We
then call $\nu$ a \emph{minimal valuation} of~$T$:

\begin{definition}
  \label{def:minval}
  Let $C$ be a monotone augmented circuit, $\nu$ be a valuation of~$C$, and $T$
  be a trace or partial trace of~$C$. We call $\nu$ a \emph{minimal
  valuation} of~$T$ if:
  \begin{itemize}
    \item For every variable $g$ in~$T$, we have $\nu(g) = 1$;
    \item For every ${\bog i}$-range gate $g$ in~$T$ with inputs $g_1$ and $g_2$
      in~$C$ (where ${\bowtie} \in \{{=}, {\geq}\}$ and $i \in \NN$), the number $n$ of variables in~$[g_1, g_2]$ that are set to~$1$
      by~$\nu$ satisfies the constraint $n \bowtie i$;
    \item All other variables of~$C_\var$ are set to~$0$ by~$\nu$.
  \end{itemize}
  Note that this implies that $\nu$ satisfies~$T$. We call $\nu$ a
  \emph{minimal valuation} for a gate $g$ of~$C$ (resp., for~$C$) if it is a
  minimal valuation of a partial trace rooted at~$g$ (resp., at the output~$g_0$).
\end{definition}

Note that $C$ may have two minimal valuations $\nu_1$ and~$\nu_2$ whose assignments
$S_1$ and $S_2$ are such that $S_1 \subsetneq S_2$ (see, e.g.,
Example~\ref{exa:circuit} below). Minimality only imposes that, relatively to a
trace $T$, the valuation sets to~$0$ all variables that are not tested in~$T$.
Minimal valuations allow us to define the \emph{zero-suppressed semantics} of a monotone augmented circuit $C$:
the satisfying valuations of~$C$ in this semantics are those that are minimal for some trace. 

\begin{definition}
  \label{def:zss}
  \label{def:zssddnnf}
  A monotone augmented circuit $C$ in \emph{zero-suppressed semantics}
  captures the (generally non-monotone) Boolean function $\Phi$ 
  mapping a valuation~$\nu$ to~$1$ iff $\nu$ is a minimal
  valuation for~$C$. We call $S(C)$ the set of satisfying assignments
  of~$C$ in this semantics.

  We call $C$ a \emph{d-DNNF in zero-suppressed semantics} if it satisfies
  the analogue of determinism: there is no OR-gate $g$ with two inputs $g_1 \neq
  g_2$ and valuation $\nu$ of~$C$ that is a minimal valuation for both~$g_1$
  and~$g_2$. (Decomposability again follows from the compatible order.)
\end{definition}

\begin{example}
\label{exa:circuit}
  Consider the monotone circuit $C$ whose output gate is an OR-gate with three
  inputs: $x$,
  $y$, and an AND-gate of~$y$ and~$z$. The circuit $C$
  captures $x \vee y$ in
  standard semantics, and it is not a d-DNNF. $C$ has three traces, having one
  minimal valuation each. In the zero-suppressed semantics, we have $S(C) = \{\{x\}, \{y\}, \{y, z\}\}$, and
  $C$ captures the Boolean function $(x
  \land \neg y \land \neg z) \lor (\neg x \land y)$. Further, $C$ is a d-DNNF in
  that semantics.
\end{example}

Zero-suppressed semantics makes enumeration easier, because it expresses
negation implicitly in a very concise way. The name is
inspired by zero-suppressed OBDDs~\cite[Chapter~8]{wegener2000branching}: variables
that are not tested when following a trace are implicitly set to~$0$. 
We can equivalently define the assignments $S(C)$ of~$C$ inductively as follows:

\begin{toappendix}
  We then show the auxiliary characterization of the set of assignments, which
  we will use heavily in the proofs:
\end{toappendix}

\begin{lemmarep}
  \label{lem:botup}
  Let $C$ be a monotone augmented circuit. Let us define inductively a set of assignments
  $S(g)$ for each gate~$g$ in the following way:
\begin{itemize}
\item for all $g \in C_\var$, we set $S(g) \colonequals \{g\}$;
\item for all ${\bog i}$-range gates $g$ with inputs $g_1$ and $g_2$, we
  set 
   $S(g) \colonequals \{t \subseteq [g_1, g_2] \mid \card{t} \bowtie i\}$;
\item for all OR-gates $g$ with inputs $g_1, \ldots, g_n$, we set $S(g) \colonequals
  \bigcup_{1 \leq i \leq n} S(g_i)$ (with $S(g) = \emptyset$ if $g$ has no
    inputs);
  \item for all AND-gates $g$ with inputs $g_1, \ldots, g_n$, we set
    $S(g) \colonequals \{S_1 \cup \cdots \cup S_n \mid (S_1, \ldots, S_n) \in
    \prod_{1 \leq i \leq n} S(g_i)\}$ (with $S(g) = \ssemp$ if $g$ has no
    inputs); observe that the unions are always disjoint because $C$ has a
    compatible order.
\end{itemize}
  Then, for any gate $g$, the set $S(g)$ contains exactly the assignments
  that describe a minimal valuation for~$g$. In particular, for $g_0$ the
  output gate of~$C$, the set $S(g_0)$ is exactly $S(C)$.
\end{lemmarep}

\begin{proof}
  We show the claim by induction.
  For the base cases:
  \begin{itemize}
    \item For a variable $g$, the only partial trace rooted at~$g$ is $\{g\}$, and
      indeed its only minimal valuation is~$\{g\}$.
    \item For a range gate $g$ with inputs $g_1$ and $g_2$, the only partial trace
      rooted at~$g$ is $\{g\}$, and its minimal valuations are as defined.
  \end{itemize}

  For the induction cases:
  \begin{itemize}
    \item For an OR-gate $g$, if $g$ has no inputs, then there is no partial trace
      rooted at $g$, so $S(g) = \emptyset$ is correct. If $g$ has inputs, then
      we can partition the partial traces rooted at $g$ depending on which input is
      retained. In particular, the set of leaves of the partial traces rooted at~$g$ are
      exactly the union of the set of leaves of the partial traces rooted at the inputs
      of~$g$. Hence, the assignments describing the minimal
      valuations of~$g$ are
      exactly the union of the corresponding assignments for the inputs of~$g$, so we conclude by induction.
    \item For an AND-gate $g$, if $g$ has no inputs, then the only partial trace
      rooted at $g$ is the partial trace $\{g\}$, whose one
      minimal assignment is $\semp$, which sets all variables to~$0$, and $S(g)
      = \ssemp$ is correct. Otherwise, the partial traces rooted at~$g$ are obtained by
      taking $g$ and taking one partial trace rooted at each input of~$g$.
      In
      particular, if there is an input $g'$ such that there is no partial trace rooted
      at~$g'$, then there is no partial trace rooted at~$g$: now as by induction we have
      $S(g') = \emptyset$, so we have indeed set $S(g) = \emptyset$ which is
      correct. Otherwise, remembering
      that an augmented circuit is decomposable (because it has a compatible
      order), as $g$ is an AND-gate, we
      know that the intervals for its inputs must be pairwise disjoint. In
      particular, the partial traces for each input of~$g$ must be on leaves having
      pairwise disjoint intervals. Hence, the minimal valuations of a partial trace
      of~$g$ must be minimal valuations of some partial trace rooted at~$g'$, and
      conversely any choice of minimal valuation for partial traces rooted at the inputs
      of~$g$ can be combined to a minimal valuation of a partial trace rooted at~$g$
      thanks to the fact that the intervals are disjoint. This allows us to
      conclude using the induction
      hypothesis.\qedhere
    \end{itemize}
\end{proof}

\begin{toappendix}
  Observe that Lemma~\ref{lem:botup} allows us to give an equivalent rephrasing
  of the determinism condition in zero-suppressed semantics: a d-DNNF in
  zero-suppressed semantics is a decomposable circuit where, for each OR-gate
  $g$, the captured set $S(g)$ is the \emph{disjoint} union of the captured sets of the
  inputs of~$g$.

We now show the main result of this section:
\end{toappendix}

We can now state our main reduction result for this section: we can rewrite
any d-DNNF to an equivalent d-DNNF in zero-suppressed semantics, by introducing
$\geg 0$-range gates to write explicitly that the variables not tested in a
trace are unconstrained:

\begin{propositionrep}
  \label{prp:reducing}
  Given a d-DNNF circuit $C$ and a compatible order~$<$, we can compute 
  in linear time a monotone $0$-augmented circuit $C^*$ having $<$ as a compatible
  order, such that $C^*$ is a d-DNNF in zero-suppressed semantics and
  such that $S(C^*)$ is exactly the set of satisfying assignments of~$C$.
\end{propositionrep}

\begin{toappendix}
  To prove the result, we will define \emph{complete} circuits, where,
  intuitively, all variables are tested so the semantics makes no difference:

\begin{definition}\label{def:complete}
  Given an augmented circuit or decomposable circuit $C$ with a compatible order~$<$,
  we call $\reach(g)$ the subset of~$C_\var$ of the gates $g'$ that have a
  directed path to $g$, or that are in the interval of a range gate that has a
  directed path to~$g$.
  An AND-gate $g$ of~$C$ is \emph{complete} if we have $\reach(g) = [\min(g),
  \max(g)]$. An OR-gate $g$ of~$C$ is \emph{complete} if, for any input $g'$
  to~$g$, we have $\min(g') = \min(g)$ and $\max(g') = \max(g)$.
  The circuit $C$ is \emph{complete} if every gate of~$C$ is complete, and if
  the interval of the output $g_0$ of~$C$ is the complete set of variables
  $C_\var$.
\end{definition}

  To prove Proposition~\ref{prp:reducing}, the first step is to rewrite the
  input d-DNNF to an equivalent complete augmented d-DNNF (in the standard
  semantics). This will be done using $\geg 0$-gates, which are important to
  make this possible in linear time, and always evaluate to~$1$ so can be added without
  changing the completed function. The second step is to rewrite
  the complete augmented circuit to a monotone augmented circuit in the
  zero-suppressed semantics, whose captured set is the set of satisfying
  assignments of the original circuit.

  Let us first take care of the first step:

\begin{lemmarep}\label{lem:completion}
  Given a d-DNNF circuit $C$ and a compatible order~$<$, we can compute 
  in linear time a complete $0$-augmented circuit $C'$ that has $<$ as a compatible
  order, computes the same function as~$C$, and is a d-DNNF (i.e., all its
  OR-gates are deterministic, with decomposability being implied by the
  compatibility of~$<$).
\end{lemmarep}

The completion procedure of Lemma~\ref{lem:completion} is a standard routine for
many forms of read once branching programs, see
e.g.~\cite[Lemma~6.2.2]{wegener2000branching}.
Note that this lemma is actually the only
place where we use the compatible order~$<$ directly; in all later results, it
  will only be used implicitly, to define the semantics of range gates (if any).

\begin{proof}[Proof of Lemma~\ref{lem:completion}]
 In a first step, it is convenient to make sure that every gate in $C$ has
  fan-in at most two.
  To this end, replace each gate of higher fan-in with a binary tree in the obvious way.
  Note that this can easily be done in linear time and in a way that preserves compatibility with~$<$. 
 
 In a second step we compute for every gate~$g$ in~$C$ the values $\min(g)$ and
  $\max(g)$. Again, this can be done in linear time in a straightforward bottom-up fashion.
  
 Note that the completeness requirement is trivial on gates $g$ with no inputs,
  and it is immediately satisfied on gates $g$ with exactly one input if we
  assume that their one input satisfies the requirement. Hence, it suffices to
  consider gates of fan-in exactly~$2$ in the following. Also note that one
  direction of the completeness requirement is immediate: for any AND- or
  OR-gate $g$, we have $\reach(g) \subseteq [\min(g), \max(g)]$, so only the
  converse implication needs to be proven. We will write $\prec$
  for the \emph{covering relation} of~$<$, i.e., we have $g \prec g'$ if $g < g'$ and
  there is no $g''$ such that $g < g'' < g'$. We will process the gates
  bottom-up to ensure the completeness requirement, assuming by induction that
  it is satisfied on all input gates. As the requirement is immediate for
  variables, NOT-gates and range gates, we first explain the construction for
  AND-gates and second for OR-gates.
 
  First, let $g$ be an AND-gate with inputs $g_1$ and $g_2$. Remember that
  the interval of~$g$ is $[\min(g), \max(g)]$ which is $[\min(g_1), \max(g_2)]$,
  and we know by compatibility of~$<$ that $\max(g_1) < \min(g_2)]$.
  If $\max(g_1)\prec \min(g_2)$, i.e., $\max(g_1)$ is the predecessor of
  $\min(g_2)$, then there is nothing to do for~$g$, as every gate in the
  interval of~$g$ is in that of~$g_1$ or in that of~$g_2$, in which case we
  conclude that it is in $\reach(g_1)$ or in $\reach(g_2)$ by induction
  hypothesis, and conclude.
  
  If $\max(g_1) \not\prec \min(g_2)$, let $g_1'$ and $g_2'$ be such that
  $\max(g_1) \prec g_1'$ and $g_2' \prec \min(g_2)$. Add a fresh child $g'$ to~$g$,
  between $g_1$ and $g_2$, which is a $\geg 0$-range gate with inputs $g_1'$ and
  $g_2'$. It is clear that this does not violate compatibility of~$<$
  with~$C$, because the interval of~$g'$ is $[g'_1, g'_2]$ and we have
  $\max(g_1) < g'_1$ and $g'_2 < \min(g_2)$. Further, it does not change
  the computed function, because $g'$ always evaluate to~$1$ which is neutral
  for AND. Last, it is now clear that $g$ satisfies the condition of
  Definition~\ref{def:complete}, because any gate $g''$ in the interval of~$g$ is now
  either a gate of the interval of~$g_1$, of the interval of~$g_2$, or of the
  interval of~$g'$: we conclude by induction as above in the first two cases,
  and in the third case we conclude because $g''$ is in the interval of the
  range gate $g'$ which has a directed path to~$g$.
  
  Second, let $g$ be an OR-gate with inputs $g_1$ and $g_2$. We replace $g_1$
  with an AND-gate $g'_1$ computing the AND of the following:
  \begin{itemize}
    \item if $\min(g) < \min(g_1)$, letting $g_1^-$ be such that $\min(g) \leq g_1^- \prec
      \min(g_1)$, a $\geg 0$-range gate $(g_1')^-$ whose inputs are $\min(g)$ and $g_1^-$;
    \item $g_1$ itself;
    \item if $\max(g_1) < \max(g)$, letting $g_1^+$ be such that $\max(g_1)
      \prec g_1^+ \leq \max(g)$, a $\geg 0$-range gate $(g_1')^+$ whose inputs are
      $g_1^+$ and $\max(g)$.
  \end{itemize}
  We do the analogous construction for $g_2$. It is clear that this does not
  violate the compatibility of~$<$ with~$C$, because the interval of the new
  AND-gates $g_1'$ and $g_2'$ are exactly the interval of~$g$ by construction,
  and the interval of~$g$ is unchanged.
  It is also clear that these
  transformations do not change the computed function because the $\geg 0$-range
  gates evaluate to 1; further, the transformations do not affect determinism at
  the OR-gate $g$ for the same reason. Last, it is now clear that $g_1'$, $g_2'$
  and $g$ satisfy the conditions of Definition~\ref{def:complete}. Indeed, the
  interval of these three gates is $[\min(g), \max(g)]$. Now, to show the
  condition on $g_1'$ any gate of this
  interval is either in the interval of $g_1$, of $(g_1')^-$, or of $(g_1')^+$,
  we conclude using the induction hypothesis in the first case and immediately
  in the two other cases. To show the condition of~$g$, we use the same proof.
  To show the condition on~$g_2'$, we use the analogous proof with $g_2$,
  $(g_2')^-$, and $(g_2')^+$.
 
 We perform the above constructions for all gates. Last, if the interval of the output gate~$g_0$ does not contain all variables
 of~$C_\var$, we replace it by an AND-gate of~$g_0$ and of ${\geg 0}$-range gates that
 capture the missing variables.
 The resulting
 circuit~$C'$ then computes the same function as~$C$ and is complete. 
 Moreover, $C'$ has a compatible order $<$ and all OR-gates are deterministic,
 so it is a d-DNNF.
 Finally, for every gate $g$ that we manipulated, we only performed a
 construction that can be done in constant time, so the overall time of the
 construction is linear in the size of $C$.
\end{proof}

  We now take care of the second step: given an augmented d-DNNF circuit
  $C'$ which is complete, rewrite it to a monotone augmented circuit $C^*$ which
  is a d-DNNF in zero-suppressed semantics and captures the satisfying
  assignments of~$C'$. We will do so simply by substituting all NOT-gates with
  gates that always evaluate to~$1$:

  \begin{definition}
    Given an (augmented) circuit $C$, its \emph{monotonization} is the monotone
    (augmented) circuit $C^*$ obtained by removing the input wire to each
    NOT-gate of~$C$ and changing the type of these gates to AND-gates (which
    have no children, so they always evaluate to~$1$).
  \end{definition}

  This transformation can clearly be performed in linear time. We now claim that
  it preserves some properties. First, we must make the following trivial
  observation:

  \begin{observation}
    \label{obs:monoorder}
    For any augmented circuit $C$ with compatible order~$<$, its monotonization
    $C^*$ still admits $<$ as a compatible order.
  \end{observation}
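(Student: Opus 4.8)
The plan is to exploit that monotonization is a purely ``destructive'' operation on the underlying graph: it deletes wires and retypes gates, but never adds a wire and never changes the set of gates. First I would record that $C^*$ has exactly the same gates as $C$, that its wire set is a subset of that of $C$ (we delete precisely the unique input wire of each NOT-gate, and add nothing), and that its AND-gates come in two flavours: the AND-gates already present in $C$, whose ordered input lists are unchanged, and the former NOT-gates, which are now AND-gates of fan-in zero. The total order $<$ on $C_\var$ is the same for $C$ and $C^*$.

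Next I would check the compatibility condition of Definition~\ref{def:compatible} for each AND-gate of $C^*$. For a former NOT-gate the condition is vacuous, since it has no inputs. For an AND-gate $g$ that was already an AND-gate of $C$, with ordered inputs $g_1, \ldots, g_n$ (the same list in $C$ and in $C^*$), the key observation is that every directed path in $C^*$ is also a directed path in $C$ (same vertices, fewer edges); hence, for each $i$, the set of variables of $C_\var$ having a directed path to $g_i$ in $C^*$ is contained in the corresponding set for $C$. Consequently $\min_C(g_i) \leq \min_{C^*}(g_i)$ and $\max_{C^*}(g_i) \leq \max_C(g_i)$ whenever these quantities are defined, i.e.\ the interval of $g_i$ in $C^*$ is contained in its interval in $C$; and if the set of variables reaching $g_i$ in $C^*$ is empty, the interval of $g_i$ in $C^*$ is empty and imposes no constraint at all. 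Combining this with the compatibility of $<$ with $C$, for all $1 \le i < j \le n$ we get $\max_{C^*}(g_i) \leq \max_C(g_i) < \min_C(g_j) \leq \min_{C^*}(g_j)$, so $\max_{C^*}(g_i) < \min_{C^*}(g_j)$, which is exactly what Definition~\ref{def:compatible} requires. Hence $<$ is compatible with $C^*$.

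The only point that needs a moment's care is the corner case in which a former NOT-gate, now a childless AND-gate, occurs as an input of a surviving AND-gate $g$: computed in $C^*$, such an input has an empty interval, and one has to agree that an empty interval is disjoint from every other interval and slots in harmlessly for the ordering condition. This is benign precisely because monotonization can only ever shrink the set of variables reachable from a gate, so it cannot introduce an ordering violation that was not already present in $C$. There is genuinely no real obstacle here, which is why the statement is recorded as an observation rather than a lemma; the whole argument is the two bookkeeping paragraphs above.
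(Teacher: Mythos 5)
Your proof is correct: the paper records this as a trivial observation with no written proof, and your argument (monotonization keeps the gate set, only deletes wires, so intervals can only shrink and former NOT-gates become fan-in-zero AND-gates for which the condition is vacuous) is exactly the intended justification, including the harmless empty-interval corner case.
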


  Second, the key observation is the following:

  \begin{lemma}
    \label{lem:monocap}
    For any complete augmented circuit $C'$ capturing Boolean function $\Phi$,
    its monotonization $C^*$ captures $\Phi$ under zero-suppressed
    semantics.
  \end{lemma}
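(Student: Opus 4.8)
To prove the lemma I would establish, by structural induction on the gates of~$C'$, a correspondence between evaluation in~$C'$ and the bottom-up sets of Lemma~\ref{lem:botup} computed \emph{in~$C^*$}. Write $I_g$ for the interval $[\min(g),\max(g)]$ of a gate~$g$ in~$C'$, and let $S(g)$ denote the assignment set of~$g$ in the monotone circuit~$C^*$ given by Lemma~\ref{lem:botup}; this is well-defined since monotonization keeps every variable, range, AND- and OR-gate (hence the inputs of every AND- and OR-gate) and only retypes each NOT-gate into a childless AND-gate. The claim is: \emph{for every gate~$g$ of~$C'$ and every valuation~$\nu$ of~$C_\var$, the gate~$g$ evaluates to~$1$ under~$\nu$ in the standard semantics of~$C'$ if and only if the assignment $\{x \in I_g \mid \nu(x) = 1\}$ lies in $S(g)$}. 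Granting this, I apply it to the output~$g_0$: completeness of~$C'$ gives $I_{g_0} = C_\var$, so ``$g_0$ evaluates to~$1$ under~$\nu$'', i.e.\ $\Phi(\nu)=1$, is equivalent to $S_\nu \in S(g_0) = S(C^*)$, which by Lemma~\ref{lem:botup} says exactly that $\nu$ is a minimal valuation for~$C^*$; this is precisely the assertion that $C^*$ captures~$\Phi$ in zero-suppressed semantics.

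The base cases are where monotonization has to be checked. For a variable~$g$, $I_g = \{g\}$ and $S(g) = \{\{g\}\}$, and both sides amount to $\nu(g)=1$. For a ${\bog i}$-range gate~$g$ with inputs $g_1,g_2$, the gate is untouched, $I_g = [g_1,g_2]$, $S(g) = \{t \subseteq [g_1,g_2] \mid \card{t} \bog i\}$, and both sides say that the number of ones of~$\nu$ on $[g_1,g_2]$ satisfies the range constraint. The decisive case is a NOT-gate~$g$ on a variable~$x$: here $I_g = \{x\}$, in~$C'$ the gate evaluates to~$1$ under~$\nu$ exactly when $\nu(x)=0$, while in~$C^*$ it has become a childless AND-gate with $S(g) = \ssemp$; thus $\{x' \in \{x\} \mid \nu(x') = 1\}$ lies in $S(g)$ iff it is empty iff $\nu(x)=0$, which matches. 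This is exactly the point of zero-suppressed semantics: a negated variable left untested along a trace is implicitly set to~$0$.

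For the inductive step, an OR-gate~$g$ evaluates to~$1$ under~$\nu$ iff one of its inputs~$g_i$ does; completeness of~$g$ forces $I_{g_i} = I_g$ for every~$i$, so by induction this is equivalent to $\{x \in I_g \mid \nu(x)=1\}$ belonging to some~$S(g_i)$, i.e.\ to $\bigcup_i S(g_i) = S(g)$. For an AND-gate~$g$ with inputs $g_1,\dots,g_n$, compatibility of~$<$ makes the intervals $I_{g_1},\dots,I_{g_n}$ pairwise disjoint, and completeness of~$g$ together with the fact that $\reach$ coincides with the interval on every gate of the complete circuit~$C'$ forces $I_g = \bigsqcup_i I_{g_i}$ with no gaps; hence $\{x \in I_g \mid \nu(x)=1\}$ is the disjoint union of the $\{x \in I_{g_i} \mid \nu(x)=1\}$. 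Matching this against $S(g) = \{S_1 \cup \cdots \cup S_n \mid S_i \in S(g_i)\}$ from Lemma~\ref{lem:botup}, and using the easy fact that every assignment of~$S(g_i)$ is a subset of~$I_{g_i}$ (so that in such a union each~$S_i$ is recovered as the intersection with~$I_{g_i}$), the induction hypothesis on the inputs gives the equivalence. I expect this AND-gate bookkeeping to be the only delicate point: one must verify carefully that completeness really forces $I_g$ to be the disjoint union of the input intervals without gaps, and that assignments produced by the bottom-up rule always live inside the expected interval, so that ``restricting~$\nu$ to the interval'' on the circuit side lines up with the disjoint unions on the $S(\cdot)$ side.
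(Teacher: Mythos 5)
Your proof is correct, but it takes a genuinely different route from the paper's. The paper proves Lemma~\ref{lem:monocap} by first extending the trace machinery to non-monotone circuits and then exhibiting a bijection $f$ between traces of~$C'$ and traces of~$C^*$ (replacing each NOT-gate leaf by the corresponding childless AND-gate); it then invokes Observation~\ref{obs:trace} together with the fact that every trace of a complete circuit tests all variables (Observation~\ref{obs:compltrace}), so that a satisfying valuation of a trace $T'$ of~$C'$ is automatically a \emph{minimal} valuation of~$f(T')$, and conversely. You instead bypass traces entirely on the $C'$ side and prove, by structural induction, the interval-restriction invariant ``$g$ evaluates to~$1$ under~$\nu$ in~$C'$ iff $\{x \in I_g \mid \nu(x)=1\} \in S(g)$ in~$C^*$'', using the bottom-up characterization of captured sets from Lemma~\ref{lem:botup}, and then specialize to~$g_0$ via $I_{g_0} = C_\var$. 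The two arguments use completeness differently: the paper uses it once, globally, through Observation~\ref{obs:compltrace} (which is itself a bottom-up induction), whereas you use the gate-level completeness conditions directly — equal intervals at OR-gates, $\reach(g) = I_g$ at AND-gates giving the gap-free disjoint decomposition $I_g = \bigsqcup_i I_{g_i}$ — together with the (correct and needed) observation that every assignment in~$S(g_i)$ lives inside~$I_{g_i}$, so the decomposition of a union into its factors is forced. Your version is arguably more self-contained, since it avoids redefining traces and minimal valuations for circuits with NOT-gates, at the price of the AND-gate bookkeeping you flag; the paper's version is shorter because, once traces are available for non-monotone circuits, the NOT-to-empty-AND replacement visibly preserves satisfying/minimal valuations and the whole argument reduces to one observation about complete circuits. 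Both proofs gloss over the same degenerate cases (AND/OR-gates with no inputs, where intervals are empty), and neither needs determinism, which is handled separately in Lemma~\ref{lem:monodnnf}.
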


  To prove this lemma, it will be useful to extend the definitions of upward
  trees and traces (Definition~\ref{def:trace}) to augmented circuits which are
  not necessarily monotone; the only difference is that the leaves of a trace
  can now also include NOT-gates.
  We define minimal valuations like in
  Definition~\ref{def:minval}, except that we enforce that negated variables
  must be set to~$0$; the variables tested by a trace also include the negated
  variables. We can now show the important property of complete circuits:

  \begin{observation}
    \label{obs:compltrace}
    Every trace $T$ of a complete circuit $C$ tests all variables of~$C_\var$.
  \end{observation}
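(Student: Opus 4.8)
The plan is to prove a stronger local claim by structural induction on the trace $T$: for every gate $g$ occurring in $T$, the sub-trace $T_g$ of $T$ rooted at~$g$ tests exactly the variables of the interval $[\min(g), \max(g)]$. The observation then follows by applying this to the output gate $g_0$, since $T_{g_0} = T$ and completeness of $C$ guarantees that the interval of $g_0$ is all of $C_\var$. I would also note that one inclusion is easy and holds even without completeness, since any variable occurring in $T_g$, or in the interval of a range gate of $T_g$, has a directed path to~$g$ and hence lies in $\reach(g) \subseteq [\min(g), \max(g)]$; the real content is the reverse inclusion.

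For the base cases I would treat the possible leaves of a trace. If $g$ is a variable, then $T_g = \{g\}$ tests $\{g\} = [\min(g), \max(g)]$. If $g$ is a NOT-gate with input variable $x$ (recall that in the non-monotone setting considered here, negated variables count as tested), then $T_g$ tests $\{x\}$, and $x$ is the only variable with a directed path to~$g$, so the interval of~$g$ is $\{x\}$. If $g$ is a range gate with inputs $g_1, g_2$, then $T_g$ tests all variables of $[g_1, g_2]$ by definition, and since $g_1, g_2$ are the only variables reaching~$g$ we get $\min(g) = \min(g_1, g_2)$ and $\max(g) = \max(g_1, g_2)$, so the interval is again $[g_1, g_2]$. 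The degenerate leaf, an AND-gate with no inputs, tests no variable and has empty interval.

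For the induction, if $g$ is an OR-gate then, by the definition of a trace, exactly one input $g'$ of~$g$ is a child of~$g$ in~$T$, so the variables tested by $T_g$ are exactly those tested by $T_{g'}$, which equals the interval of~$g'$ by the induction hypothesis; completeness of the OR-gate~$g$ gives $\min(g') = \min(g)$ and $\max(g') = \max(g)$, so this is the interval of~$g$. If $g$ is an AND-gate with inputs $g_1, \ldots, g_n$, then all of them are children of~$g$ in~$T$, so the set of variables tested by $T_g$ is the union over~$i$ of the intervals $[\min(g_i), \max(g_i)]$ by the induction hypothesis. Since each $g_i$ has a directed path to~$g$, this union is contained in $[\min(g), \max(g)]$; conversely $\reach(g) = \bigcup_{i} \reach(g_i) \subseteq \bigcup_{i} [\min(g_i), \max(g_i)]$, and $\reach(g) = [\min(g), \max(g)]$ by completeness of the AND-gate~$g$, so equality holds.

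The main obstacle, such as it is, is just the bookkeeping with range gates: one must check that a range gate contributes its whole interval both to the variables tested by $T_g$ and to $\reach(g)$, and that the identity $\reach(g) = \bigcup_{i} \reach(g_i)$ at an AND-gate accounts uniformly for variables, for range-gate intervals, and for inputs that are themselves variables. The conceptual point is simply that completeness pushes the identity $\reach(g) = [\min(g), \max(g)]$ through AND-gates and forces every input of an OR-gate to span the OR-gate's full interval, so descending into a trace never drops a variable of the root's interval.
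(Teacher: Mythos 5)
Your proof is correct and follows essentially the same route as the paper's: a bottom-up induction showing that the sub-trace rooted at each gate $g$ tests the variables of the interval of~$g$ (using completeness at OR-gates to equate the input's interval with the gate's, and at AND-gates via $\reach(g) = [\min(g), \max(g)]$), then concluding from the fact that the interval of the output gate is all of~$C_\var$. You merely prove the slightly stronger "exactly" statement and spell out the range-gate bookkeeping that the paper leaves implicit.
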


  \begin{proof}
    We simply prove by bottom-up induction that any trace rooted at a gate $g$
    of~$C$ tests all variables of the interval of~$g$. This is true of variable
    gates, NOT-gates, and range gates; it is true of OR-gates because it is true
    of all their inputs and they have the same interval as their inputs; it is
    true of AND-gates because the sub-traces on all inputs satisfy the property.
    We conclude thanks to the fact that the interval of the output gate consists
    of all variables of~$C_\var$.
  \end{proof}

  We can now show:

  \begin{proof}[Proof of Lemma~\ref{lem:monocap}]
    We observe the existence of a
    bijection between the traces of~$C^*$ and of~$C'$. Specifically,
    consider the mapping
    $f$ from the traces of~$C'$ to traces of~$C^*$ obtained by replacing each
    leaf which is a NOT-gate of~$C'$ by the corresponding AND-gate with no
    inputs in~$C^*$. It is clear that any upward tree in
    the image of this transformation is a trace of~$C^*$. Conversely, we can map
    the traces of~$C^*$ to traces of~$C'$ by replacing the fresh AND-gates with
    no inputs by the
    corresponding NOT-gate, and again the upward trees in the image of this
    transformation are indeed traces of~$C'$. As this defines an inverse function
    for~$f$, it is clear that~$f$ is a bijection. Further, as $C'$ is complete,
    it is clear that the variables which are not tested by~$f(T')$ are precisely
    the variables whose negation is a leaf of~$'T$

    For the forward direction, consider a satisfying valuation $\nu'$ of~$C'$. By
    Observation~\ref{obs:trace}, there is a trace $T'$ of~$C'$ of which $\nu'$ is a
    satisfying valuation. As $C'$ is complete, by
    Observation~\ref{obs:compltrace}, $T'$ tests all variables of~$C'_\var$.
    Hence, the satisfying valuation $\nu'$ of~$T'$ is actually a minimal
    satisfying valuation of~$T$ (there are no variables implicitly set to~$0$).
    Now, clearly $\nu'$ is
    also a satisfying valuation of~$f(T')$. The converse is shown in the same
    way by considering a satisfying valuation $\nu^*$ of~$C^*$, a witnessing trace $T^*$
    of~$C^*$ of which it is a minimal valuation, and observing that $\nu^*$ is a
    satisfying valuation of the trace $f^{-1}(T^*)$ of~$C'$.
  \end{proof}

  The last claim to show is the preservation of d-DNNF through 
  monotonization:

  \begin{lemma}
    \label{lem:monodnnf}
    For any complete augmented circuit $C'$ which is a d-DNNF in the standard semantics,
    its monotonization $C^*$ is a d-DNNF in zero-suppressed semantics (in the
    sense of Definition~\ref{def:zssddnnf}).
  \end{lemma}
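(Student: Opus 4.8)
The plan is to establish the two requirements of a d-DNNF in zero-suppressed semantics (Definition~\ref{def:zssddnnf}) for $C^*$ one after the other. Decomposability comes for free: by Observation~\ref{obs:monoorder} the compatible order $<$ of $C'$ is still compatible with $C^*$, and a compatible order forces every AND-gate to be decomposable. So the entire argument is about the zero-suppressed analogue of determinism, i.e.\ showing that no OR-gate $g$ of $C^*$ has two distinct inputs $g_1\neq g_2$ with a common minimal valuation $\nu$. I will argue by contradiction: from such a $\nu$ I produce a valuation of $C'$ under which both $g_1$ and $g_2$ evaluate to $1$, which contradicts the ordinary determinism of the OR-gate $g$ in the d-DNNF $C'$.

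The heart of the proof is the following transfer claim, which I would state and prove by induction on traces (extended to non-monotone circuits as in the proof of Lemma~\ref{lem:monocap}): for every gate $h$ of $C^*$, if $\nu$ is a minimal valuation for $h$ in $C^*$, then $\nu$, read as a valuation of $C'$ (the two circuits have the same variables), makes $h$ evaluate to $1$ in $C'$ under the standard semantics. To prove it, take a partial trace $T$ of $C^*$ rooted at $h$ witnessing minimality of $\nu$ (Definition~\ref{def:minval}), and let $T'$ be the partial trace of $C'$ rooted at $h$ obtained through the inverse of the bijection $f$ from the proof of Lemma~\ref{lem:monocap}, which restricts verbatim to partial traces rooted at any fixed gate and merely swaps each input-free AND-gate of $C^*$ coming from a NOT-gate for that NOT-gate of $C'$. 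I then verify that $\nu$ satisfies $T'$ in the standard semantics by inspecting each leaf: positive-variable leaves and range-gate leaves of $T'$ are the same as those of $T$, so the constraints they impose are exactly those guaranteed by minimality of $\nu$ on $T$; the only substantive case is a NOT-gate leaf of $T'$ on a variable $v$, where I must show $\nu(v)=0$. Here I use decomposability of $C'$: the interval of that NOT-gate leaf is $\{v\}$, and since along $T'$ the intervals of the inputs of each AND-gate are pairwise disjoint (Definition~\ref{def:compatible}), $v$ can belong to the interval of at most one leaf of $T'$; hence $v$ is neither a positive-variable leaf of $T'$ nor inside the interval of a range-gate leaf of $T'$, and the same therefore holds in $T$, so minimality of $\nu$ forces $\nu(v)=0$. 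Once all leaves of $T'$ are satisfied, satisfaction propagates upward (an AND-gate of $T'$ keeps all its inputs, an OR-gate keeps one satisfied input), so $h$ evaluates to $1$.

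Given the transfer claim, the proof concludes immediately: if $\nu$ were a common minimal valuation of two distinct inputs $g_1\neq g_2$ of an OR-gate $g$ of $C^*$, applying the claim to $g_1$ and to $g_2$ shows that $\nu$ makes both of them evaluate to $1$ in $C'$; but $g_1$ and $g_2$ are still two distinct inputs of the OR-gate $g$ of $C'$, contradicting that $C'$ is a d-DNNF. Hence no such $\nu$ exists, $C^*$ satisfies the zero-suppressed determinism condition of Definition~\ref{def:zssddnnf}, and, together with the decomposability noted above, $C^*$ is a d-DNNF in zero-suppressed semantics.

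I expect the only delicate point to be the NOT-gate-leaf case of the transfer claim, i.e.\ checking that the variables negated by $T'$ in $C'$ are precisely variables that the zero-suppressed minimal valuation of the monotonised trace $T$ sets to $0$ by default, and in particular that such a variable cannot also be forced to $1$ somewhere else in $T$; this is exactly where pairwise disjointness of the intervals of an AND-gate's inputs (decomposability of $C'$) is needed. Everything else is a routine bottom-up induction. Incidentally, completeness of $C'$, although assumed in the statement for uniformity with Lemma~\ref{lem:monocap}, is not actually used in this argument --- it only matters for the function-preservation statement of Lemma~\ref{lem:monocap}.
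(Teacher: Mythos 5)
Your proof is correct and takes essentially the same route as the paper: the paper's proof is precisely this reduction, observing (via the gate-level trace correspondence from the proof of Lemma~\ref{lem:monocap}) that every minimal valuation of an input of an OR-gate in $C^*$ is a satisfying valuation of that same input in $C'$ under standard semantics, so any violation of zero-suppressed determinism in $C^*$ witnesses a violation of standard determinism in $C'$. The only divergence is in how the NOT-gate-leaf case of the transfer is justified --- the paper leans on completeness (Observation~\ref{obs:compltrace} inside Lemma~\ref{lem:monocap}), whereas you derive it from pairwise disjointness of the leaf intervals given by the compatible order, and your remark that completeness is therefore dispensable for this particular lemma is accurate (it remains necessary for the function-preservation statement of Lemma~\ref{lem:monocap}).
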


  \begin{proof}
    As in the proof of Lemma~\ref{lem:monocap}, we know that the captured set of each gate $g$ of~$C^*$ describes the satisfying assignments of~$g$ in~$C'$ in the standard semantics. Hence, any violation of determinism in zero-suppressed semantics on~$C^*$ witnesses a violation of determinism in the standard semantics in~$C'$.
  \end{proof}

  We are now ready to prove our main result for this section:

  \begin{proof}[Proof of Proposition~\ref{prp:reducing}]
  We first apply to~$C$ the construction of Lemma~\ref{lem:completion}
  to get in linear time a complete 0-augmented circuit $C'$ which is a d-DNNF in
    the standard sense,
    has compatible order~$<$, and computes the same function
    $\Phi$ as $C$. Now, we construct in linear time the monotonization $C^*$
    of~$C'$, which is a monotone 0-augmented circuit. 
    By Observation~\ref{obs:monoorder}, $C^*$ still admits~$<$ as
    compatible order. By Lemma~\ref{lem:monodnnf}, $C^*$ is a d-DNNF in the
    zero-suppressed semantics. By Lemma~\ref{lem:monocap}, $C^*$ captures $\Phi$
    under zero-suppressed semantics, which concludes the proof.
\end{proof}

We conclude the section by a remark on zero-suppressed semantics: this
semantics, as well as the relevant definitions, can be defined not only for
augmented circuits, but more generally for decomposable (non-augmented)
circuits:

\begin{remark}
  \label{rmk:zerosuppressed}
  The definition of upwards trees and traces (Definition~\ref{def:trace})
  extend to monotone (non-augmented) circuits which are decomposable, even if
  they do not have a compatible order.
  Observation~\ref{obs:trace} extends to them, and the definition of minimal
  valuations (Definition~\ref{def:minval}) also does.
  Further, the definition of zero-suppressed
  semantics extends (Definition~\ref{def:zss}), as does the definition of
  captured sets in Definition~\ref{def:capset}.
  Last, the alternative characterization of the captured sets in
  Lemma~\ref{lem:botup} also extends.
\end{remark}

Thanks to this remark, we will be able to talk about zero-suppressed
semantics and captured sets in decomposable monotone circuits with no range
gates, even if they do not have a compatible order. This will be especially
useful in Section~\ref{sec:applications}, where we directly produce decomposable
circuits in zero-suppressed semantics.
\end{toappendix}

\section{Reducing to Normal Form Circuits}
\label{sec:normal}
In this section, given Proposition~\ref{prp:reducing}, we work on a
monotone $0$-augmented d-DNNF circuit~$C$ in zero-suppressed semantics, with a compatible order~$<$ to define
the semantics of range gates.
We present our next two preprocessing steps for the enumeration of
the assignments $S(C)$ of~$C$:
restricting our attention to valuations of the right Hamming weight (for
Theorem~\ref{thm:constant} only), and bringing $C$ to a normal form that makes enumeration easier.

\begin{toappendix}
  Throughout this appendix, two monotone circuits $C$ and $C'$ are called
  \emph{equivalent} (in zero-suppressed semantics) if $S(C) = S(C')$.
\end{toappendix}

\subparagraph*{Homogenization.}

\begin{toappendix}
\subsection{Reduction to Arity-Two}
 We show that following easy general-purpose result which will be useful in several proofs.

\begin{lemma}
  \label{lem:arity2}
  For any (augmented) monotone circuit $C$,
  we can compute in linear time an equivalent arity-two (augmented) monotone circuit $C'$.
  Further, if $C$ is a d-DNNF in zero-suppressed semantics (resp., if it is
  $\emptyset$-pruned, if it is $\semp$-pruned, if it has some order $<$ as a
  compatible order, if it is monotone), then the same is true of~$C'$.
\end{lemma}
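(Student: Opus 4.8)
The plan is to replace every AND-gate (resp.\ OR-gate)~$g$ of~$C$ with fan-in $n \geq 3$ and inputs $g_1, \dots, g_n$ (in this order) by a binary tree of AND-gates (resp.\ OR-gates) whose leaves, read from left to right, are exactly $g_1, \dots, g_n$ and whose root is~$g$ itself, leaving range gates (already of fan-in two), variable gates, and AND/OR-gates of fan-in at most two untouched. This introduces $n-2$ fresh gates and $O(n)$ wires for a gate of fan-in~$n$, so, doing it for all gates at once, the construction runs in linear time and $\card{C'} = O(\card{C})$. That $C'$ is monotone, has the same variables as~$C$, and (if $C$ is augmented) carries range gates of the same kinds is immediate.

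For equivalence in zero-suppressed semantics I would exhibit a bijection~$\beta$ from the traces of~$C$ (Definition~\ref{def:trace}) to those of~$C'$: given a trace $\tau$ of~$C$, around each AND-gate~$g$ occurring in~$\tau$ replace the ``star'' $g_1 \to g, \dots, g_n \to g$ by the internal wiring of the replacement tree of~$g$ --- this is a legal partial trace since every internal gate of that tree is an AND-gate and so must keep all (i.e., both) of its inputs as children --- and around each OR-gate~$g$ occurring in~$\tau$ replace the single edge $g_i \to g$ retained by~$\tau$ with the unique root-to-$g_i$ path in the replacement tree of~$g$, which is legal since every internal gate on that path is an OR-gate keeping exactly one child. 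The subtrees hanging below $g_1, \dots, g_n$ are left in place and treated recursively, so $\beta$ is well defined, and contracting each replacement tree back to a star gives an inverse, so $\beta$ is a bijection. The key point is that $\beta$ does not change the set of leaves of a trace, and the leaves of any trace are variable gates, range gates, or AND-gates with no inputs --- none of which the transformation modifies --- so, since whether a valuation is a minimal valuation of a trace depends only on the trace's leaves and their types (Definition~\ref{def:minval}), $C$ and $C'$ will have exactly the same minimal valuations, i.e.\ $S(C) = S(C')$. (When $C$ is augmented or merely decomposable, one can instead invoke Lemma~\ref{lem:botup} and the associativity of $\bigcup$ at OR-gates and of the disjoint-union product at AND-gates, which shows directly that binarizing a gate leaves its captured set unchanged.)

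It then remains to transfer the structural properties, and for this it suffices to inspect the freshly added gates, since the root of the tree replacing an input~$g_i$ has the same $\min$ and $\max$ as~$g_i$ (it reaches the same gates) and, by the equivalence, the same captured set, so the untouched gates keep all their properties. If $<$ is a compatible order of~$C$, I would keep it for~$C'$: a new AND-gate inside the tree of an original AND-gate~$g$ has a left child whose leaves are a contiguous block $g_a, \dots, g_b$ of the inputs of~$g$ and a right child with leaves $g_{b+1}, \dots, g_c$, so the $\max$ of its left child equals $\max(g_b)$, the $\min$ of its right child equals $\min(g_{b+1})$, and $\max(g_b) < \min(g_{b+1})$ holds by compatibility of~$<$ with~$C$; new OR-gates impose no constraint. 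If $C$ is a d-DNNF in zero-suppressed semantics, decomposability of~$C'$ follows from the compatible order just built, the original OR-gates stay deterministic (their inputs' captured sets are unchanged), and a new OR-gate inside the tree of an original OR-gate~$g$ has two inputs with captured sets $\bigcup_{i \in I} S(g_i)$ and $\bigcup_{j \in J} S(g_j)$ where $I$ and $J$ partition a contiguous block of $\{1, \dots, n\}$, and these are disjoint because $S(g) = \biguplus_{i=1}^{n} S(g_i)$ is a disjoint union by determinism of~$g$ in~$C$. The $\emptyset$-pruned and $\semp$-pruned conditions transfer by the same bookkeeping, as every new gate's captured set is a union (for OR) or a disjoint-union product (for AND) of captured sets of gates of~$C$, hence nonempty whenever those are and containing $\semp$ only when forced by them. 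The lemma is routine; the only point needing real care --- the main obstacle, such as it is --- is to list the leaves of each replacement tree in the original input order so that compatibility of~$<$ survives, and then to check determinism at the newly introduced OR-gates, both handled above.
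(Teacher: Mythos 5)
Your proposal is correct and follows essentially the same route as the paper's proof: the standard binarization of high fan-in AND/OR-gates into trees of gates of the same type (keeping the original input order), with equivalence justified by associativity of $\land$/$\lor$ (equivalently, Lemma~\ref{lem:botup}), and each of the listed properties checked to survive because any violation at a fresh gate would witness one at the original gate. Your write-up is simply a more detailed spelling-out of the paper's brief argument, so nothing further is needed.
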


\begin{proof}
  We use the standard construction of adding intermediate AND-
  and OR-gates, leveraging the associativity of the Boolean $\land$ and $\lor$
  operations. It is clear
that
  none of the existing or additional gates is unsatisfiable or 0-valid if
  none of the original gates are, so the process preserves being
  $\emptyset$-pruned and $\semp$-pruned. The process
  adds no NOT-gates so it clearly preserves monotonicity. Any violation
of the d-DNNF condition on the rewritten circuit witnesses a violation on the
original circuit. Last, it is clear that any compatible order is still
  compatible with the result of the rewriting, as any violation of compatibility would
  witness a violation in the original circuit.
\end{proof}
\end{toappendix}

\begin{toappendix}
\subsection{Homogenization}
  \label{apx:homogenization}
\end{toappendix}
Our input augmented circuit $C$ in zero-suppressed semantics may have satisfying
assignments of arbitrary Hamming weight. When proving Theorem~\ref{thm:main}, this is
intended, and the construction that we are about to describe is not necessary. However, 
when proving
Theorem~\ref{thm:constant} about enumerating valuations of constant weight, we
need to restrict our attention to such valuations, to ensure constant delay. We
do so using the following homogenization result,
adapted from the technique of Strassen~\cite{Strassen73}:

\begin{proposition}
  \label{prp:homogenize}
  Given $k\in\NN$ and a monotone augmented d-DNNF circuit $C$ in zero-suppressed semantics
  with compatible order $<$, we can construct in
  time $O(k^2 \cdot \card{C})$ a monotone augmented d-DNNF circuit $C'$
  in zero-suppressed semantics 
  with compatible order $<$ such that
  $S(C') = \{t \in S(C) \mid \card{t} \leq k\}$.
\end{proposition}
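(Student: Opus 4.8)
The plan is to perform a standard "layered copies" construction à la Strassen, tracking Hamming weight as an extra coordinate on the gates. First I would invoke Lemma~\ref{lem:arity2} to assume without loss of generality that $C$ is arity-two, since this is preserved together with the d-DNNF property, monotonicity, and compatibility of~$<$; this keeps the case analysis on AND-gates down to two inputs. Then, for each gate $g$ of~$C$ and each budget $j\in\{0,\dots,k\}$, I would create a copy $g^{(j)}$ whose intended captured set is $\{t\in S(g)\mid \card{t}=j\}$ (the "exactly $j$" stratification), so that $S(g)=\bigsqcup_{j=0}^{k'} \{t\in S(g): \card{t}=j\}$ with $k'$ possibly larger than $k$, but we only keep the layers $j\le k$. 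The wiring is the natural one: a variable gate $g$ has $S(g)=\{g\}$, so $g^{(1)}$ is a copy of $g$ and $g^{(j)}$ for $j\ne 1$ is an unsatisfiable gate (an OR-gate with no inputs); an OR-gate $g$ with inputs $g_1,g_2$ gets $g^{(j)}\colonequals g_1^{(j)}\lor g_2^{(j)}$; an AND-gate $g$ with inputs $g_1,g_2$ gets $g^{(j)}\colonequals \bigvee_{a+b=j,\ 0\le a,b\le k} \bigl(g_1^{(a)}\land g_2^{(b)}\bigr)$; and a ${\bog i}$-range gate $g$ with inputs $g_1,g_2$ is handled directly, since $\{t\subseteq[g_1,g_2]: \card t\bowtie i,\ \card t=j\}$ is itself expressible — for a ${\eqg i}$-range gate the layer $j=i$ is the gate itself and all other layers are unsatisfiable, and for a ${\geg k}$-range gate every kept layer $j\le k$ is empty except... one must be slightly careful: a ${\geg k}$-range gate has no satisfying assignment of weight $<k$, so all layers $j<k$ are unsatisfiable, and we simply do not need a layer $j=k$ that is "$\ge k$" — but since we only keep assignments of total weight $\le k$ at the output, a surviving trace through such a gate forces exactly weight $k$ there and weight $0$ everywhere else, so we may replace the ${\geg k}$-gate in layer $k$ by a ${\eqg k}$-range gate on the same inputs. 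Finally, the new output is $g_0'\colonequals \bigvee_{j=0}^{k} g_0^{(j)}$.

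The correctness argument proceeds by a bottom-up induction showing that for every gate $g$ and every $j\le k$ we have $S(g^{(j)})=\{t\in S(g):\card t=j\}$, using the inductive characterization of captured sets from Lemma~\ref{lem:botup}; the AND-case uses that the union $S_1\cup S_2$ is disjoint (compatibility of~$<$), so $\card{S_1\cup S_2}=\card{S_1}+\card{S_2}$, matching the convolution $a+b=j$. For the output, $S(g_0')=\bigcup_{j\le k} S(g_0^{(j)})=\{t\in S(C):\card t\le k\}$ as desired. One then checks that $C'$ is still a monotone augmented d-DNNF with compatible order~$<$: it is monotone and augmented by construction; it has $<$ as a compatible order because each new AND-gate $g_1^{(a)}\land g_2^{(b)}$ has inputs that are copies of $g_1,g_2$, whose intervals are (subsets of, hence) disjoint just as in~$C$, and likewise the new OR-trees only bundle gates of the same underlying interval; and determinism is inherited — the OR-gate $\bigvee_{a+b=j}(g_1^{(a)}\land g_2^{(b)})$ is deterministic because two distinct summands $(a,b)\ne(a',b')$ with $a+b=a'+b'=j$ differ in the first coordinate, so their captured sets (weight-$a$ vs weight-$a'$ subsets "on the left") are disjoint; the OR-gates $g_1^{(j)}\lor g_2^{(j)}$ inherit determinism from~$C$ via the induction (two minimal valuations for $g_1^{(j)},g_2^{(j)}$ would restrict to minimal valuations for $g_1,g_2$, contradicting determinism at~$g$); and the final output OR is deterministic because its inputs have pairwise disjoint captured sets (distinct weights). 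The size bound is immediate: we create $O(k)$ copies of each gate and wire, and each AND-gate spawns an OR-tree of size $O(k)$ over $O(k)$ products, for $O(k^2)$ new gates per original AND-gate, hence $O(k^2\cdot\card C)$ total, and the construction is plainly computable in that time.

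The main obstacle I expect is not the construction itself but the bookkeeping for range gates and, more subtly, making sure the convolution at AND-gates stays within the allotted budget: naively one would take $\bigvee_{a+b=j} g_1^{(a)}\land g_2^{(b)}$ ranging over all $a,b\ge 0$ with $a+b=j\le k$, which is fine, but one must resist creating layers $j>k$ and then be careful that cutting them off does not lose any assignment of weight $\le k$ — it does not, precisely because disjointness of the union makes weight additive, so a weight-$\le k$ assignment at~$g$ decomposes into a weight-$a$ and a weight-$b$ part with $a+b\le k$. A secondary nuisance is ${\geg k}$-range gates: they must be re-examined rather than blindly stratified, and the cleanest fix is the one above (replace by ${\eqg k}$ in the only layer that can survive), which should be spelled out but is routine. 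Everything else — monotonicity, augmentedness, compatibility, determinism, and the linear-in-$k^2\card C$ running time — follows by the same one-line arguments used throughout Section~\ref{sec:reducing}.
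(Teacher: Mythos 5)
Your overall construction is the paper's: after the arity-two reduction of Lemma~\ref{lem:arity2}, build weight-stratified copies $g^{(j)}$ \`a la Strassen, wire OR-layers componentwise and AND-layers by convolution, argue determinism via weight-disjointness (for the convolution OR and the final output OR) and via inheritance from~$C$ (for the componentwise ORs), and conclude with the $O(k^2\cdot\card{C})$ count. The one step that would fail is your treatment of the $\geq$-type range gates. You implicitly assume that every such gate has threshold exactly $k$, the Hamming-weight bound of the proposition: your justification (``a surviving trace through such a gate forces exactly weight $k$ there and weight $0$ everywhere else'') and your ``cleanest fix'' (keep only the layer at the threshold, as an ${\eqg k}$-range gate, and make all other layers unsatisfiable) are only valid under that coincidence. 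But the augmentation parameter of~$C$ is independent of~$k$: the statement only says ``augmented,'' and in the intended pipeline the input to this proposition is the $0$-augmented circuit produced by Proposition~\ref{prp:reducing}, whose range gates are all ${\geg 0}$-gates. A ${\geg 0}$-gate (more generally a ${\geg l}$-gate with $l<k$) has minimal valuations of every weight from~$l$ up to the interval size, so several layers $j\in[l,k]$ are nonempty; applying your rule there would discard every assignment in which that interval contributes weight different from~$l$, so $S(C')$ would be a strict subset of $\{t\in S(C)\mid\card{t}\leq k\}$.

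The repair is small and is exactly what the paper does: for a ${\geg l}$-range gate, set the layer-$j$ copy, for each $j\in[l,k]$, to be an ${\eqg j}$-range gate on the same two inputs (such a copy is automatically unsatisfiable when $j$ exceeds the interval size), and leave layers $j<l$ unsatisfiable. Determinism of the componentwise OR-layers is unharmed, since distinct weights within the same interval give disjoint captured sets. With that case corrected, the rest of your argument (correctness by induction via Lemma~\ref{lem:botup}, compatibility of~$<$, determinism, and the size bound) goes through as written. One further difference, not a flaw: the paper actually proves a stronger statement that also maintains a ``weight $>k$'' copy of every gate (a full $k$-homogenization, reused later with $k=0$ in the pruning step), whereas you drop all layers above~$k$; for this proposition alone that is legitimate, for the additivity reason you give.
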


\begin{proofsketch}
  We create $k+2$ copies of each gate $g$, with each copy capturing the
  assignments of a specific weight from~$0$ to~$k$ inclusive (or, for the $k+2$-th copy, the
  assignments with weight $>k$). In particular, for $\geg 0$-gates $g$, for $0 \leq
  i \leq k$, we use an $\eqg i$-gate for the copy of~$g$ capturing weight $i$. We then
  re-wire the circuit so that weights are correctly preserved.
\end{proofsketch}

Note that this
is the only place where our preprocessing depends on~$k$:
in particular, for constant $k$, the construction is linear-time. This result allows us
to assume in the sequel that the set of assignments of the circuit in 
zero-suppressed semantics contains precisely the valuations that
we are interested in, i.e., those that have suitable Hamming weight.

\begin{toappendix}

  The proof will use the following definition:
\begin{definition}
  \label{def:homogenization}
  For $k \in \NN$,
  a monotone augmented circuit $C$ is \emph{$k$-homogenized} if its gate set $G$
  is partitioned as $G = G^{=0} \cup G^{=1} \cup \cdots \cup G^{=k} \cup
  G^{>k}$, all these unions being disjoint,
  such that for $i\in [0,k]$ the $G^{=i}$ and
$G^{>k}$ satisfy the following properties:
\begin{itemize}
\item For every $g \in G^{=i}$, for every $t \in S(g)$, we have $\card{t} = i$, and 
\item For every $g \in G^{>k}$, for every $t \in S(g)$, we have $\card{t} > k$.
\end{itemize}
  Note that $S(g) = \emptyset$ is allowed in both cases. Note that in particular
  variable gates are all in $G^{=1}$ if $k \geq 1$, and they are all in $G^{>0}$
  if $k = 0$.
  
  $C$ is a called a \emph{$k$-homogenization} of an augmented circuit $C'$, if for every gate $g$ of $C'$
  \begin{itemize}
   \item For every $i\in [0,k]$, $C$ contains a gate $g^{=i}$ such that $S(g^{=i}) = \{t\in S(C')\mid \card{t} = i\}$, and 
   \item $C$ contains a gate $g^{>k}$ such that $S(g^{>k}) = \{t\in S(C')\mid \card{t} >k\}$.
  \end{itemize}
\end{definition}

  We will now prove the following strengthening of Proposition~\ref{prp:homogenize}:

\begin{proposition}
  \label{prp:homogenize2}
  For every $l \in \NN$, given a monotone $l$-augmented $C$ with compatible
  order~$<$ and $k \in \NN$, we can construct in
  time $O((k+1)^2 \cdot \card{C})$ a monotone $\max(l, k+1)$-augmented circuit
  $C'$ with compatible order~$<$ 
  such that $C'$ is a $k$-homogenization of $C$. Further, if $C$ is a d-DNNF in
  zero-suppressed semantics then $C'$ also is.
\end{proposition}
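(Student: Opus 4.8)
The plan is to carry out a Strassen-style homogenization. First I would apply Lemma~\ref{lem:arity2} to turn $C$, in linear time, into an equivalent arity-two monotone circuit; this preserves being $l$-augmented (range gates are untouched), having $<$ as a compatible order, and being a d-DNNF in zero-suppressed semantics, so I may assume $C$ is already arity-two. I would then build $C'$ by keeping the variable gates of $C$ and replacing every other gate $g$ by $k+2$ fresh gates $g^{=0}, \ldots, g^{=k}, g^{>k}$, maintaining as invariant --- to be proved by structural induction over $C$ using Lemma~\ref{lem:botup} --- that $S(g^{=i}) = \{t \in S(g) \mid \card{t} = i\}$ for $0 \le i \le k$ and $S(g^{>k}) = \{t \in S(g) \mid \card{t} > k\}$. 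For a variable $g$, the gate $g^{=1}$ is $g$ itself and the other copies are empty OR-gates (with $g^{=0}$ and $g^{>0}$ exchanged when $k = 0$). Setting $m \colonequals \max(l, k+1)$, I will only ever introduce $\eqg i$-range gates with $i < m$ and $\geg m$-range gates, so $C'$ will be a monotone $m$-augmented circuit with compatible order $<$; for its output I would take, say, an OR of all $k+2$ copies of $g_0$, which captures $S(C)$ (the output is irrelevant for the homogenization property).

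The heart of the argument is the wiring, which I would do by a case distinction on $g$. For a $\bog i$-range gate $g$ with inputs $g_1, g_2$, recall $S(g) = \{t \subseteq [g_1,g_2] \mid \card{t} \bowtie i\}$; I would set $g^{=j}$ to an $\eqg j$-range gate on $g_1, g_2$ when $j \bowtie i$ and to an empty OR-gate otherwise, and $g^{>k}$ to an $\eqg i$-range gate on $g_1, g_2$ when $\bowtie$ is $=$ and $i > k$, to a $\geg m$-range gate on $g_1, g_2$ when $\bowtie$ is $\ge$ --- using that then $i = l$, so the assignments of $S(g)$ of weight $> k$ are exactly the subsets of $[g_1, g_2]$ of size $\ge m$ --- and to an empty OR-gate otherwise. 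For an OR-gate $g$ with inputs $g_1, g_2$ I would take $g^{=j} \colonequals g_1^{=j} \lor g_2^{=j}$ and $g^{>k} \colonequals g_1^{>k} \lor g_2^{>k}$. For an AND-gate $g$ with inputs $g_1, g_2$, the compatible order makes $g$ decomposable, and --- the key structural observation --- every $t \in S(g)$ has a \emph{unique} decomposition $t = S_1 \sqcup S_2$ with $S_1 \in S(g_1)$ and $S_2 \in S(g_2)$, since an easy induction via Lemma~\ref{lem:botup} shows every assignment of $S(g_r)$ to be a subset of $[\min(g_r),\max(g_r)]$, and these two intervals are disjoint, forcing $S_r = t \cap [\min(g_r),\max(g_r)]$. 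Weights thus add along the product, so I would set $g^{=i}$ to the OR over $0 \le j \le i$ of the AND-gates $g_1^{=j} \land g_2^{=i-j}$, and $g^{>k}$ to the OR of $g_1^{>k} \land \hat{g}_2$, of $\hat{g}_1' \land g_2^{>k}$, and of $g_1^{=j} \land g_2^{=j'}$ over all $j, j' \in [0,k]$ with $j + j' > k$, where $\hat{g}_2$ is an OR of all $k+2$ copies of $g_2$ (capturing $S(g_2)$) and $\hat{g}_1'$ is the OR of $g_1^{=0}, \ldots, g_1^{=k}$ (capturing the weight-$\le k$ part of $S(g_1)$). In each case the invariant follows from Lemma~\ref{lem:botup} together with the induction hypothesis.

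It would then remain to check the global claims. Monotonicity is clear since no NOT-gate is created, $C'$ is $m$-augmented by inspection of the range gates above, and compatibility of $<$ with $C'$ is routine: a straightforward induction shows every copy of a gate $g$ of $C$ reaches, via directed paths, only variables of $[\min(g),\max(g)]$, so the new AND-gates --- which combine copies of the inputs of an AND-gate of $C$ --- cannot violate compatibility. For the running time, the arity-two reduction is linear, each of the $O(\card{C})$ gates is replaced by $O(k+1)$ copies, and the only costly wiring is at AND-gates, where the $k+1$ gates $g^{=i}$ have at most $k+1$ inputs each and $g^{>k}$ has $O((k+1)^2)$ inputs, giving $O((k+1)^2 \cdot \card{C})$ in total. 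The hard part will be the preservation of determinism. For OR-gate copies $g_1^{=j} \lor g_2^{=j}$ (and $g_1^{>k} \lor g_2^{>k}$) it follows from $S(g_1) \cap S(g_2) = \emptyset$, which holds by determinism of $g$ in $C$ in the form rephrased after Lemma~\ref{lem:botup}, so a fortiori the restricted captured sets are disjoint; for the auxiliary gates $\hat{g}_2$, $\hat{g}_1'$ and the output gate, it holds because distinct weight classes of a captured set are disjoint; and for the OR-of-ANDs gates $g^{=i}$ and $g^{>k}$ it holds precisely because the decomposition $t = S_1 \sqcup S_2$ at an AND-gate of $C$ is unique, so two distinct disjuncts correspond to distinct values of the pair $(\card{S_1}, \card{S_2})$, a quantity determined by $t$, hence to disjoint captured sets. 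Combined with decomposability, which comes for free from the compatible order, this gives that $C'$ is a d-DNNF in zero-suppressed semantics. Finally, Proposition~\ref{prp:homogenize} is recovered from $C'$ by instead taking the output to be the OR of $g_0^{=0}, \ldots, g_0^{=k}$.
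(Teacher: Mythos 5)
Your proposal is correct and follows the paper's proof essentially step for step: arity-two reduction via Lemma~\ref{lem:arity2}, Strassen-style splitting of each gate into the $k+2$ weight-indexed copies with the same wiring at variables, range gates, OR- and AND-gates, correctness by induction through Lemma~\ref{lem:botup}, and the same arguments for the compatible order, the bound on range-gate labels, determinism (disjointness of the inputs' captured sets at OR-gates, plus uniqueness of the split $t = S_1 \sqcup S_2$ at decomposable AND-gates), and the recovery of Proposition~\ref{prp:homogenize} by an output OR of $g_0^{=0}, \ldots, g_0^{=k}$. The only deviations are cosmetic: you leave AND-gates of fan-in $0$ and $1$ implicit (the paper spells them out), and your treatment of $\geg$-range gates via a $\geg \max(l,k+1)$ gate and your $g^{>k}$ wiring at binary AND-gates (using $\hat{g}_2$ to also cover the case where both halves have weight $>k$) are slightly more careful variants of the corresponding steps in the paper.
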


  It is clear that this result implies Proposition~\ref{prp:homogenize}.
  Indeed,
  to impose the desired semantics on the
  resulting circuit $C'$, one 
  can simply add a fresh OR-gate as the output gate of
  the $k$-homogenization $C'$ as the OR of the $G^{=i}$ for $0 \leq i \leq k$.
  It is then clear that the circuit satisfies the required properties, and we
  easily check determinism on the fresh output gate in the sense of
  Definition~\ref{def:zssddnnf} thanks to the fact that the valuations are
  disjoint (they have different weights).

  We now prove Proposition~\ref{prp:homogenize2}:

  \begin{proof}[Proof of Proposition~\ref{prp:homogenize2}]
  The construction essentially follows the classical homogenization technique
    introduced by Strassen~\cite{Strassen73}, and will not use the input
    compatible order except for the definition of range gates. We first rewrite the input circuit
    $C$ in linear time using Lemma~\ref{lem:arity2} to ensure that it is
    arity-two. We will write $\ins(g)$ to denote the inputs of a gate~$g$.
  
  Now, for every gate $g$ of $C$, for every $i\in \NN$, we define the sets $S^{=i}(g)\colonequals \{t\in S(G)\mid \card{t} = i\}$ and $S^{>i}(g) \colonequals \{t\in S(G)\mid \card{t} > i\}$.
  We create the $k$-homogenized circuit $C'$ by associating,
  to each gate $g$ of $C$, $k+2$ gates $g^{=0}, \ldots, g^{=k}$ and one gate
    $g^{>k}$ in $C'$ such that for $i\in [0,k]$ we have $S(g^{=i}) = S^{=i}(g)$
    and $S(g^{>k}) = S^{>k}(g)$. To ensure this, we proceed iteratively as follows:
  \begin{itemize}
    \item If $g$ is a variable, for all $i\in [0, k]\setminus \{1\}$, the gate $g^{=i}$ is an OR-gate with no inputs (so
      $S(g^{=0}) = \emptyset$). If $k>0$, $g^{=1}$ is a variable identified
      to~$g$ and $g^{>k}$ is an OR-gate with no inputs. Otherwise, $g^{>0}$ is a
      variable identified to~$g$.
    \item If $g$ is a $\eqg k'$-range gate, 
      then if $k' \leq k$ we set $g^{=k'}$ to be an $\eqg k'$-range gate with
      the same inputs as~$g$, otherwise we have $k < k'$ and set $g^{>k}$ to be an $\eqg
      k'$-range gate with the same inputs as~$g$. All other gates are OR-gates with no
      inputs.
    \item If $g$ is a $\geg k'$-range gate, for all $i\in [0, k'-1]$, the gate $g^{=i}$ is an OR-gate with no inputs. For all $i\in [k', k]$, the gate $g^{=i}$ is a ${\eqg i}$-range gate with the same inputs as $g$. Finally, $g^{>k}$ is a ${\ge (k+1)}$-range gate identified with the same inputs as $g$.
    \item If $g$ is an OR-gate, for every $i\in [0,k]$, the gate $g^{=i}$ is an OR-gate whose inputs are 
$\{(g')^{=i} \mid g' \in \ins(g)\}$, and $g^{>k}$ is an OR-gate whose inputs
      are $\{(g')^{>k} \mid g' \in \ins(g)\}$;
    \item If $g$ is an AND-gate, for every $i\in [0,k]$, the gates $g^{=i}$ and $g^{>k}$ are defined as follows: (remember that $C$ is
      arity-two so the following cases are exhaustive):
      \begin{itemize}
        \item if $\card{\ins(g)} = 0$, then $g^{=i}$ and $g^{>k}$ are AND-gates with $\ins(g^{=i}) \colonequals \emptyset$ and 
          $\ins(g^{>k}) \colonequals \emptyset$;
        \item if $\card{\ins(g)} = 1$, writing $\ins(g) = \{g'\}$, then $g^{=i}$ and $g^{>k}$ are AND-gates with $\ins(g^{=i}) \colonequals \{(g')^{=i}\}$ and 
          $\ins(g^{=i}) \colonequals \{(g')^{>i}\}$;
        \item if $\card{\ins(g)} = 2$, writing $\ins(g) = \{g'_1, g'_2\}$, then for $i\in [0,k]$, the gate $g^{=i}$ is an OR-gate with inputs $g^{=i}_0, \ldots, g^{=i}_i$ where each $g^{=i}_j$ is an AND-gate with inputs $(g'_1)^{=j}$ and $(g'_2)^{=(i-j)}$. Moreover, $g^{>k}$ is an OR-gate with inputs $g^{>k}_{i,j}$ for $i,j\in [0,k]$ and $g^{>k,1}_i$ and $g^{>k,2}_i$ for $i\in [0,k]$ which are defined as follows:
        \begin{itemize}
        \item if $i+j>k$, then $g^{>k}_{i,j}$ is an AND-gate with inputs $(g'_1)^{=i}$ and $(g'_2)^{=j}$, 
        \item if $i+j\le k$, then $g^{>k}_{i,j}$ is an OR-gate with no inputs,
        \item $g^{>k,1}_i$ is an AND-gate with inputs $(g'_1)^{>k}$ and $(g'_2)^{=i}$
        \item $g^{>k,2}_i$ is an AND-gate with inputs $(g'_1)^{=i}$ and $(g'_2)^{>k}$
        \end{itemize}
      \end{itemize}
  \end{itemize}

  By straightforward but somewhat cumbersome induction, it is easy to see that the gates $g^{=0}, \ldots, g^{=k}$ and $g^{>k}$ indeed compute the sets $S^{=0}(g), \ldots, S^{=k}(g)$ and $S^{>k}(g)$, respectively. Thus, $C'$ is a $k$-homogenization of $C$ as claimed. Moreover, the construction for every gate $g$ in $C$ can be performed in time $O(k^2)$, so the overall runtime of the algorithm is $O(k^2 \cdot |C|)$.
  
    It is clear that the resulting circuit $C'$ is monotone. If $C$ has a
    compatible order $<$, we first show that
    $<$ is a compatible order for~$C'$, so that $C'$ is $l$-augmented. Indeed, all AND-gates $g^{\land}$ with more than one input in $C'$ derive from the construction for an AND-gate $g$ with inputs $g_1'$ and $g_2'$ in $C$. But then in $g^{\land}$ always has inputs $(g_1')^{\bowtie_1 i_1}$ and $(g_2')^{\bowtie_2 i_2}$ for ${\bowtie_1}, {\bowtie_2} \in \{=,>\}$ and $i_1,i_2\in [0, k]$. Note that, by construction of~$C'$, the variables having a directed path to $(g'_1)^{\bowtie_1 i_1}$ and $(g'_2)^{\bowtie_2 i_2}$ in~$C'$ are subsets of those having a directed path to $g'_1$ and $g'_2$ respectively in~$C$. Hence, as $<$ is compatible with~$C$, from the condition on $g$ with inputs $g'_1$ and $g'_2$ in~$C$, we deduce that the condition is satisfied for $g^\land$ in~$C'$.
    Hence, $<$ is a compatible order for $C'$.
    We last observe that the range gates that we create are not labeled with
    integer values that are larger than those of the original circuit or than
    $k+1$, so $C'$ is indeed
    a monotone $\max(l,k+1)$-augmented circuit. 
   
    We last show that if $C$ is a d-DNNF in zero-suppressed semantics then
    so is $C'$. To do so, we show that the OR-gates of~$C'$ are deterministic in the sense of Definition~\ref{def:zssddnnf}.
  To this end, consider an OR-gate $g^\lor$ in $C'$ with at least two inputs. Only two cases can occur:
  \begin{itemize}
    \item \emph{$g^\lor$ was introduced in the construction for an OR-gate $g$ in $C$.} In this case, the inputs of $g^\lor$ are $\{(g')^{\bowtie i} \mid g' \in \ins(g)\}$ for ${\bowtie} \in \{=, >\}$ and $i\in [0,k]$. As before $S((g')^{\bowtie i})\subseteq S(g')$ for every input $g'$ of $g$ in $C$. Since $g$ is deterministic, we know that, for all gates $g', g''\in \ins(g)$ with $g'\ne g''$, the sets $S(g')$ and $S(g'')$  are disjoint. It follows that for every $(g')^{\bowtie i}, (g'')^{\bowtie i}\in \ins(g^\lor)$ with $g'\ne g''$, the sets $S((g')^{\bowtie i})$ and $S((g'')^{\bowtie i})$ are disjoint. Thus $g^\lor$ is deterministic.
    \item \emph{$g^\lor$ was introduced in the construction for an AND-gate $g$ in $C$.} In this case, the inputs of $g^\lor$ are OR-gates with no inputs (which can never cause a violation of determinism because they capture the empty set), and AND-gates with inputs of the form $(g_1')^{\bowtie_1 i_1}$ and $(g_2')^{\bowtie_2 i_2}$ where ${\bowtie_1}, {\bowtie_2} \in\{=,>\}$ and $i_1, i_2\in [0,k]$. Let now $g$ and $g'$ be two inputs of $g^\lor$ where $g$ has the inputs $(g_1')^{\bowtie_1 i_1}$ and $(g_2')^{\bowtie_2 i_2}$ and $g'$ has the inputs $(g_1')^{\bowtie_1' i_1'}$ and $(g_2')^{\bowtie_2' i_2'}$. By inspection of the construction, we see that we cannot have ${\bowtie_1} = {\bowtie_1'}$ and ${\bowtie_2} = {\bowtie_2'}$ and $i_1 = i_1'$ and $i_2 = i_2'$ at the same time, that is, one of these equalities must be false. But then, depending on whether the false equality is on $\bowtie_1$ or $i_1$, or on $\bowtie_2$ or $i_2$, we have $S((g_1')^{\bowtie_1 i_1})\cap S((g_1')^{\bowtie_1' i_1'})=\emptyset$ or $S((g_2')^{\bowtie_2 i_2})\cap S((g_2')^{\bowtie_2' i_2'})=\emptyset$. Thus, in both cases $S(g) \cap S(g')=\emptyset$ by Lemma~\ref{lem:botup} and thus $g^\lor$ is deterministic.
  \end{itemize}
  Since in both cases $g^\lor$ is deterministic, it follows, as claimed,  that $C'$ is a d-DNNF in zero-suppressed semantics.
\end{proof}

\end{toappendix}

\subparagraph*{Normal form.} 
\begin{toappendix}
  \subsection{Reduction to Normal Form}
\end{toappendix}

Now that we have focused on the interesting
valuations of our circuit $C$, we can bring it to our desired normal form:

\begin{definition}
  \label{def:normalform}
  A \emph{normal} circuit $C$ is a monotone 
  augmented
  circuit such that:
  \begin{itemize}
    \item $C$ is \emph{arity-two}, i.e., each gate has fan-in at most two.
    \item $C$ is \emph{$\emptyset$-pruned}, i.e., no gate $g$ is unsatisfiable
      (i.e., each gate has some minimal valuation).
    \item $C$ is \emph{$\semp$-pruned}, i.e., no gate $g$ is 0-valid
      (i.e., the valuation that sets all variables to~$0$ is not a minimal
      valuation for any gate).
    \item $C$ is \emph{collapsed}, i.e., it has no AND-gate with fan-in 1.
    \item $C$ is \emph{discriminative}, i.e., for every OR-gate $g$ with an
      input that is not an OR-gate (we call $g$ an \emph{exit}), $g$ has
      fan-in~$1$, fan-out~$1$, and
      the one gate with $g$ as input is an OR-gate.
  \end{itemize}
  $C$ is a
  \emph{normal d-DNNF} if it is additionally a d-DNNF in the
  zero-suppressed semantics.
\end{definition}

The pruned requirements slightly weaken the expressiveness of normal circuits~$C$, because
they forbid that $S(C) = \emptyset$ or that $\semp \in S(C)$. These cases
will be easy to handle separately. The main result of this section is then the
following:

\begin{propositionrep}
  \label{prp:normalize}
  Given a monotone augmented d-DNNF circuit $C$ in
  zero-suppressed semantics with compatible order~$<$ and with $S(C)
  \neq \emptyset$ and $S(C) \neq \ssemp$, we can
  build in $O(\card{C})$ a
  normal d-DNNF $C'$,
  with $<$ as a compatible order, such that $S(C') = S(C) \backslash \ssemp$.
\end{propositionrep}

\begin{proofsketch}
  We reuse the construction of Proposition~\ref{prp:homogenize} with $k=1$ to split the
  gates so that they are not 0-valid, we eliminate bottom-up the unsatisfiable
  gates,
  we make $C$ arity-two in a straightforward way, we collapse all AND-gates
  with fan-in~1, and we make $C$ discriminative by inserting new OR-gates (i.e., the
  exits) on all wires from non-OR-gates to OR-gates.
\end{proofsketch}

This result allows us to assume in the sequel that we are working with
normal d-DNNFs.

\begin{toappendix}
In the rest of this section, we prove Proposition~\ref{prp:normalize} in
multiple steps to ensure each condition.
  
The first step is to make the circuit $\emptyset$-pruned and $\semp$-pruned. To
do so, it will be convenient to reuse our homogenization process
  (Proposition~\ref{prp:homogenize2} of Appendix~\ref{apx:homogenization}) and
  our notion of homogenization of circuits
  (Definition~\ref{def:homogenization}). We first show how to make circuits
  $\emptyset$-pruned while preserving being a d-DNNF and being homogenized:

\begin{lemma}
  \label{lem:empprune}
  For any $l\in \NN$ and monotone $l$-augmented circuit $C$ with compatible order~$<$ such that $S(C) \neq \emptyset$, we can
  compute in linear time an equivalent $\emptyset$-pruned monotone $l$-augmented circuit
  $C'$ with compatible order~$<$. Further,
  if $C$ is a d-DNNF in zero-suppressed semantics, then so is~$C'$, and if $C$
  is a $k$-homogenized for some $k \in \NN$ then so is~$C'$.
\end{lemma}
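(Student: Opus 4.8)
The plan is to compute, in one bottom-up pass over~$C$, the set $U$ of \emph{unsatisfiable} gates (those $g$ with $S(g) = \emptyset$), and then to obtain $C'$ by deleting $U$ together with all incident wires. Using the characterisation of captured sets in Lemma~\ref{lem:botup}, membership in~$U$ is decided recursively: a variable gate is never in~$U$; an $\eqg i$-range gate with inputs $g_1, g_2$ is in~$U$ iff $i > \card{[g_1,g_2]}$, and a $\geg k$-range gate iff $k > \card{[g_1,g_2]}$; an OR-gate is in~$U$ iff all its inputs are (in particular if it has none); and an AND-gate is in~$U$ iff at least one input is (in particular it is never in~$U$ if it has none). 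Precomputing the rank of each variable along~$<$ in linear time makes the range-gate test constant-time, so this pass runs in time $O(\card{C})$. Since $S(C) = S(g_0) \neq \emptyset$ by hypothesis, $g_0 \notin U$.

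I would then set $C'$ to be $C$ with every gate of~$U$ and every wire incident to such a gate removed. The crucial observation is that this leaves $S(g)$ unchanged for every surviving gate~$g$: the only gates losing inputs are OR-gates, and they lose only inputs with captured set~$\emptyset$, so $S(g) = \bigcup_i S(g_i)$ is unaffected; all other gates keep all their inputs. Hence $S(C') = S(g_0) = S(C)$, i.e.\ $C'$ is equivalent to~$C$. Moreover $C'$ is a well-formed circuit: an OR-gate not in~$U$ had a satisfiable (hence surviving) input, so it still has one, while AND- and range gates keep all their inputs; and since the $S(g)$ are preserved and we removed exactly the unsatisfiable gates, every surviving gate is satisfiable, so $C'$ is $\emptyset$-pruned. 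No NOT-gate or range gate is introduced, so $C'$ is monotone, and as the range-gate labels are unchanged it is still $l$-augmented.

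The remaining properties are inherited by the usual ``a violation in~$C'$ witnesses a violation in~$C$'' argument. For compatibility of~$<$: AND-gates keep their input lists, and any variable reaching a gate in~$C'$ also reaches it in~$C$, so a pair of variables witnessing a failure of the compatibility inequality for some AND-gate of~$C'$ would witness the same failure in~$C$. For being a d-DNNF in zero-suppressed semantics: decomposability follows from the compatible order, and determinism at an OR-gate $g$ of~$C'$ follows from determinism at~$g$ in~$C$ because the inputs of~$g$ in~$C'$ form a subset of its inputs in~$C$. For being $k$-homogenised: keep the partition $G^{=0}, \ldots, G^{=k}, G^{>k}$ restricted to the surviving gates; the weight constraints on the parts persist since the $S(g)$ are unchanged, and variable gates, never deleted, stay in the part prescribed by Definition~\ref{def:homogenization}.

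I do not expect a genuine obstacle in this lemma; the only points needing mild care are verifying that deletion leaves a valid circuit (handled by the remark that non-$U$ OR-gates retain a surviving input, and AND/range gates retain all inputs) and that the interval sizes $\card{[g_1,g_2]}$ can be read off in constant time, which the precomputed variable ranks provide, keeping the whole construction within $O(\card{C})$.
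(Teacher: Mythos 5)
Your proposal is correct and follows essentially the same route as the paper: compute the unsatisfiable gates in one bottom-up pass, delete them with their incident wires, observe that only OR-gates among the survivors lose inputs (and only inputs with empty captured set, so all $S(g)$ are preserved), and inherit compatibility of~$<$, determinism, and the homogenization partition because any violation in~$C'$ would witness one in~$C$. Your base cases for gates with no inputs (OR-gates with no inputs unsatisfiable, AND-gates with no inputs satisfiable, consistent with Lemma~\ref{lem:botup}) and the remark about precomputing variable ranks for constant-time range-gate tests are fine additions.
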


\begin{proof}
  We first compute which gates $g$ of~$C$ are unsatisfiable. It is easily
  seen that these gates are the following, which can be computed in linear time
  by processing $C$ bottom-up:
  \begin{itemize}
    \item range gates labeled with $\bowtie i$ whose interval contains less
      than~$i$ variables (which we call \emph{unsatisfiable} range gates);
    \item AND-gates with no inputs gates;
    \item AND-gates where one input gate is unsatisfiable;
    \item OR-gates where all input gates is unsatisfiable.
  \end{itemize}
  We define the circuit $C'$ as $C$ where we remove all unsatisfiable gates and all wires leading out of these gates.  This can clearly be
  computed in linear-time. Further, $C'$ has a output gate
  (namely, the same as~$C$), because as $S(C) \neq \emptyset$, we know that we
  have not removed the output gate of~$C$.

  We will show that, for every gate $g$ of $C'$, the set $S(g)$ in~$C'$ is the
  same as $S(g)$ in~$C$. This claim implies in particular that $C'$ is equivalent
  to~$C$ (when applying it to the output gate), and it shows that
  $C'$ is $\emptyset$-pruned: if some gate $g$ is unsatisfiable in~$C'$, then $g$ is also unsatisfiable in~$C$,
  so $g$ should have been removed in~$C'$. To see why the
  claim is true, observe that the only wires from a removed gate to a
  non-removed gate, i.e., from an unsatisfiable gate $g$ to a satisfiable gate $g'$,
  must be such that $g'$ is an OR-gate (otherwise it
  would be unsatisfiable too), and clearly removing the wire from $g$ to $g'$
  does not change the set captured by~$g'$.
   
  Assuming now that $C$ is $k$-homogenized for some $k
  \in \NN$, we can see from
  the previous claim that the same is
  true of~$C'$. Indeed, we can suitably partition the gates
  of~$C'$ using the same partition as the one used for~$C$.

  Last, to see that $C'$ admits $<$ as a compatible order, and that $C'$ is a d-DNNF
  in zero-suppressed semantics if $C$ is, observe that we construct $C'$
  from~$C$ by removing gates and removing input wires to the remaining gates, so
  this cannot introduce violations of the compatibility of~$<$ or the
  determinism of OR-gates.
\end{proof}

We now show how to make the circuit $\emptyset$-pruned and $\semp$-pruned, using
the previous process and the homogenization process
(Proposition~\ref{prp:homogenize2} of Appendix~\ref{apx:homogenization}).

\begin{lemma}
  \label{lem:pruned}
  For any $l \in \NN$, for any monotone $l$-augmented circuit $C$ with
  compatible order~$<$ such that
  $S(C) \neq \emptyset$ and $S(C) \neq \ssemp$, we can compute in linear time a monotone $\max(l,
  1)$-augmented circuit $C'$ with compatible order~$<$ such that $S(C') = S(C) \backslash \ssemp$ and $C'$
  is $\emptyset$-pruned and $\semp$-pruned.
  Further, if $C$ is a d-DNNF in zero-suppressed semantics, then so
  is~$C$.
\end{lemma}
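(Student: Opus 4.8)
The plan is to derive this by composing the two constructions already available: the $\emptyset$-pruning of Lemma~\ref{lem:empprune} and the homogenization of Proposition~\ref{prp:homogenize2}, used with parameter $k = 0$. The key observation is that homogenizing with $k = 0$ splits every gate $g$ into a copy $g^{=0}$ whose captured set is $S(g) \cap \ssemp$ and a copy $g^{>0}$ whose captured set is $S(g) \backslash \ssemp$, while still producing a $\max(l,1)$-augmented circuit as required; keeping the $g^{>0}$-copy of the old output then removes $\semp$ from the captured set, and the few leftover gates whose captured set equals $\ssemp$ will turn out to be deletable without harm.

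Concretely, I would first apply Proposition~\ref{prp:homogenize2} to~$C$ with $k = 0$, obtaining in time $O(\card{C})$ a monotone $\max(l,1)$-augmented circuit $C_0$ with compatible order~$<$ that is a $0$-homogenization of~$C$ --- so, writing $g^{=0}, g^{>0}$ for the two copies of each gate~$g$ of~$C$, the gate set of~$C_0$ is partitioned as $G^{=0} \cup G^{>0}$ in the sense of Definition~\ref{def:homogenization} --- and that is a d-DNNF in zero-suppressed semantics whenever $C$ is. I would then redefine the output gate to be $g_0^{>0}$, where $g_0$ is the output of~$C$, calling the result $C_1$; by the defining property of a $0$-homogenization, $S(C_1) = \{t \in S(C) \mid \card{t} > 0\} = S(C) \backslash \ssemp$, which is nonempty since $S(C) \neq \emptyset$ and $S(C) \neq \ssemp$. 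Next I would apply Lemma~\ref{lem:empprune} to~$C_1$, getting $C_2$ that is $\emptyset$-pruned, still monotone $\max(l,1)$-augmented with compatible order~$<$, still $0$-homogenized, and still a d-DNNF in zero-suppressed semantics whenever $C$ is. Because $C_2$ is $\emptyset$-pruned, every gate in~$G^{=0}$ now captures exactly $\ssemp$, and every gate in~$G^{>0}$ captures a nonempty set of assignments of positive Hamming weight.

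Finally I would let $C'$ be~$C_2$ with every gate of~$G^{=0}$ and all incident wires deleted, keeping $g_0^{>0} \in G^{>0}$ as output; this is again linear time. The heart of the argument is that this deletion does not change the captured set of any surviving gate. For that, I would first note that a gate of~$G^{=0}$ can only feed into an AND-gate of~$G^{>0}$ or into another gate of~$G^{=0}$: if an OR-gate $g$ had an input $h \in G^{=0}$, then $\ssemp = S(h) \subseteq S(g)$, contradicting $g \in G^{>0}$. I would then note that every AND-gate of~$G^{>0}$ has at least one input in~$G^{>0}$, since otherwise all of its inputs capture $\ssemp$ and, by Lemma~\ref{lem:botup}, so does the gate, forcing it into~$G^{=0}$. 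Hence deleting the $G^{=0}$-inputs of such an AND-gate leaves it with at least one input and, since $\ssemp$ is neutral for the AND operation on captured sets (Lemma~\ref{lem:botup}), does not change its captured set. A bottom-up induction over~$G^{>0}$ then shows that every surviving gate keeps its captured set in~$C'$; in particular $S(C') = S(g_0^{>0}) = S(C) \backslash \ssemp$. The remaining properties are routine: every gate of~$C'$ lies in~$G^{>0}$ with an unchanged nonempty captured set that does not contain~$\semp$, so $C'$ is both $\emptyset$-pruned and $\semp$-pruned; deleting gates and wires introduces no new range-gate type, keeps~$<$ compatible (no variable lies in~$G^{=0}$, so the intervals of range gates are untouched), and cannot destroy the determinism of a surviving OR-gate (which, as noted, has no $G^{=0}$-input), so $C'$ is a d-DNNF in zero-suppressed semantics whenever $C$ is.

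The main obstacle is the bookkeeping in the last step: one must be sure that, after homogenizing with $k=0$ and $\emptyset$-pruning, the gates whose captured set still contains~$\semp$ are exactly the gates of~$G^{=0}$, each capturing precisely~$\ssemp$, and that they occur only as neutral inputs of AND-gates of~$G^{>0}$, so that removing them neither alters any captured set nor creates an empty-fan-in AND-gate --- which would itself be $0$-valid and reintroduce~$\semp$.
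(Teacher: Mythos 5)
Your proposal is correct and follows essentially the same route as the paper's own proof: homogenize (the paper invokes Proposition~\ref{prp:homogenize2} to obtain a $0$-homogenization, keeping the $g^{>0}$ copy of the output), then apply Lemma~\ref{lem:empprune}, then delete the $G^{=0}$ gates, arguing that they only feed AND-gates of $G^{>0}$ where $\ssemp$ is neutral. Your explicit check that each surviving AND-gate keeps at least one input is a welcome detail that the paper leaves implicit, but it does not change the argument.
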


\begin{proof}
  We use Proposition~\ref{prp:homogenize2} for $k = 1$ to compute in linear time
  in~$C$ a monotone $\max(l, 1)$-augmented circuit $C_{\text{homog}}$ which is a $0$-homogenization
  of~$C$, which admits $<$ as a compatible order, and which is a d-DNNF
  according to zero-suppressed semantics iff $C$
  is. Recalling now Definition~\ref{def:homogenization}, we know that
  $C_{\text{homog}}$ has a
  gate $g^{>0}$ such that $S(g^{>0}) = \{t \in S(C) \mid \card{t} > 0$, so
  choosing $g^{>0}$ as the output gate of~$C_{\text{homog}}$ we have indeed that
  $S(C_{\text{homog}}) = S(C) \backslash \ssemp$; in particular
  $S(C_{\text{homog}}) \neq \emptyset$.

  We now apply Lemma~\ref{lem:empprune} to~$C_{\text{homog}}$, to obtain an
  equivalent $\emptyset$-pruned monotone $\max(l,1)$-augmented circuit
  $C_{\emptyset}$
  which is $\emptyset$-pruned, which is is still $0$-homogenized, and which is a
  d-DNNF in zero-suppressed semantics if~$C$ is.

  We last rewrite $C_{\emptyset}$ to an equivalent circuit $C'$. Recall that, as
  $C'$ is $0$-homogenized, its gate set is partitioned in
  $G^{=0}$ and $G^{>0}$, such that all gates of $G^{=0}$ capture 
  $\ssemp$ (they cannot capture $\emptyset$ as $C_\emptyset$ is
  $\emptyset$-pruned), and no gates of the latter capture a set containing
  $\semp$ (and $G^{>0}$ contains in particular the output gate).
  We define our final circuit $C'$ from $C_\emptyset$ by removing all gates of~$G^{=0}$
  and all wires leading out of them. Note that, by definition of homogenized
  circuits, we do not remove variable gates or the output gate.
  The construction of~$C'$ is clearly in linear-time, and $C'$ is compatible
  with $<$ and is a d-DNNF in
  zero-suppressed semantics if $C'$ is, because $C'$ is constructed from
  $C_\emptyset$ by removing gates and input to remaining gates, which cannot
  introduce violations of these requirements.

  We now show that for every gate $g$ of $C'$, its captured set $S(g)$ in~$C'$
  is the same as $S(g)$ in~$C_\emptyset$.
  This claim implies in particular that $C'$ is equivalent
  to~$C_\emptyset$ (when applying it to the output gate),
  that $C'$ is $\emptyset$-pruned (any violation of this in~$C'$ implies a
  violation of the fact that $C_\emptyset$ is $\emptyset$-pruned),
  and it shows that $C'$ is $\semp$-pruned: if $\semp \in S(g)$ in~$C'$ for some
  gate~$g$, then the same holds of~$g$ in~$C_\emptyset$, so $g \in G^{=0}$ and
  $g$ should have been removed in~$C'$.
  To see why the claim is true, observe that when there is a wire from a
  gate $g$ in $C'$ to a gate $g'$ in~$C'$, and $g$ is removed and $g'$ is
  not, i.e., $g \in G^{=0}$ and $g' \in G^{>0}$,
  then we have $S(g) \subseteq \ssemp$, and then as $C'$ is $\emptyset$-pruned we must
  have $S(g) = \ssemp$. Hence, we know that $g'$ is an AND-gate, because if it were an
  OR-gate we would have $\semp \in S(g')$ contradicting $g' \in G^{>0}$. Now as
  $\semp$ is neutral for~$\times$, we do not change the semantics by removing the
  wire.

  Hence, $C'$ is a monotone $\max(l,1)$-augmented circuit with compatible order~$<$ that
  is $\emptyset$-pruned and $\semp$-pruned, we have $S(C') = S(C_\emptyset) =
  C_{\text{homog}} = S(C) \backslash \ssemp$, and if $C$ is a d-DNNF in the
  zero-suppressed semantics then $C_{\text{homog}}$, $C_\emptyset$, and thus
  $C'$, also are. This concludes the proof.
\end{proof}

The third step is to ensure that the circuit is collapsed and discriminative,
but this is completely straightforward:

\begin{lemma}
  \label{lem:collapsed}
  For any $l\in\NN$ and $l$-augmented circuit $C$ with compatible order~$<$,
  we can compute in linear time an $l$-augmented circuit $C'$ with compatible
  order~$<$ which is collapsed and discriminative. Further, if $C$ is arity-two
  (resp., is $\emptyset$-pruned, is $\semp$-pruned, is a d-DNNF in the
  zero-suppressed semantics, is monotone), then so is~$C'$.
\end{lemma}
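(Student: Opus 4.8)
The plan is to achieve the two new properties by two independent, purely local rewritings applied in sequence — first \emph{collapsing}, then \emph{discriminating} — each of which is obviously computable in linear time and each of which is routinely checked to preserve every side condition listed in the statement.

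\emph{Collapsing.} To eliminate fan-in-one AND-gates I would contract every maximal directed chain of such gates onto its bottom endpoint. Formally, set $\rho(g) = g$ when $g$ is not a fan-in-one AND-gate and $\rho(g) = \rho(g')$ when $g$ is a fan-in-one AND-gate with unique input $g'$; since $C$ is acyclic this is well-defined, and all values $\rho(g)$ are precomputed in total linear time by memoization. Let $C'$ consist of the gates $g$ with $\rho(g) = g$, with every wire $(h, g)$ of $C$ into a surviving gate $g$ replaced by $(\rho(h), g)$, and with output $\rho(g_0)$. By Lemma~\ref{lem:botup} a fan-in-one AND-gate captures exactly the set captured by its input, so $S(g)$ is unchanged in $C'$ for every surviving gate $g$; hence $C'$ is $\emptyset$-pruned and $\semp$-pruned whenever $C$ is, and every OR-gate of $C'$ remains deterministic whenever $C$ was a d-DNNF in zero-suppressed semantics. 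No NOT-gate and no range gate is introduced, so monotonicity and $l$-augmentedness are preserved. For the compatible order, walking one step along a fan-in-one AND-gate does not change the set of variables reachable by a directed path, so $\min$ and $\max$ of every surviving gate are unchanged in $C'$, and the compatibility inequalities at the surviving AND-gates still hold verbatim.

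\emph{Discriminating.} On the collapsed circuit I would insert, on each wire $(h, g)$ running from a non-OR-gate $h$ to an OR-gate $g$, a fresh OR-gate $e$ with the single input $h$ and the single output $g$, deleting the original wire. There are at most $\card{W}$ such wires, so this is linear. Afterwards each fresh gate $e$ is an exit with fan-in one, fan-out one, and its one output an OR-gate, so it meets the discriminative condition; and every original OR-gate now has only OR-gate inputs (its former non-OR inputs having been replaced by fresh exits), so it is not an exit. Only OR-gates are added, so no fan-in-one AND-gate reappears and collapsedness is retained; moreover $S(e) = S(h)$ and every other captured set is unchanged, so $\emptyset$-prunedness, $\semp$-prunedness, and determinism of OR-gates are preserved; no AND-gate is touched and no gate's set of reachable variables changes, so arity-two (when present) and compatibility of $<$ survive; monotonicity is clear.

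\emph{Main obstacle.} The only delicate point is in the collapsing step: contraction could a priori identify two distinct inputs $h_1 \neq h_2$ of some AND-gate $g$ (namely if $\rho(h_1) = \rho(h_2)$), which would drop $g$'s fan-in below two or create a forbidden multi-edge into an AND-gate, destroying arity-two or compatibility of $<$. This cannot happen: $\rho(h_1) = \rho(h_2)$ would be a gate with directed paths to two distinct inputs of $g$, so any variable reachable from it — and outside degenerate constant-$1$ gates at least one such variable exists, while such gates are excluded under $\semp$-prunedness — would have directed paths to two inputs of $g$, contradicting decomposability of $C$, which holds because $C$ has a compatible order. Hence every surviving AND-gate keeps all of its pairwise-distinct inputs, still has fan-in $\neq 1$, and produces no illegal multi-edge, so $C'$ is collapsed and remains arity-two whenever $C$ was; combined with the discriminating step this gives the claimed $C'$.
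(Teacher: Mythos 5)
Your proposal is correct and follows essentially the same route as the paper's (very terse) proof: collapse each fan-in-one AND-gate onto its unique input, then make the circuit discriminative by inserting a fresh OR-gate on every wire from a non-OR-gate into an OR-gate, checking that each local rewriting preserves the listed properties. Your extra care about collisions under the contraction map $\rho$ (which the paper does not even mention, and which in the intended use is harmless since the lemma is applied to circuits that are already pruned, so every gate has a variable below it) only strengthens the argument.
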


\begin{proof}
  Simply merge all AND-gates with one input with their one input to make
  the circuit collapsed. This is clearly linear-time, and does not affect
  compatibility with~$<$, OR-determinism, being 
  arity-two, being $\emptyset$-pruned, or being $\semp$-pruned.
  
  Then, for every wire $(g, g')$ where the input gate
  $g$ is not an OR-gate but the output gate~$g'$ is an OR-gate, rewrite
  the wire by inserting an intermediate OR-gate, i.e., we create a fresh OR-gate
  $g''$ (the exit), and replace the wire $(g, g')$ by $(g, g'')$ and $(g'',
  g')$. This is clearly linear-time, ensures that the output is discriminative,
  and it does not affect any of the
  requirements.
\end{proof}

Using these results, we can conclude the proof of
Proposition~\ref{prp:normalize}:

\begin{proof}[Proof of Proposition~\ref{prp:normalize}]
  We first apply Lemma~\ref{lem:pruned} to make the circuit $\emptyset$-pruned
  and $\semp$-pruned. We then apply Lemma~\ref{lem:arity2} to make it arity-two.
  We last apply Lemma~\ref{lem:collapsed} to make it collapsed and
  discriminative.
\end{proof}

  As in Appendix~\ref{apx:reducing}, we will need in
  Section~\ref{sec:applications} to apply the process described
  in this section to non-augmented circuits that have no compatible order but
  are decomposable. The claim is as follows, and it is straightforward to
  verify, because all transformations described in this section do not introduce
  range gates if their input does not contain range gates, and do not depend on
  the compatible order except to define the semantics of range gates and to
  ensure decomposability.

  \begin{remark}
    \label{rmk:rangegates}
    If the input circuit $C$ to Propositions~\ref{prp:homogenize}
    or~\ref{prp:homogenize2} has no range
    gates and no compatible order, but is decomposable, then the construction
    still works, and the output still does not have range gates and is still
    decomposable. The same is true of Proposition~\ref{prp:normalize}.
  \end{remark}

\end{toappendix}

\section{Indexing OR-Components}
\label{sec:multitrees}
\begin{toappendix}
  \label{apx:multitrees}
\end{toappendix}

This section presents the last step of our preprocessing. Remember that we now work
with a normal d-DNNF, and we want to enumerate its set of assignments.
Intuitively, this last preprocessing will help us to
enumerate the choices that can be made at OR-gates. Formally, we
will work on the \emph{OR-components} of our circuit:

\begin{definition}
  \label{def:orcomponent}
  The \emph{OR-component} $K$ of an OR-gate $g$ in a normal circuit $C$ is the
  set of OR-gates that can be reached from~$g$ by going only through
  OR-gates, following wires in either direction. We abuse notation and also see
  $K$ as a DAG, whose
  vertices are the gates of~$K$, and whose edges are the wires between them.
\end{definition}

Recall from Definition~\ref{def:normalform} that, as $C$ is discriminative, all
gates of an \mbox{OR-component}~$K$ with no inputs in $K$ must be exits; we call
them the \emph{exits} of~$K$. For a gate $g$ in~$K$, the
\emph{exits} of~$g$ are the gates of~$K$ that have a directed path to~$g$ in~$K$;
intuitively, they are the ``possible choices'' for a partial trace rooted
at~$g$.
Our goal is to preprocess each OR-component of~$C$ to be able to enumerate
efficiently the exits of all OR-gates of~$C$.
This enumeration task is tricky, however:
exploring~$K$ naively when enumerating would take time dependent of~$C$, but
materializing a reachability index would take quadratic preprocessing time.
Thus, we design an efficient indexing scheme, using the fact that OR-components are \emph{multitrees}:

\begin{definition}
  A DAG $G$ is a \emph{multitree} if it has no pair $n \neq n'$ of vertices such
  that there are two different
  directed paths from~$n$ to~$n'$. In particular, forests are multitrees, and so
  are polytrees (DAGs with no undirected cycles).
\end{definition}

\begin{lemmarep}
  \label{lem:multitrees}
  For any normal d-DNNF $C$, each
  OR-component of~$C$ is a multitree.
\end{lemmarep}

\begin{proof}
  Assume by contradiction that an OR-component is not a multitree, so it has
  two gates $g$ and $g'$ such that there are two different directed paths
  $\pi_1$ and $\pi_2$ from
  $g$ to $g'$. As $\pi_1$ and $\pi_2$ are two different paths to the same
  gate~$g'$, there must be a gate $g''$ with inputs $g_1'' \neq g_2''$ 
  such that $\pi_1$ goes through $g_1''$ and $g''$, and $\pi_2$ goes through
  $g_2''$ and $g''$.  
  As $C$ is
  $\emptyset$-pruned, $S(g')$ is non-empty, so let $t \in S(g')$. As $C$ is
  $\semp$-pruned, $t$ is non-empty. The
  directed paths $\pi_1$ and $\pi_2$ witness that $t \in S(g_1'')$ and $t \in
  S(g_2'')$, and this
  violates the determinism condition on the OR-gate~$g''$.
\end{proof}

We can then prepare the enumeration of exits of gates in OR-components,
by designing an efficient and generic indexing scheme
on \emph{multitrees} (see Appendix~\ref{apx:multitrees}). We deduce:

\begin{theoremrep}\label{thm:orindex}
  Given a normal d-DNNF $C$, we can compute in $O(\card{C})$ a structure
  called \emph{\mbox{OR-index}} allowing
  us to do the following: given an OR-gate $g$ of~$C$, enumerate the exits
  of~$g$ in its OR-component~$K$,
  with constant delay and memory usage $O(\log \card{K})$.
\end{theoremrep}

\begin{toappendix}
  In order to prove Theorem~\ref{thm:orindex}, thanks to 
  Lemma~\ref{lem:multitrees}, it suffices to show the following general result
  on multitrees, where a
  \emph{leaf} of a multitree~$T$ is a vertex $n$ with no edge to a vertex
  of~$T$:

\begin{theorem}
  \label{thm:multitreeidx}
  Given a multitree $T$, we can compute in linear time a data structure allowing
  us to perform the following: given $n \in T$, enumerate the
  leaves of~$T$ that are reachable from~$n$ with constant-delay and memory usage
  in $O(\log \card{T})$.
\end{theorem}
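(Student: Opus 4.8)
The plan is to realise the enumeration as a left-to-right depth-first traversal of the out-tree $A_n$ of the vertices reachable from the query vertex $n$, speeding it up with precomputed pointers that jump over the long ``unary'' parts of the traversal, and keeping the traversal stack logarithmic by choosing a good order on each vertex's children. The structural facts I would rely on are: in a multitree $T$, for any $n$ the sub-DAG induced by the vertices reachable from $n$ is an out-tree $A_n$ rooted at $n$ (two in-edges into a reachable vertex from reachable vertices would give two distinct directed paths $n \leadsto v$); for every internal vertex $v$ of $A_n$ the leaf-sets reachable from distinct children of $v$ are pairwise disjoint and, since every vertex of a finite DAG reaches a sink (i.e.\ a leaf), non-empty; hence $\mathrm{Leaves}(n)$ is the disjoint union over the children $c$ of $n$ of $\mathrm{Leaves}(c)$, and a left-to-right DFS of $A_n$ visits each reachable leaf exactly once.

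For the linear-time preprocessing I would (a) contract every internal vertex of $T$ with a single out-edge into the vertex it points to, recording the contraction map; this preserves the multitree property and all reachable-leaf sets and makes every internal vertex have out-degree $\ge 2$; (b) compute, in reverse topological order, $\ell(v) \colonequals \card{\mathrm{Leaves}(v)}$ for each $v$ (it is the \emph{sum} of the children's values, by disjointness) and stably reorder each out-edge list by non-decreasing $\ell$ with a global bucket sort, in $O(\card{T})$; write $\mathrm{lc}(v)$ for the resulting first (lightest) child, and compute in the same pass $\mathrm{first}(v)$, the leaf obtained by iterating $\mathrm{lc}$; (c) build the forest $\Lambda$ on the vertices of $T$ in which the parent of each internal $v$ is $\mathrm{lc}(v)$ (so the leaves of $T$ are its roots and the $\Lambda$-ancestors of $v$ are exactly the leftmost spine $v, \mathrm{lc}(v), \dots, \mathrm{first}(v)$), and equip $\Lambda$ with a standard level-ancestor index answering ``$d$-th ancestor of $v$'' in $O(1)$ after linear preprocessing.

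The enumeration from $n$ (first pushed through the contraction map) then runs the DFS of $A_n$ with a compact stack whose entries are pairs $(c,d)$ meaning ``we have entered the subtree of $c$, descended its leftmost spine, and are now at the $\Lambda$-ancestor of $c$ at depth $d$''; the denoted vertex is recovered in $O(1)$ by a level-ancestor query, so an entry is $O(1)$ space. Entering the subtree of $c$ pushes $(c,\cdot)$ and immediately outputs $\mathrm{first}(c)$; producing the next leaf inspects the top entry, possibly advances one step up its spine (one level-ancestor query) and/or to the next unvisited child of the current spine vertex, and recurses into that child exactly as in the initial step; an entry is popped the instant we descend into the last child of its spine-top. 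Correctness is then routine. For memory: whenever two vertices $w$ (lower) and $w'$ (higher) are consecutive among those represented on the stack, $w'$ lies in the subtree of a child $c$ of $w$, and since $w$ is still on the stack it has an unvisited child $c^{*} \neq c$ with $\ell(c^{*}) \ge \ell(c)$ (children sorted), so $\ell(w) \ge \ell(c) + \ell(c^{*}) \ge 2\,\ell(w')$; the $\ell$-values therefore halve up the stack, the stack has height $O(\log \card{T})$, and the state is $O(\log\card{T})$.

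For delay, because contraction removed unary internal vertices, each time the procedure advances to ``the next unvisited child of the current spine vertex'' such a child exists, so every constant-size transition is followed within $O(1)$ steps by a leaf output; the one remaining danger --- a cascade of pops when an entire subtree is exhausted --- is prevented by the rule of popping an entry as soon as we enter the last child of its spine-top, so that returning from that child never lands on an exhausted entry and at most one pop occurs per produced leaf, giving $O(1)$ worst-case delay. I expect this last point to be the main obstacle: arranging the stack discipline so that \emph{every} gap between two output leaves is $O(1)$ in the \emph{worst} case (amortised $O(1)$ is immediate, since the total traversal work is $O(\card{A_n}) = O(\card{\mathrm{Leaves}(n)})$) while simultaneously keeping the stack logarithmic; the two requirements pull against each other, since the logarithmic bound forces the ``lightest child first'' order, which creates arbitrarily long leftmost spines, and jumping over those in constant time without enlarging the state is exactly what forces the $\mathrm{first}$-pointers and the level-ancestor structure on $\Lambda$.
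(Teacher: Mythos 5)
Your overall route is genuinely different from the paper's and is essentially workable. The paper never traverses $T$ directly with spine machinery: it first builds, in linear time, an auxiliary multitree $Q(T)$ with a map $q$ and a labelling of its nodes by leaves of~$T$, so that the leaves reachable from $n$ in~$T$ are in bijection with the labels of \emph{all} nodes reachable from $q(n)$ in~$Q(T)$; a plain stack traversal of that reachable tree then produces an output at every single step, so worst-case constant delay is automatic, and the logarithmic stack bound follows by always pushing the heavier child first. You instead stay on (a contraction of) $T$ and buy constant delay with the $\mathrm{first}$-pointers, the spine forest $\Lambda$ and level-ancestor queries; your delay analysis, including the early-pop rule at the spine-top, is sound, modulo the small point that an entry $(c,d)$ alone does not determine where to resume after a return --- the frame must also record (an $O(1)$ pointer to) the child of the current spine vertex being explored.

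There is, however, a genuine flaw in your memory argument as stated. The step ``since $w$ is still on the stack it has an unvisited child $c^{*}\neq c$ with $\ell(c^{*})\ge\ell(c)$'' only holds when the frame's current spine vertex is its spine-top, because your pop rule fires only there. A frame legitimately remains on the stack while exploring the \emph{last} child $c$ of a spine vertex $w$ strictly below its spine-top (it still owes work at the spine vertices above $w$); then $w$ has no unvisited child at all, and $\ell(w)\ge 2\ell(w')$ can fail: let $w$ have exactly two children, a single leaf (its spine child) and a heavy last child $c$, and let the frame above currently sit at its own spine-top $w'=c$, so $\ell(w)=\ell(c)+1=\ell(w')+1$. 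Hence ``the $\ell$-values halve up the stack'' is false for consecutive represented vertices. The logarithmic bound itself survives under a repaired invariant comparing the \emph{spine-tops} $x_i$ of consecutive frames: if the lower frame sits at its spine-top $x_i$, the early-pop rule gives an unvisited sibling of $c$ of weight $\ge\ell(c)$, so $\ell(x_i)\ge 2\ell(c)\ge 2\ell(x_{i+1})$; if it sits at a proper spine vertex $w$, then $w=\mathrm{lc}(v)$ for its spine predecessor $v$, which after contraction has another child of weight $\ge\ell(w)$, so $\ell(x_i)\ge\ell(v)\ge 2\ell(w)>2\ell(c)\ge 2\ell(x_{i+1})$. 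Either way spine-tops halve along the stack, giving height $O(\log\card{T})$; with this correction (and the stored child pointer) your construction does prove the theorem, at the cost of noticeably heavier machinery than the paper's $Q(T)$ reduction.
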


This theorem allows us to compute the required OR-index.
Indeed, we can compute the OR-components
of~$C$ in linear time, go over each OR-component $K$, and apply the theorem to the
\emph{reverse} of~$K$ (in which exits are leaves), which is still a multitree.
  The result over all OR-components is an index that allows us, 
given any OR-gate $g$ in~$C$, to enumerate the exits of~$g$ with 
constant delay and with the claimed memory usage.
This computation is linear-time overall, and
concludes the preprocessing of our input circuit.
All that remains is to prove Theorem~\ref{thm:multitreeidx}, which we do in the rest of this appendix.

Given a multitree $T$, we will show how to compute in linear
time a multitree $Q(T)$ labeled with leaves of $T$ and a mapping $q$
  from $T$ to $Q(T)$ such that the \emph{leaves} reachable from a node $n\in T$
  correspond (in a one-to-one correspondence) to the labels of the \emph{nodes}
reachable from~$q(n)\in Q(T)$. This ensures that, by enumerating the labels of
  the \emph{nodes} reachable from~$q(n)$ in~$Q(T)$, we enumerate the
  \emph{leaves} reachable from
$n$ in~$T$.

We then show that this second task is easy, as the nodes of a multitree $T$ reachable from a node
$n\in T$ can be enumerated in constant delay with a simple tree traversal.

Finally, we improve on the tree traversal so that the memory usage is
logarithmic in the size of the multitree.

\subparagraph{Transformation of $T$ into $Q(T)$.}
Let $T$ be a multitree, and let us explain how to construct~$Q(T)$.
  We may assume without loss of generality that $T$ is binary. Indeed, if this
  is not the case, we simply
consider all nodes of~$T$ with more than two children and replace them by binary trees in the obvious
way. This transformation does not change the leaves that are reachable from the
original nodes, so it suffices to solve our enumeration problem on the new binary tree.
  
  We create $Q(T)$ by a bottom-up traversal of~$T$, and consider every node~$n$ of~$T$ from the
leaves to the root:
  
  \begin{itemize}
      
    \item If $n$ is a leaf, we introduce a leaf node
      $q(n)$ in~$Q(T)$.
      \item If $n$ is an internal node with a single
        child $n'$, we introduce a node $q(n)$ in~$Q(T)$ and connect it as a
        parent of the children
      of $q(n')$ in~$Q(T)$ if they exist (note that they must already have been
      constructed).
      \item If $n$ is an internal node with two children $n_1$ and $n_2$, we
        introduce two nodes $q(n)$ and $c(n)$ in~$Q(T)$. We connect $q(n)$ as a parent of
      $c(n)$, and connect $c(n)$ as a parent of the children of $q(n_1)$ and
      $q(n_2)$ in~$Q(T)$ if they exist (again, they have already been
      constructed).
  \end{itemize}
  
  This completes the construction of $Q(T)$. Note that multiple internal nodes
  of $Q(T)$ may share the same children, so it is not generally a tree.
  
  We show that $Q(T)$ is a binary
  multitree. Indeed, it is immediate to see that whenever there is an edge
  in~$Q(T)$ from a node $c(n)$ or $q(n)$ to a node $c(n')$ or $q(n')$, then
  either $n = n'$ or $n'$ is a descendant of~$n$ in~$T$. Hence, $Q(T)$ is
  acyclic, and if there is a path from $c(n)$ or $q(n)$ to $c(n')$ or $q(n')$
  in~$Q(T)$, then there is a path from $n$ to~$n'$ in~$T$, so any violation of
  the fact that $Q(T)$ is a multitree would imply a violation in~$T$. This shows
  that $Q(T)$ is a multitree (but note that it is generally not a tree). Further, 
  it is immediate to show by induction that all nodes of the
  form $q(n)$ have at most one child, and then the nodes of the form $c(n)$ have
  at most two children, so indeed $Q(T)$ is binary.

We now describe how to label each node $n'$ of~$Q(T)$ with a leaf of $T$, which
  we write $\lambda(n')$.
  Our construction will ensure the following property: for any node $n\in T$,
  the \emph{leaves} reachable from
$n$ in~$T$ are in a one-to-one correspondence with the labels of \emph{nodes}
reachable from~$q(n)$ in~$T'$. More precisely, for each leaf $\ell$ of $T$
reachable from $n$, there is exactly one node reachable from~$q(n)$ in~$T'$ such
that $\lambda(q(n)) = \ell$,
and, conversely, for every node $n'$ reachable from $q(n)$ in~$T'$, the leaf
$\lambda(n')$ is reachable from~$n$ in~$T$.
We describe the construction in a bottom-up
fashion on nodes $n$ of~$T$, and show that the property is verified for~$n$:

  \begin{itemize}
    \item If $n$ is a leaf, we set $\lambda(q(n)) \colonequals n$. This
      clearly satisfies the property.

    \item If $n$ is an internal node with a single child $n'$,
      we set $\lambda(q(n)) \colonequals \lambda(q(n'))$, which was defined
      before. Since we reach the same leaves from $n$ and $n'$ in $T$, the
      property is satisfied by induction.

    \item If $n$ is an internal node with two children $n_1$ and $n_2$,
      we set $\lambda(q(n)) \colonequals \lambda(q(n_1))$, and 
      $\lambda(c(n)) \colonequals \lambda(q(n_2))$, which were both defined
      before. We now explain why this is correct.
      The set of leaves reachable from $n$ in~$T$ is the union of the leaves
reachable from $n_1$ and $n_2$, and this union is disjoint because $T$ is a
      multitree. Now, the set of nodes reachable from $q(n)$ in~$T'$
contains $q(n)$, $c(n)$, and the nodes reachable from $q(n_1)$ or
$q(n_2)$ except $q(n_1)$ and $q(n_2)$ themselves. So our choice of labels
      clearly guarantees the desired property by induction.
  \end{itemize}

\subparagraph*{Enumeration phase.}
Given a node $n\in T$, thanks to the property of $Q(T)$ that we just showed, 
we can enumerate the leaves reachable from $n$ in~$T$ simply by 
traversing the tree rooted in $q(n)$. Our enumeration state is a
stack $\mathcal{S}$ of nodes in the multitree $Q(T)$ that have yet to be processed. At
the beginning of the enumeration, $\mathcal{S}=\{q(n)\}$. At
each step of the enumeration, we pop a node $n'$ of~$T'$ from
$\mathcal{S}$, push the children of $n'$ (if any) back into
$\mathcal{S}$, and then output $\lambda(n')$.

The stack $\mathcal{S}$ can be implemented with a linked list, so that we can
push and pop elements in constant time. It is immediate that this algorithm
can indeed enumerate in constant delay the labels of the nodes reachable from an
node $q(n)$ of~$Q(T)$. So we have solved our initial enumeration problem on~$T$:
given a node $n \in T$, we enumerate the nodes reachable from~$q(n)$ in~$Q(T)$ as
we explained, and by the property that we showed, the process enumerates exactly
the leaves of~$T$ reachable from~$n$.

\subparagraph*{Memory usage.}
We now explain how to refine the preprocessing and enumeration process to
satisfy the logarithmic memory bound.
We define the \emph{weight} $w(n)$ of a node $n$ in a multitree to be the number of
nodes reachable from a node $n$, including $n$ itself.

The memory usage of our enumeration algorithm is the maximum size of the state
maintained during the enumeration, i.e., the maximum size reached by
$\mathcal{S}$. Given a node $n\in T$, if the tree rooted in $q(n)$ is
unbalanced then $\mathcal{S}$ may contain as many as $w(q(n))/2$
nodes. We now show how to get a tighter, logarithmic bound on memory
usage by choosing the order in which we traverse $Q(T)$.

We first pre-compute the weight of all nodes of~$Q(T)$ in linear time in a
bottom-up fashion, as part of our preprocessing. Now, at each step of the
enumeration, we pop a node from $\mathcal{S}$ (which is the last inserted still
in $\mathcal{S}$) and then push its children onto $\mathcal{S}$. When there are
two children, we make sure that the child with greater weight is pushed first.

We claim that at every step of the enumeration for a node $q(n)$ of~$T'$
corresponding to $n\in T$, the weight of a node in $\mathcal{S}$ is greater
or equal to the sum of the weights of nodes in~$\mathcal{S}$ that
were inserted afterwards, i.e., that precede the node in~$\mathcal{S}$. We show 
the claim by induction along the enumeration process:

\begin{itemize}

\item At the beginning of the enumeration $\mathcal{S}$ contains only
$q(n)$ and the claim is vacuous.

\item At each step of the enumeration, 
we pop the first node and we push back its children, of which there are at most
    two. The
property holds for all the nodes of the new stack that already existed in the
    old stack, since the total
weight of the nodes we push back (i.e. the children) is the weight of
the popped node minus one. The property also holds for the newly added
nodes, because we add the node with bigger weight first.

\end{itemize}

Hence, we have shown our claim by induction. Thus, let us consider any point of the
enumeration algorithm, write the stack $\calS = (s_1, \ldots, s_p)$, 
and show that $p = O(\log\card{T})$. We assume in particular that $p \geq 2$,
otherwise there is nothing to show. Let us define a sequence $(T_i)$ by $T_1
\colonequals 1$ and $T_{i+1}\colonequals\sum_{j\leq i} T_j$: it is clear by induction from
our previous claim that, for all $1 \leq q \leq p$, we have $w(s_q) \geq T_q$.
Now, it is easy to see that $T_{i}=2^{i-2}$ for $i\geq 2$, so we have $w(s_p)
\geq 2^{p-2}$. Remember now that the weight of a node in $\mathcal{S}$ cannot
exceed $w(q(n))$, because all nodes in $\mathcal{S}$ are reachable from $q(n)$.
So we must have $w(s_p) \leq w(q(n))$, and $w(q(n)) \geq 2^{p-2}$. This clearly
implies that $p = O(\log(w(q(n)))$, in particular $p = O(\log\card{T})$. Hence,
the stack is always of size logarithmic in~$\card{T}$, which proves the memory
usage claim.
\end{toappendix}

\section{Enumerating Assignments}
\label{sec:enum}
\begin{toappendix}
  \label{apx:enum}
\end{toappendix}

We have described in the previous sections our linear-time preprocessing on the
input circuit: this produces a normal d-DNNF $C$
together with an OR-index, and we wish to enumerate its assignments $S(C)$ in
zero-suppressed semantics. In this section, we show that we can 
enumerate the elements of $S(C)$, producing each assignment $t$ with delay
$O(\card{t})$.

To prove this, we will go back to our 
definition of zero-suppressed semantics in Section~\ref{sec:reducing},
namely, the minimal valuations of the traces of~$C$ (recall
Definition~\ref{def:trace}).
We will proceed in two steps. First, we use our preprocessing and the OR-index
to show
an efficient enumeration scheme for
the traces of~$C$,
in a compact representation called \emph{compressed traces}.
Second, we show how to enumerate efficiently
the minimal valuations of a compressed trace.

\subparagraph*{Compressed traces.}
\begin{toappendix}
  \subsection{Compressed Traces}
\end{toappendix}
We cannot enumerate traces directly because they can be arbitrarily large (e.g.,
contain long
paths of OR-gates) even for assignments of small weight. We accordingly define \emph{compressed traces} as a variant
of traces that collapse such paths:

\begin{definition}
  \label{def:comptrace}
  An \emph{OR-path} of a monotone augmented circuit $C = (G, W, \mu, g_0)$ is a
  path from~$g \in G$ to~$g' \in G$ where all intermediate gates are OR-gates;
  in particular if $(g, g') \in W$ then there is an OR-path from $g$ to~$g'$.
  A \emph{compressed upward tree} of~$C$
  is a pair $(G', W')$ where $G' \subseteq G$ and where $W' \subseteq G' \times
  G'$ is such that for each $(g, g') \in W'$ there is an OR-path from $g$
  to~$g'$: we require that $T$ is a rooted tree up to reversing the
  direction of the edges.
  $T$ is a
  \emph{compressed partial trace} if its internal gates are AND-gates and
  OR-gates such that:
  \begin{itemize}
    \item for every AND-gate $g$ in~$T$, \emph{all} its inputs in~$C$ are
      children of~$g$ in~$T$;
    \item for every exit $g$ in~$T$ (it is an OR-gate), its one input in~$C$ is a child
      of~$g$ in~$T$;
    \item for every non-exit OR-gate $g$ in~$T$, \emph{exactly one}
      of its \emph{exits} $g'$
      in~$C$ is a child of~$g$ in~$T$.
  \end{itemize}
  We write $\card{T} \colonequals \card{G'}$.
  We call $T$ a
  \emph{compressed trace} of~$C$ if its root is~$g_0$.
  The \emph{minimal
  valuations} of a compressed trace are defined like for
  non-compressed traces (Definition~\ref{def:minval}).
\end{definition}

The use of compressed traces is that their size is linear in that of
their minimal valuations:

\begin{lemmarep}
  \label{lem:compress}
  For any compressed trace $T$ of a normal circuit $C$ and
  minimal valuation $\nu$ for~$T$ and~$C$, we have 
  $\card{T} \leq 6 \cdot \card{\nu}$.
\end{lemmarep}

\begin{proof}
  First observe that, as $C$ is $\semp$-pruned, $C$ (hence $T$) cannot contain
  AND-gates with no children, or range gates labeled $=0$ or $\geq 0$. Hence,
  each leaf of~$T$ is either a variable gate or a range gate capturing a non-empty set. 
  Remember further that, as $C$ has a compatible order, no two
  leaves can share a common variable. Hence, each leaf of~$T$ contributes at
  least one to the Hamming weight of a minimal valuation~$\nu$, so that, letting $n$
  be the number of leaves of~$T$, we have $\card{\nu} \geq n$.

  As $C$ is arity-two and collapsed, each AND-gate of~$T$ has exactly two
  children, and by definition of a compressed trace each OR-gate of~$T$ has
  exactly one child. Letting $n'$ be the number of AND-gates in~$T$, it is then
  clear that $n' = n-1$. Call AND-gates, variable gates, and range
  gates \emph{useful}: their number is $n + n'$.
  It suffices to show that the number $n''$ of OR-gates
  of~$T$ is at most $2 (n' + n)$. This follows if we can show that, for each useful
  gate, one of its parent, grandparent, and great-grandparent in~$T$ is also
  useful (or is undefined, in the case of the root). Indeed, this implies that $n'' \leq
  2 (n' + n)$ because, if each useful gate covers its parent and grandparent,
  this guarantees that all non-useful gates (namely, all OR-gates) are covered.
  The reason why a parent, grandparent, or great-grandparent of a useful gate must 
  be useful is that, whenever $T$ contains an OR-gate, it is
  either an exit and then its one child is not an OR-gate so it is useful, or it
  is not an exit, in which case its one child is an exit. So we have shown the
  desired inequality, which concludes the proof.
\end{proof}

From a trace $T$ in a normal d-DNNF $C$, we can clearly define a
compressed trace $T'$ with the same leaves, as follows. Whenever $T$ contains an
OR-gate $g$ whose parent gate $g'$ in~$T$ is not an OR-gate (or when $g$ is the
root of~$T$), as $g$ cannot be an exit, we know that 
there is a OR-path in~$T$ from $g$ to an exit $g''$ of~$g$ in its OR-component.
We ``compress'' this OR-path in~$T'$ as an edge from $g$ to~$g''$.
Conversely, given a compressed trace~$T'$, we can fill it to a trace~$T$ with
the same leaves, by replacing each edge from $g$ to $g'$ by a witnessing
OR-path; and there is only one way to do so because OR-components in~$C$ are
multitrees (Lemma~\ref{lem:multitrees}). Hence, there is a bijection between
traces and compressed traces that preserves the set of leaves. As the minimal
valuations of traces and compressed traces are defined in the same way from
their set of leaves, we can simply enumerate compressed traces instead
of traces. 

The following shows that we can perform enumeration of the compressed traces
efficiently:

\begin{propositionrep}
  \label{prp:enumtraces}
  Given a normal d-DNNF $C$ with its OR-index, we can
  enumerate its compressed traces, with the delay to produce each compressed
  trace $T$ being in $O(\card{T})$.
\end{propositionrep}

In particular, if all compressed traces have constant size, then the delay is
constant.

\begin{proofsketch}
  At each AND-gate, we
enumerate 
  the lexicographic product of the partial traces of its two children; at each
OR-gate, we enumerate its exits using the OR-index.
\end{proofsketch}

\begin{proof}
  We define inductively an algorithm to enumerate the sequence of partial
  compressed traces in~$C$ rooted at a gate~$g$ as follows:

  \begin{itemize}
    \item If $g$ is a variable, produce the one element of its singleton
      sequence of compressed traces and halt immediately.
    \item If $g$ is an OR-gate, enumerate with constant delay its sequence of
      reachable exits, using the precomputed OR-index: as the circuit is
      normal, this sequence is non-empty. For each reachable exit $g'$,
      letting $g''$ be its one input gate, enumerate the sequence $T(g'')$ of
      partial compressed
      traces rooted at~$g''$. For each such compressed trace $C$, produce $C
      \cup \{g, g'\}$ (the union is disjoint). Halt when the enumeration of reachable exits has halted
      with the last such gate $g'$, and the enumeration of partial compressed
      traces rooted at~$g'$ has halted.
    \item If $g$ is an AND-gate, as the circuit is normal it has exactly
      two inputs. Enumerate the sequence of
      partial compressed traces $T(g_1)$ rooted at its first input~$g_1$. For each trace $C_1
      \in T(g_1)$, enumerate
      the sequence $T(g_2)$ of partial compressed traces rooted at its
      second input~$g_2$.
      For each such trace $C_2$, produce $\{g\} \cup C_1 \cup C_2$ (the unions
      are disjoint). Halt
      when the enumeration of compressed traces of $T(g_1)$ has halted 
      with the last such trace $C_1$ and the enumeration of $T(g_2)$ has also
      halted.
  \end{itemize}

  Running the algorithm on~$C$ is simply running it on the output
  gate~$g_0$.

  We claim that the delay of this algorithm when producing a compressed
  trace~$T$ is in~$O(\card{T})$. To see why, observe that the state when we
  start to enumerate the next valuation consists of gates of~$C$ where the
  enumeration has not yet halted, or (in the case of left children of AND-gates)
  where enumeration has halted but where the previous compressed trace will be
  reused in full. At each node that we consider in the algorithm, we perform a
  constant amount of computation (in particular, for OR-gates, we use the
  OR-index), and then we output that gate as part of the compressed trace.
  Hence, the algorithm performs a constant number of steps at a set of gates
  which is a subset of the gates of the compressed trace which is output, so the
  claim holds. (In particular, when the enumeration at one gate halts, the end
  of the computation at that gate is accounted as part of the last compressed
  trace using a recursive call at that gate, but it is not considered when
  producing the next compressed trace (which may not include that gate).)
\end{proof}

\subparagraph*{Enumerating valuations of a compressed trace.}
\begin{toappendix}
  \subsection{Enumerating Valuations of a Compressed Trace}
\end{toappendix}

We now show how, given a compressed trace~$T$, we can enumerate its minimal
valuations (recall Definition~\ref{def:minval}). Restricting
our attention to the leaves of~$T$, we can rephrase our problem in the following
way:

\begin{definition}
  \label{def:assenumprob}
  The \emph{assignment enumeration problem} for a total order $<$ on gates
  $C_\var$ is as follows: given pairwise disjoint intervals $[g_1^-,
  g_1^+], \ldots, [g_n^-, g_n^+]$, and cardinality constraints ${\bowtie_1 i_i},
  \ldots, {\bowtie_n i_n}$, where $0 < i_j \leq \card{[g_j^-, g_j^+]}$ and ${\bowtie_j} \in
  \{{=}, {\geq}\}$, enumerate the values of the products $t_1 \times \cdots
  \times t_n$ for all the assignments of the $t_j \subseteq [g_j^-, g_j^+]$ such
  that $\card{t_j} \bowtie_j i_j$ for all~$j$.
\end{definition}

Indeed, remember that, as $C$ is $\semp$-pruned, the leaves of~$T$ consist of
variables and range gates, and
their intervals are pairwise disjoint
thanks to decomposability.
A ${\bowtie i}$-gate with inputs $g^-, g^+$ codes the interval $[g^-, g^+]$ with
cardinality constraint ${\bowtie i}$, and a variable~$g$ simply codes 
$[g, g]$ with 
constraint ${=1}$. Further, thanks to
$\semp$-pruning, we know that no range gate is labeled with ${=0}$ or ${\geq
0}$, and thanks to $\emptyset$-pruning, we know that no range gate is labeled
with an infeasible cardinality constraint. We claim:

\begin{propositionrep}
  \label{prp:enumsol}
  We can enumerate the solutions to the assignment enumeration problem for~$<$
  on~$C_\var$, with each solution $t$ being produced with delay
  linear in its size $\card{t}$.
\end{propositionrep}

Again, this is constant-delay when all solutions have size bounded by a
constant.

\begin{proofsketch}
  We enumerate the possible assignments of weights to intervals with
  constant-delay, to reduce to the case where all cardinality constraints are
  equalities. We then enumerate the assignments in lexicographic order, using an
  existing scheme~\cite[Section 7.2.1.3]{Knuth05} to enumerate the assignments
  in each interval.
\end{proofsketch}

We have now concluded the proof of Theorem~\ref{thm:main}
and~\ref{thm:constant}: refer back to Figure~\ref{fig:schema} for an overview of the proof
of Theorem~\ref{thm:main}.
Given our input d-DNNF $C$ and v-tree $T$ rewritten to
a compatible order, we rewrite $C$ to an equivalent normal d-DNNF and compute the OR-index. We
then enumerate compressed traces, and enumerate the valuations for each trace.
The proof of Theorem~\ref{thm:constant} is the same except that we additionally
use Proposition~\ref{prp:homogenize} before Proposition~\ref{prp:normalize} to
restrict to valuations of Hamming weight~$\leq k$.

\begin{toappendix}
To prove this result, it will be convenient to enumerate assignments following
the \emph{lexicographic product} of the individual orders:

\begin{definition}
  \label{def:lexicographic}
Given two sets $S_1, S_2$ with orders $\le_1, \le_2$, the \emph{lexicographic product} $\le_1\times \le_2$ on $S_1\times S_2$ is defined by $(a_1, a_2) (\le_1\times \le_2) (b_1, b_2)$ if and only if 
\begin{itemize}
 \item $(a_1 <_1 b_1)$, or
 \item $a_1 = b_1$ and $(a_2 \le_2 b_2)$.
\end{itemize}
The lexicographic product of two totally ordered sets is clearly a total order,
  and the lexicographic product operation is clearly associative, so
this definition extends to an arbitrary number of sets and yields a total order.
\end{definition}

We show the following general lemma about enumeration in the
lexicographic order:

\begin{lemmarep}
  \label{lem:lexicographic}
  Let $S_1, \ldots, S_n$ be non-empty sets that do not contain the empty assignment, such that
  the elements of each $S_i$ can be enumerated in some total order $\le_i$, each
  element being produced with delay linear in its size. Then the elements of the
  product $S_1 \times \cdots \times S_n$ can be enumerated in the lexicographic
  order $\le_1 \times \cdots \times \le_n$, each element being produced with
  delay linear in its total size.
\end{lemmarep}

\begin{proof}
  We run the enumeration algorithm for each $S_j$. To produce the first
  enumeration result, we enumerate the
  first element of each~$S_j$ with its algorithm, and we find the largest $1
  \leq j \leq n$ such that 
  the element of $S_j$ that we enumerated is not the last one: this
  obeys the delay bound, because, as the $S_j$ do not contain the empty assignment,
  the time required to iterate over all the $S_j$ is linear in the total size of
  the enumerated solution.
  
  Now, at each stage of the global enumeration algorithm, we remember the last
  element of the product that we enumerated, the corresponding enumeration
  state in each~$S_j$, and the largest~$1 \leq j \leq n$ such that the element
  of~$S_j$ that we enumerated is not the last one. To produce the next element,
  enumerate the next element $e$ of this~$S_j$, and then compute first element $e_{j^{\rightarrow}}$  
  for $S_{j^\rightarrow}$ for $j < j^\rightarrow \leq n$
  as when producing the first enumeration result (this may be empty if $j = n$).
  Finally, we go over all $S_j$ to update our value for~$j$. We then
  produce our enumeration result: it is composed of the element of the product
  of the $S_{j^\leftarrow}$ for $1 \leq j^\leftarrow < j$ that we had 
  enumerated in the round before, of the element $e \in S_j$ that we just enumerated, 
  and the $e_{j^\rightarrow}$ for $j < j^\rightarrow \leq n$. The delay of this is the
  delay of writing the solution, which is linear in its total size, plus the
  delay of going over the $S_j$, which is linear in the total size as above, and
  the delay of enumerating $e \in S_j$ and the $e_{j^{\rightarrow}}$, which is less than the total size again,
  so we obey the bound.
\end{proof}

We will use Lemma~\ref{lem:lexicographic} to enumerate the solutions to the
  assignment enumeration problem (recall Definition~\ref{def:assenumprob}),
and we will do so in two steps. We will first reduce to the case where all
constraints~$\bowtie_j$ are equalities. Then we will enumerate valuations in
this case.

To reduce to equalities, we will define the \emph{range} $R_j$ of the interval
$[g_j^-, g_j^+]$ for $1 \leq j \leq n$ as the singleton $\{i_j\}$ if $\bowtie_j$
is~${=}$, and the range $\{i_j, i_j+1, \ldots, \card{[g_j^-, g_j^+]}\}$ if
$\bowtie_j$ is~${\geq}$;
note that this set is non-empty. The \emph{range} $R$ of the product is simply
$R_1 \times \cdots \times R_n$. We can talk of an element $t_1 \times \cdots
\times t_n$ of the product of the intervals as \emph{realizing} the vector
$(\card{t_1}, \ldots, \card{t_n})$ of~$R$, which we call a \emph{histogram}:
clearly all such elements realize a histogram of~$R$ (so the values of~$R$
partition the assignments), and conversely every value of~$R$ is the histogram
of some element (i.e., the classes of the partition are non-empty).
Thus, to prove Proposition~\ref{prp:enumsol}, we can enumerate the
histograms of the range~$R$, and then enumerate the assignments corresponding to
this histogram.

It is clear that, for each range $R_j$, we can enumerate the integers that it
contains with constant delay since we are working in a RAM model. Hence, by
Lemma~\ref{lem:lexicographic}, we can enumerate the histograms of~$R$ with
delay linear in $n$, i.e., the number of entries in the histograms. Note now that $n$
is always less than the size of any assignment that realizes it because the $i_j$ and 
thus the number of inputs chosen from each interval $[g_j^-, g_j^+]$ is strictly positive, so
the delay when enumerating a histogram is within the allowed delay to
enumerate an assignment. Note that, as each histogram is realized by at least
one assignment, the delay when enumerating a histogram is paid at most once
when enumerating an assignment.

Hence, it suffices to study the enumeration of assignments that satisfy a fixed
histogram, i.e., where $\bowtie_j$ is $=$ for each $1 \leq j \leq n$.
Let $\binom{S}{q}$ for a set $S$ and a non-negative integer $q$ denote the set of all subsets of size $q$ of $S$. We make the following observation.

\begin{observationrep}\label{obs:lexicographicproduct}
 Let $[g_1^-, g_1^+], \ldots, [g_n^-, g_n^+]$ and ${= i_i}, \ldots, {= i_n}$ be an instance of the assignment enumeration problem where all cardinality constraints are equalities. Then the assignments to be enumerated on the given instance are exactly $\binom{[g_1^-, g_1^+]}{i_1}\times \ldots \times \binom{[g_n^-, g_n^+]}{i_n}$.
\end{observationrep}
\begin{proof}
  By definition of the assignment enumeration problem, when choosing for each $j\in [n]$ an assignment $t_j$ of size $i_j$ from $[g_1^-, g_1^+]$, we have that $t_1\times \ldots \times t_n$ has to be enumerated. Conversely, every assignment $a$ that has to be enumerated decomposes as $a_1\times \ldots \times a_n$ with $\card{a_j} = i_j$ and thus $t\in \binom{[g_1^-, g_1^+]}{i_1}\times \ldots \times \binom{[g_n^-, g_n^+]}{i_n}$.
\end{proof}

Hence, applying Lemma~\ref{lem:lexicographic} again, it suffices to argue that
we can enumerate the elements of the $\binom{[g_j^-,g_j^+}{i_j}$ in delay linear
in the size of the produced elements, i.e., linear in~$i_j$. We will see the
elements of $[g_j^-,g_j^+]$ as ordered according to~$<$, which allows us to
define a lexicographic order on $\binom{[g_j^-,g_j^+}{i_j}$. It is then known
that we can enumerate such elements with delay linear in~$i_j$; 
we refer the reader to e.g.~\cite[Section 7.2.1.3]{Knuth05} where implicitly the following is shown.

\begin{proposition}\label{prop:enumeratecombinations}
 Given a set $S$ of $p$ ordered elements and $q \in \mathbb{N}$, the following tasks can be performed in time $O(q)$:
 \begin{itemize}
  \item compute the lexicographically minimal combination of $q$ elements from $S$, and
  \item given a combination of $q$ elements from $S$, compute the lexicographically next such combination if it exists.
 \end{itemize}
\end{proposition}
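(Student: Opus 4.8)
The plan is to reduce everything to the textbook algorithm for generating combinations in lexicographic order (Knuth's Algorithm~L/T). First I would fix a concrete representation: using the given order on $S$, identify $S$ with $\{1,\dots,p\}$, so that a $q$-element combination becomes a strictly increasing tuple $c=(c_1<c_2<\dots<c_q)$ with $1\le c_i\le p$. One checks immediately that the lexicographic order on combinations of elements of $S$ (induced by the order on $S$) coincides with the lexicographic order on these index tuples, so it suffices to work with tuples. Under this representation the lexicographically minimal combination is $(1,2,\dots,q)$, which can be written down in time $O(q)$.

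For the successor step, given $c$ I would scan its entries from $c_q$ down to $c_1$ and look for the largest index $j$ such that $c_j$ is not yet at its maximal possible value, i.e.\ $c_j<p-q+j$. If no such $j$ exists, then $c=(p-q+1,\dots,p)$ is the lexicographically largest combination and there is no successor, which handles the ``if it exists'' clause. Otherwise I would increment $c_j$ and then overwrite the suffix with its componentwise-minimal admissible continuation, setting $c_{j+\ell}\colonequals c_j+\ell$ for $\ell=1,\dots,q-j$. The downward scan to locate $j$ costs $O(q)$ in the worst case, and rewriting the suffix costs $O(q)$, so each call runs in time $O(q)$ as required; the computation of the minimal combination is likewise $O(q)$.

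For correctness I would argue two points: (i) the update produces a valid combination, since after it we have $c_j<c_{j+1}<\dots<c_q\le p$ by the choice of $j$; and (ii) it produces the lexicographic successor, because the immediate successor of $c$ must agree with $c$ on the longest possible prefix, which forces the first change to occur at the rightmost incrementable position $j$, forces $c_j$ to increase by exactly one, and then forces the remaining entries to take their smallest admissible values. I do not expect a genuine obstacle here, since this is a standard routine (which is why the paper points to \cite[Section~7.2.1.3]{Knuth05}); the only points deserving care are the boundary test $c_j<p-q+j$ that correctly detects the maximal combination, and the straightforward bookkeeping that moves between abstract combinations of elements of~$S$ and their index tuples, neither of which affects the $O(q)$ bound.
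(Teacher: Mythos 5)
Your proposal is correct: it is exactly the standard lexicographic successor routine for combinations (rightmost incrementable position $c_j < p-q+j$, increment, reset the suffix to its minimal continuation), which is precisely what the paper relies on by citing \cite[Section~7.2.1.3]{Knuth05} without spelling it out. You simply make the cited argument explicit, including the $O(q)$ cost and the boundary case $c=(p-q+1,\ldots,p)$, so there is nothing to object to.
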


Hence, by Lemma~\ref{lem:lexicographic} we can enumerate the assignments
satisfying a histogram, producing each assignment with delay linear in its total
size. This concludes the proof of Proposition~\ref{prp:enumsol}.
\end{toappendix}

\begin{toappendix}
  \subsection{Putting Things Together}
  \label{apx:putting}
We are now ready to  put things together to prove our main results. We first
  show:
  
\begin{proposition}
  \label{prp:enum}
  Given a normal d-DNNF $C$ with its OR-index, we can
  enumerate the elements of $S(C)$, producing each assignment $t$ with delay
  $O(\card{t})$.
\end{proposition}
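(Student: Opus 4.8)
The plan is to compose the two enumeration schemes built above: Proposition~\ref{prp:enumtraces}, which enumerates the compressed traces of~$C$ with delay $O(\card{T})$ per compressed trace~$T$, and Proposition~\ref{prp:enumsol}, which enumerates the solutions of an assignment enumeration problem with delay linear in each solution. First I would recall, from the discussion preceding Proposition~\ref{prp:enumtraces}, that traces and compressed traces of~$C$ are in bijection preserving the set of leaves, hence the set of minimal valuations; together with Observation~\ref{obs:trace} and the definition of zero-suppressed semantics (Definition~\ref{def:zss}), this gives $S(C) = \bigcup_T M(T)$, where $T$ ranges over the compressed traces of~$C$ and $M(T)$ denotes the set of minimal valuations of~$T$. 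Since $C$ is normal, the leaves of any compressed trace~$T$ are variable gates and range gates whose intervals are pairwise disjoint (decomposability), with no range gate labeled $=0$ or $\geq 0$ (as $C$ is $\semp$-pruned) and no infeasible label (as $C$ is $\emptyset$-pruned); hence these leaves form a legal instance of the assignment enumeration problem of Definition~\ref{def:assenumprob}, whose set of solutions is exactly $M(T)$ (by unfolding Definition~\ref{def:minval}). In particular $M(T) \neq \emptyset$ for every compressed trace~$T$, since $0 < i_j \leq \card{[g_j^-, g_j^+]}$ makes each such instance feasible.

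I would then check that the union $\bigcup_T M(T)$ is disjoint, so that iterating over compressed traces and, for each, over its minimal valuations, yields every element of $S(C)$ exactly once. This is where determinism enters: by the inductive characterization of captured sets (Lemma~\ref{lem:botup}), at an OR-gate the captured set is the \emph{disjoint} union of those of its inputs, and at an AND-gate an assignment decomposes uniquely along the two disjoint intervals of the inputs; a bottom-up induction then shows that every $t \in S(g)$ is the minimal valuation of a unique trace rooted at~$g$, hence (via the bijection) of a unique compressed trace. Applied to the output gate this gives the required disjointness across traces; within a single compressed trace, Proposition~\ref{prp:enumsol} already produces no duplicates.

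Next I would run the following algorithm, whose state consists of the current state of the compressed-trace enumeration of Proposition~\ref{prp:enumtraces}, the current compressed trace~$T$, and the current state of the assignment enumeration of Proposition~\ref{prp:enumsol} on the instance induced by the leaves of~$T$. To produce the next assignment: if $M(T)$ is not yet exhausted, output its next element with delay $O(\card{t})$ by Proposition~\ref{prp:enumsol}; otherwise advance the compressed-trace enumeration to the next~$T'$, build the induced assignment-enumeration instance in time $O(\card{T'})$, and output the first element $t'$ of $M(T')$; when the compressed-trace enumeration is exhausted, signal termination. For the delay bound the only non-immediate case is the transition to a new trace~$T'$: there the total work before outputting~$t'$ is $O(\card{T'})$ for the delay of Proposition~\ref{prp:enumtraces}, plus $O(\card{T'})$ to build the instance, plus $O(\card{t'})$ for the delay of Proposition~\ref{prp:enumsol}, and by Lemma~\ref{lem:compress} we have $\card{T'} \leq 6\,\card{t'}$, so this is $O(\card{t'})$. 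Since $M(T') \neq \emptyset$, the one-time cost of emitting~$T'$ is always charged against a genuine output, so every $t \in S(C)$ is produced with delay $O(\card{t})$. (The same bookkeeping gives memory $O(\card{t}\cdot\log\card{C})$: the OR-index contributes $O(\log\card{C})$ per OR-gate of~$T$, the state of Proposition~\ref{prp:enumsol} is $O(\card{t})$, and the trace-enumeration state is $O(\card{T})=O(\card{t})$.)

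The main obstacle I anticipate is precisely the delay accounting at the boundary between two compressed traces: one must simultaneously ensure that the one-time $O(\card{T'})$ cost of producing a new compressed trace can be amortized against an output — which needs $M(T')\neq\emptyset$, and hence uses that $C$ is $\emptyset$-pruned — and that this cost does not exceed the size of the output it is amortized against — which is exactly the content of Lemma~\ref{lem:compress}. Verifying disjointness of the $M(T)$ from determinism is the other delicate point, but it reduces cleanly to Lemma~\ref{lem:botup}.
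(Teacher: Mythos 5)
Your proposal is correct and follows essentially the same route as the paper: an outer loop over compressed traces via Proposition~\ref{prp:enumtraces}, an inner loop via Proposition~\ref{prp:enumsol}, determinism ensuring each assignment arises from exactly one compressed trace, and Lemma~\ref{lem:compress} bounding the per-trace cost by the size of the output. Your additional checks (feasibility and non-emptiness of each induced instance via the pruning conditions, and the explicit disjointness argument through Lemma~\ref{lem:botup}) simply spell out details the paper leaves implicit.
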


\begin{proof}
The enumeration algorithm consists of two nested loops: In the outer loop, we enumerate the compressed traces $T$ of $C$ with the help of Proposition~\ref{prp:enumtraces}. In the inner loop, we enumerate for each $T$ the satisfying assignments with Proposition~\ref{prp:enumsol}. Since~$C$ is deterministic, each satisfying assignments of $C$ is captured by exactly once compressed trace. Consequently, we enumerate every satisfying assignments of $C$ exactly once, so the algorithm is correct.

To analyze the delay of the algorithm, note that, to enumerate a valuation~$\nu$, in the worst case we have to first enumerate the next compressed trace $T$ of $C$ and then compute the valuation~$\nu$ as a valuation of $T$. The first part takes time~$O(\card{T})$ by Proposition~\ref{prp:enumtraces} which by Lemma~\ref{lem:compress} is $O(\card{\nu})$. The second part takes time $O(|\nu|)$ by Proposition~\ref{prp:enumsol}. So the overall delay to produce $\nu$ is $O(\card{\nu})$ as claimed.
\end{proof}

We are now ready to prove our first main result:

\begin{proof}[Proof of Theorem~\ref{thm:main}]
 Given $C$, we first deal with two special cases. We first check if $C$ has any satisfying assignments. If not, we are done at this point and stop. Note that this consistency check can be done in linear time~\cite{Darwiche01}.
 
 The second special case is that we check if $C$ is satisfied exactly by $\ssemp$. If so, we print out $\semp$ and are done. This test can also be done in linear time as follows: First check if $\semp$ satisfies $C$. This can be done in linear time by substituting all inputs by $0$ and then evaluating $C$. Afterwards, we check if $C$ is satisfied by exactly one valuation. Since satisfying assignments of a d-DNNF can be counted in linear time~\cite{DarwicheM02}, this is also a linear time test.
 
 In the remainder of the proof, we may now assume that the set $S$ of valuations satisfying~$C$ is such that $S\ne \emptyset$ and $S\ne \ssemp$.
 
 We now infer a compatible order $<$ for $C$. As discussed in Section~\ref{sec:reducing}, this is easy to do in linear time, assuming we are given a v-tree.
  Next, we proceed with Proposition~\ref{prp:reducing} to compute a monotone $0$-augmented d-DNNF $C'$ in zero-suppressed semantics having $<$ as a compatible order such that $S(C')=S$.
  We then use Proposition~\ref{prp:normalize} to compute a $1$-normal d-DNNF $C''$ which has~$<$ as a compatible order and is such that $S(C'')= S(C')\setminus \ssemp$.
  Finally, we compute the OR-index of $C''$ with Theorem~\ref{thm:multitreeidx}.
 
 Before we start the enumeration phase, we check if $\semp$ satisfies $C$. If so, we enumerate $\semp$ as the first valuation. Afterwards, we use Proposition~\ref{prp:enum} to enumerate the valuations in $S\setminus \ssemp$.
 
 By inspection of the individual results used in this algorithm, it is obvious that the satisfying assignments of $C$ are correctly enumerated. Moreover, the linear runtime bound on the preprocessing follows by the fact that all individual steps can be performed in time linear in their input size. The bound on the enumeration delay follows directly from Proposition~\ref{prp:enum}.
\end{proof}

The proof of Theorem~\ref{thm:constant} is identical to that of Theorem~\ref{thm:main} except for the fact that we make an additional preprocessing step. After using Proposition~\ref{prp:reducing}, we compute a circuit $C^k$ that is satisfied exactly by the satisfying assignments of $C$ with Hamming weight at most~$k$ with Proposition~\ref{prp:homogenize}. We then proceed as in the proof of Theorem~\ref{thm:main}. Note that this slightly increases the runtime of the preprocessing from $O(|C|)$ to $O(k^2\cdot |C|)$.
\end{toappendix}

\section{Applications}
\label{sec:applications}
We now present two applications of our main results.
Our first application recaptures the well-known enumeration results for MSO
queries on trees~\cite{bagan2006mso,kazana2013query}. The second application
describes links to factorized databases
and strengthens the enumeration result of~\cite{olteanu2015size}.

\subparagraph*{MSO enumeration.}
\begin{toappendix}
  \subsection{Computing Circuit Representations of MSO Answers}
  \label{apx:msofac}
\end{toappendix}
Recall that the class of \emph{monadic second-order} formulae (MSO)
consists of first-order logical formulae extended with quantification over sets, see e.g.~\cite{Libkin04}.
The \emph{enumeration problem} for a fixed MSO formula~$\phi(X_1, \ldots, X_k)$
with free second-order variables,
given a structure $I$, is to enumerate the \emph{answers} of~$\phi$ on~$I$,
i.e., the $k$-tuples $(B_1, \ldots, B_k)$ 
of subsets of the domain of~$I$ such that $I$ satisfies $\phi(B_1, \ldots,
B_k)$. We measure the \emph{data complexity} of this task, i.e., its complexity
in the input structure, with the query being fixed.

It was shown by Bagan~\cite{bagan2006mso} that MSO query enumeration on
\emph{trees} and \emph{bounded treewidth structures} can be performed with
linear-time preprocessing and delay linear in each MSO assignment; in particular, if
the free variables of the formula are first-order, then the delay is constant.
This latter result was later re-proven by Kazana and Segoufin
\cite{kazana2013enumeration}. We show how to recapture this theorem from our
main results. From the results of Courcelle and standard techniques (see, e.g.,
\cite{kazana2013query}, Theorem~6.3.1 and Section~6.3.2), we restrict to binary
trees.

\begin{definition}
  Let $\Gamma$ be a finite alphabet.
  A \emph{$\Gamma$-tree} $T$ is a rooted unordered binary tree
  where each node $n \in T$ carries a label in $\Gamma$. We abuse notation and identify $T$
  to its node set. \emph{MSO formulae on $\Gamma$-trees} are written on
  the signature consisting of one binary predicate for the edge relation and 
  unary predicates for each label of~$\Gamma$.
\end{definition}

Let $\phi(X_1, \ldots, X_k)$ be an MSO formula on $\Gamma$-trees, and let $T$ be
a $\Gamma$-tree. We will show our enumeration result by building a structured
circuit capturing
the \emph{assignments}
of~$\phi$ on~$T$:

\begin{definition}
  \label{def:msoassign}
  A \emph{singleton} on~$X_1, \ldots, X_k$ and~$T$ is an expression of the form $\langle X_i : n
\rangle$ with $n \in T$.
An \emph{assignment} on~$X_1, \ldots, X_k$ and~$T$ is a set $S$ of singletons: 
it defines a $k$-tuple $(B^S_1, \ldots, B^S_k)$ of subsets of $T$ by 
setting $B^S_i \colonequals \{n \in T \mid \langle X_i: n \rangle \in S\}$ for
  each~$i$. The \emph{assignments} of~$\phi$ on~$T$ are the assignments $S$ such
  that $T$ satisfies $\phi(B^S_1, \ldots, B^S_k)$.
\end{definition}

We will enumerate assignments instead of answers: this makes no difference
because we can always rewrite each assignment in linear time to the
corresponding answer. We
now state the key result: we can efficiently build circuits (with singletons
as variable gates) that capture the assignments to MSO queries. (While these
circuits are not augmented circuits, they are decomposable, so the definition of
zero-suppressed semantics clearly extends.)

\begin{theoremrep}
  \label{thm:facrep}
  For any fixed MSO formula $\phi(X_1, \ldots, X_k)$ on $\Gamma$-trees,
  given a $\Gamma$-tree $T$, we can build in 
  time $O(\card{T})$ a monotone d-DNNF circuit~$C$ in zero-suppressed semantics
  whose set $S(C)$ of assignments (as in Definition~\ref{def:zss})
  is exactly the set of assignments of~$\phi$ on~$T$.
\end{theoremrep}

\begin{proofsketch}
  We simplify $\phi$ to have a single free variable and limit to assignments on
  leaves as in~\cite{bagan2006mso}, and
  rewrite $\phi$ to a deterministic tree automaton $A$
  using the result of Thatcher and Wright~\cite{thatcher1968generalized}, in
  time independent of~$T$ (though the runtime is generally nonelementary
  in~$\phi$). We then compute our circuit as a variant of the \emph{provenance
  circuits} in our earlier work~\cite{amarilli2015provenance}, observing that it is a d-DNNF
  thanks to the determinism of the automaton as in~\cite{amarilli2016tractable}.
  This second step is in~$O(\card{A} \cdot \card{T})$, so linear in~$T$.
  Appendix~\ref{apx:msofac} gives a self-contained proof.
\end{proofsketch}

Note that the resulting circuit is already in zero-suppressed semantics, and has
no range gates. By continuing as in the proof of Theorem~\ref{thm:main} (for free second-order
variables) or of Theorem~\ref{thm:constant} (for free first-order
variables), we deduce the MSO enumeration results
of~\cite{bagan2006mso,kazana2013enumeration}. Note that, once we have computed
the tree automaton for the query and the circuit representation, our proof of the enumeration result is
completely query-agnostic: we simply apply our enumeration construction on the circuit.
Our proof also does not depend 
on the factorization forest
decomposition theorem of~\cite{colcombet2007combinatorial} used 
by~\cite{kazana2013enumeration}; it
consists only of the simple circuit manipulation and indexing that we presented in
Sections~\ref{sec:normal}--\ref{sec:enum}. Note that the delay is in $O(k \cdot
\card{T})$, with no large hidden constants, and $O(k)$ for first-order variables.

A limitation of our approach is that our memory usage bound includes a
logarithmic factor in~$T$, whereas \cite{bagan2006mso,kazana2013enumeration}
show constant-memory enumeration. However, we can show that the circuit computed
in Theorem~\ref{thm:facrep} satisfies an \emph{upwards-determinism} condition that allow us to
replace the indexing scheme of Theorem~\ref{thm:orindex} (our memory
bottleneck) by a more efficient index. We can thus
reprove the constant-memory enumeration of~\cite{bagan2006mso,kazana2013enumeration} (see
Appendix~\ref{apx:memory}).

\begin{toappendix}
  This appendix section proves Theorem~\ref{thm:facrep}; we later explain in
Appendix~\ref{apx:msoenum} how we
can use this result to deduce MSO enumeration results using our main results. The key
ingredient of the proof of Theorem~\ref{thm:facrep} is our existing construction
for provenance of MSO queries on treelike
instances~\cite{amarilli2015provenance}, using automaton determinism to obtain a
d-DNNF \cite{amarilli2016tractable}. However, for readability, we give a
self-contained proof of this result, which focuses on the case of trees. The
rewritten proof presented here is also useful to show upwards-determinism and
deduce constant memory bounds for enumeration (see Appendix~\ref{apx:memory}).

We introduce some additional notation.
Given a $\Gamma$-tree $T$, 
we will write $\lambda(n)$ to denote the label in~$\Gamma$ of a node $n$ of~$T$;
in other words, the labeling function $\lambda$ is part of the $\Gamma$-tree,
but we do not write it explicitly for brevity.
We will write $\Leaf(T)$ for the set of leaves of a $\Gamma$-tree~$T$. Remember
that we often identify $T$ with its set of nodes when no confusion can ensue.

Further, we will write $\Assign(\phi, T)$ to denote the set of
assignments of an MSO formula $\phi(X_1, \ldots, X_k)$ on $\Gamma$-trees
with free second-order variables on a $\Gamma$-tree $T$, i.e., the set of
assignments $A$ on schema~$\mathbf{X} = X_1, \ldots, X_k$ and domain~$T$ such that $T$ satisfies $\phi(A_1,
\ldots, A_k)$ with the $A_i$ defined as in Definition~\ref{def:msoassign}.

To prove Theorem~\ref{thm:facrep}, somewhat similarly to Sections~3.3.2 and
Sections~3.3.3 of~\cite{bagan2009algorithmes}, it will be useful to assume
that assignments are only considered on leaves, and that the MSO formula only
has one free second-order variable. We will explain how to do this, up to extending
the size of the alphabet.

\begin{definition}
  Let $\Gamma$ be a finite alphabet of labels, let $\mathbf{X} = X_1, \ldots,
  X_k$ be a tuple of second-order variables which we see as labels disjoint
  from~$\Gamma$, and let $\bot$ be a fresh node label. Let $\Gamma^{\mathbf{X}}
  \colonequals \Gamma \cup \{\bot, X_1, \ldots, X_k\}$.

  A \emph{$\Gamma^{\mathbf{X}}$-assignment tree} $(T, \mu)$ is a
  $\Gamma^{\mathbf{X}}$-tree $T$
  and a mapping $\mu$ from
  $\Leaf(T)$ to a domain $\calD$ called the \emph{domain} of the
  assignment tree. We impose the following requirements:
  \begin{itemize}
    \item The labels $X_1, \ldots, X_k$ are used only on leaf nodes, and
  conversely every leaf node carries a label of this set. Formally, we
      require $\Leaf(T) = \{n \in T \mid \lambda(n) \in \{X_1, \ldots, X_k\}\}$.
    \item The mapping $\mu$ is
  computable in constant time, i.e., we can read the image by $\mu$ of a leaf
      node of~$T$ directly from that node.
    \item If $\mu(n) \neq \mu(n')$ for two leaves $n \neq n'$ of~$T$,
  we require that $\lambda(n) \neq \lambda(n')$.
  \end{itemize}
  For any $\Gamma^{\mathbf{X}}$-assignment tree~$T$ and subset $U \subseteq
  \Leaf(T)$,
  the \emph{$\mathbf{X}$-assignment} $\alpha(U)$ of~$U$ is defined as $\{\langle \lambda(n):
  \mu(n) \rangle \mid n \in U\}$. Note that this set is without duplicates
  thanks to our requirement on~$\mu$ above, and it is an assignment on
  schema~$\mathbf{X}$ and domain~$T$.
\end{definition}

We now claim that, up to increase the size of the formula, we can rewrite an MSO
formula so that it has only one free variable and only answers that include
leaves need to be considered:

\begin{lemma}
  \label{lem:simplify}
  For any MSO formula $\phi(X_1, \ldots, X_k)$ on $\Gamma$-trees,
  we can compute an MSO formula $\psi(Y)$ on $\Gamma^{\mathbf{X}}$-trees 
  with one free second-order variable that has the following property:
  given any $\Gamma$-tree $T$, we can compute in linear time a
  \mbox{$\Gamma^{\mathbf{X}}$-assignment} tree
  $(T', \mu)$, whose domain is the nodes of~$T$, such that the
  assignments of~$\phi$ on~$T$ are exactly the
  $\Gamma^{\mathbf{X}}$-assignments of the answers of~$\psi$
  on~$T'$; formally:
  $\Assign(\phi, T) = \{\alpha(U) \mid U \subseteq \Leaf(T'), T' \models
  \psi(U)\}$.
\end{lemma}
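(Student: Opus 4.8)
The plan is to code the $k$-tuple $(B_1,\ldots,B_k)$ of subsets of $T$ that an assignment of~$\phi$ selects as a single set of fresh leaves hung below~$T$, and then to obtain $\psi(Y)$ from $\phi$ by relativizing it to a copy of~$T$ that sits definably inside~$T'$.

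First I would build $T'$ and $\mu$ explicitly. To each node $m$ of~$T$ I attach a \emph{gadget} $G_m$: a binary tree with $\bot$-labelled internal nodes and exactly $k$ leaves, one carrying each label $X_1,\ldots,X_k$. The tree $T'$ is obtained by replacing every node $m$ of~$T$ by a short skeleton: a node $\tilde m$ keeping the label $\lambda(m)\in\Gamma$, whose unique child is a fresh $\bot$-node $s_m$, whose children are the root $g_m$ of~$G_m$ and (when $m$ is internal in~$T$) a further fresh $\bot$-node $r_m$ whose children are the $\tilde c$ for $c$ a child of~$m$ in~$T$. The map $\mu$ sends the $X_i$-leaf of~$G_m$ to~$m$; since this pointer can be stored in the leaf, $\mu$ is readable in constant time. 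As $T$ is binary so is $T'$; its leaves are exactly the gadget leaves, so $\Leaf(T')=\{v\mid \lambda(v)\in\{X_1,\ldots,X_k\}\}$; and two distinct leaves lie either in the same gadget (hence bear distinct labels) or in gadgets of distinct nodes (hence have distinct $\mu$-images), which gives the required consistency condition on~$\mu$. Since only $O(k)$ nodes are added per node of~$T$ and $k$ is fixed, this costs $O(\card{T})$.

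Next I would observe that each leaf of~$T'$ is the $X_i$-leaf of a \emph{unique} gadget $G_m$, so $U\mapsto\alpha(U)$ is a bijection between subsets $U\subseteq\Leaf(T')$ and assignments on $\mathbf X$ and~$T$; concretely, writing $B_i(U)=\{m\in T\mid \text{the }X_i\text{-leaf of }G_m\text{ lies in }U\}$, the tuple associated with $\alpha(U)$ in the sense of Definition~\ref{def:msoassign} is exactly $(B_1(U),\ldots,B_k(U))$. It therefore suffices to produce $\psi(Y)$ such that $T'\models\psi(U)$ iff $T\models\phi(B_1(U),\ldots,B_k(U))$ for all $U\subseteq\Leaf(T')$: the claimed equality $\Assign(\phi,T)=\{\alpha(U)\mid U\subseteq\Leaf(T'),\ T'\models\psi(U)\}$ is then immediate.

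Finally, to build~$\psi$, I would recover $T$ inside $T'$ by an MSO interpretation: its domain is the set of $\Gamma$-labelled nodes (definable as $\bigvee_{a\in\Gamma}P_a$); the child relation of~$T$ is recovered by the path $\tilde{m'}\to s_{m'}\to r_{m'}\to\tilde m$, with $r_{m'}$ identified among the children of~$s_{m'}$ as the one having a $\Gamma$-labelled child and $g_{m'}$ as the other; and the $\Gamma$-labels are inherited verbatim. Relativizing all quantifiers of~$\phi$ to this domain, substituting this definition of the child relation, and replacing each atom $x\in X_i$ by a subformula asserting that $Y$ contains the unique $P_{X_i}$-labelled descendant of the gadget root $g_x$ below~$x$ (a condition on $Y$ and~$x$ only), I obtain the desired $\psi(Y)$; this is the routine relativization argument, as in \cite{bagan2009algorithmes}, Sections~3.3.2--3.3.3, and \cite{kazana2013query}. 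I expect the only delicate point to be the design of the gadgets and the $s_m/r_m/g_m$ skeleton, which exists precisely so that ``the $X_i$-leaf of the gadget attached to~$x$'' is MSO-definable from~$x$ alone, regardless of the unordered shape of~$T$; once that addressing is in place, everything else is bookkeeping.
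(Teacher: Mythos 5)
Your proposal is correct and follows essentially the same strategy as the paper: attach to each node of~$T$ a constant-size gadget of fresh leaves labelled $X_1,\ldots,X_k$ reachable through $\bot$-labelled nodes, let $\mu$ point each such leaf back to its node, and turn $\phi$ into a one-variable formula $\psi(Y)$ that reads membership of these gadget leaves in~$Y$. The only difference is in the bookkeeping: the paper keeps the original edges of~$T$ inside~$T'$ and simply replaces each atom $X_i(x)$ by ``some $Y$-leaf labelled $X_i$ is a descendant of~$x$ through $\bot$-nodes'', whereas you re-route the child relation through fresh $\bot$-nodes (keeping $T'$ binary) and recover $T$ inside $T'$ by an explicit relativization/interpretation, which is a slightly heavier but equally valid (indeed somewhat more careful) way to carry out the same construction.
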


\begin{proof}
  We rewrite $\phi(X_1, \ldots, X_k)$ to an MSO formula $\psi(Y)$ on
  $\Gamma^{\mathbf{X}}$-trees,
  by creating the free second-order variable $Y$ and
  replacing each atom of the form~$X_i(x)$ for a first-order variable~$x$ and
  free second-order variable~$X_i$ by 
  $\exists y ~ Y(y) \wedge X_i(y) \wedge \Phi(x, y)$, where $\Phi$ is a
  constant-sized MSO subformula asserting that $y$ is a descendant of~$x$ and the
  path from $x$ to~$y$ in the tree passes only through nodes labeled~$\bot$.

  We now describe the linear-time rewriting of input trees. We rewrite an input $\Gamma$-tree $T$
  to a $\Gamma^{\mathbf{X}}$-assignment tree~$(T', \mu)$
  consisting of a $\Gamma^{\mathbf{X}}$-tree $T'$ and function~$\mu$
  from~$\Leaf(T')$ to~$T$ (written directly on the leaves to ensure
  constant-time computability).
  We do so by adding, for every node $n$ of~$T$, $k$ fresh descendants $n_1,
  \ldots, n_k$ that we connect to~$n$ by a binary tree of fresh nodes labeled $\bot$.
  Each $n_i$ is labeled with~$X_i$ and mapped by~$\mu$
  to~$n$. It is clear that this process runs in linear time, remembering that
  $k$ is a constant. Further, it is clear that $(T', \mu)$ uses $X_i$ only on
  leaf nodes, and exactly on such nodes; and that $\mu$ satisfies the
  requirement that it does not map to the same element of~$T$ two leaves of~$T'$
  carrying the same label.
  
  Last, it is immediate that the answers of~$\psi$ on~$T'$ map to the
  assignments of~$\phi$ on~$T$ in the prescribed way. Indeed, 
  the rewriting of~$\phi$ to~$\psi$ clearly ensures that $U \subseteq \Leaf(T')$ is an
  answer to~$\psi$ iff $(U_1, \ldots, U_n)$ is an answer to~$\phi$, where $U_i$
  contains the nodes of~$T$ whose fresh descendant labeled $X_i$ and connected
  by a $\bot$-path in~$T'$ is in~$U$. This is the case iff
  the $\mathbf{X}$-assignment of~$U$ on~$\mathbf{X}$ and~$T$ is an assignment
  to~$\phi$.
\end{proof}

Thanks to this result, we can restrict our study to MSO formulae $\psi(Y)$ with only one free
variable, and to answers of~$\psi$ that only contain leaves of the tree. 
We will now state a simple lemma that asserts that the interpretation of a free
second-order variable in an MSO formula can always be read off directly from the
labels of the tree. We first introduce some definitions:

\begin{definition}
  A \emph{leaf valuation} $\nu$ of a $\Gamma$-tree $T$
  is a function mapping the nodes of~$\Leaf(T)$ to~$\{0, 1\}$; we will abuse
  notation and see them as valuations of~$T$ by extending them to map every
  internal node to~$0$.
  We write $\LVal(T)$ for the set of leaf valuations of~$T$.

  We write $\overline{\Gamma}$ to mean $\Gamma \times \{0, 1\}$.
  For $\nu \in \LVal(T)$,
  we denote by $\nu(T)$ the
  $\overline{\Gamma}$-tree obtained from~$T$ by relabeling each node $n$ 
  from $\lambda(n)$ to $(\lambda(n), \nu(n))$.
\end{definition}

\begin{lemma}
  \label{lem:boolform}
  Given an MSO formula $\psi(Y)$ on $\Gamma$-trees
  with one free variable, we can compute an MSO formula $\chi$ on $\overline{\Gamma}$-trees 
  with no free variables (i.e., a Boolean formula) that has the following property: for any
  $\Gamma$-tree $T$, 
  for any leaf valuation $\nu \in \LVal(T)$, 
  the $\overline{\Gamma}$-tree $\nu(T)$ satisfies $\chi$ iff 
  $\{\langle Y : n \rangle \mid \nu(n) = 1\}$
  is an assignment of~$\psi(Y)$.
\end{lemma}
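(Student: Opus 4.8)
The plan is to build $\chi$ by a purely syntactic translation of $\psi$ that internalizes the interpretation of $Y$ into the second component of the node labels. Recall that MSO on $\Gamma$-trees uses the edge predicate together with a unary predicate $P_a$ for each $a \in \Gamma$, and likewise MSO on $\overline{\Gamma}$-trees uses the edge predicate and a unary predicate $P_{(a,b)}$ for each $(a,b) \in \overline{\Gamma}$. Given $\psi(Y)$, I would define $\chi$ as the sentence obtained from $\psi$ by: (i) replacing every atom $P_a(x)$, for $a \in \Gamma$, by $P_{(a,0)}(x) \vee P_{(a,1)}(x)$; (ii) replacing every atom $Y(x)$ by $\bigvee_{a \in \Gamma} P_{(a,1)}(x)$; and (iii) leaving edge atoms, equality atoms, membership atoms for the bound set variables, the Boolean connectives and the first- and second-order quantifiers unchanged. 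Since $Y$ no longer occurs after step~(ii), $\chi$ is a sentence over $\overline{\Gamma}$-trees, it depends only on $\psi$, and it is clearly computable from $\psi$ (indeed in linear time).

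First I would record the structural relationship between $T$ and $\nu(T)$: for any $\Gamma$-tree $T$ and $\nu \in \LVal(T)$, the trees $T$ and $\nu(T)$ have the same node set and the same edge relation, and differ only in their labelings; moreover, writing $B_\nu \colonequals \{n \in T \mid \nu(n) = 1\}$ (a subset of $\Leaf(T)$), a node $n$ is labeled $(\lambda(n), 1)$ in $\nu(T)$ exactly when $n \in B_\nu$, and $(\lambda(n),0)$ otherwise.

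Next I would prove the translation correct by structural induction on the subformulae $\theta$ of $\psi$, showing: for every assignment $\sigma$ of the free first-order variables of $\theta$ to nodes and of the free second-order variables of $\theta$ other than $Y$ to node sets, the structure $\nu(T)$ with $\sigma$ satisfies the translation $\widehat\theta$ of $\theta$ iff $T$ with $\sigma$ extended by $Y \mapsto B_\nu$ satisfies $\theta$. The base cases are immediate: edge and equality atoms hold in $\nu(T)$ iff they hold in $T$ because the two structures share the universe and edge relation; for $P_a(x)$ we use that $\sigma(x)$ carries label $a$ in $T$ precisely when it carries $(a,0)$ or $(a,1)$ in $\nu(T)$; and for $Y(x)$ we use the observation above, namely $\sigma(x) \in B_\nu$ iff $\nu(T)$ labels $\sigma(x)$ with $(\lambda(\sigma(x)),1)$, iff $\bigvee_{a \in \Gamma} P_{(a,1)}(x)$ holds there. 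The inductive steps for $\wedge$, $\vee$, $\neg$ are trivial, and those for $\exists x$ and $\exists Z$ follow from the induction hypothesis together with the fact that first-order (resp.\ second-order) quantifiers range over the same nodes (resp.\ sets of nodes) in $T$ and in $\nu(T)$.

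Applying the claim to $\theta = \psi$ with the empty assignment yields $\nu(T) \models \chi$ iff $T \models \psi(B_\nu)$; and by Definition~\ref{def:msoassign}, $T \models \psi(B_\nu)$ says exactly that $\{\langle Y : n \rangle \mid \nu(n) = 1\}$ is an assignment of $\psi(Y)$, which is the desired statement. I do not expect a genuine obstacle: the only points that require care are the change of signature (keeping edge atoms untouched while relativizing the label atoms) and verifying that quantifiers are unaffected because $T$ and $\nu(T)$ share the same universe, both of which are handled by the observation on the structure of $\nu(T)$.
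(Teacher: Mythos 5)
Your proposal is correct and matches the paper's proof: the paper performs exactly the same syntactic translation, replacing each label atom $L(x)$ by $((L,0))(x) \vee ((L,1))(x)$ and each atom $Y(x)$ by $\bigvee_L ((L,1))(x)$, and concludes that the second label component encodes the interpretation of $Y$. Your structural induction merely spells out the routine correctness argument that the paper leaves implicit.
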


\begin{proof}
  We simply rewrite each atom $L(x)$ for a node predicate $L$ of~$\Gamma$ by
  $((L, 0))(x) \vee ((L, 1))(x)$, and we replace atoms $Y(x)$ that use the free
  second-order variable $Y$ with $\bigvee_L ((L, 1))(x)$ 
  for all node predicates $L$ in~$\Gamma$. It is then clear that the additional
  label of a $\overline{\Gamma}$-tree indicates how the free second variable
  should be interpreted.
\end{proof}

Remembering that we are only considering answers to the input MSO
formula~$\psi(Y)$ that consist of leaf nodes, this lemma allows us to assume a
Boolean formula $\chi$ on $\overline{\Gamma}$-trees and to study the leaf
valuations $\nu$ of an input $\Gamma$-tree~$T$ such that the $\chi$ accepts
$\nu(T)$. Our goal is to obtain a circuit which captures these
leaf valuations (represented as assignments) under zero-suppressed semantics. In
other words, the circuit will have variable gates that correspond to the nodes
of~$T$, and its captured set should be exactly the assignments corresponding to
leaf valuations of~$T$ that make it satisfy~$\chi$.

To compute this circuit, we will be going through tree automata. To
this end, it will be simpler to think of automata that read \emph{ordered}
trees, i.e., there is an order on the children of each internal node; we will
define automata accordingly but will ensure that this order is inessential. It
will also be simpler to assume that input trees are \emph{full}, i.e.,
every node has either $0$ or $2$ children. To do this, we can always add a fresh
symbol $\bot'$ to the alphabet, with its two labeled versions $(\bot', 0)$ and
$(\bot', 1)$, and add fresh leaves to $\Gamma$-trees labeled $\bot'$ to make
them full. One would then rewrite the MSO formula to relativize quantification
to nodes that are not labeled $\bot'$ (i.e., do not quantify over them), and add
a constant-sized formula asserting that these nodes are all labeled~$0$ so that
they never occur in assignments.

We thus define deterministic bottom-up tree automata in the standard way:

\begin{definition}
  A \emph{bottom-up deterministic tree automaton} on $\Gamma$-trees that are
  full and ordered (and binary),
  called a $\Gamma$-bDTA for brevity, is a tuple $A = (Q, F, \iota, \delta)$ where:
\begin{enumerate}
    \item $Q$ is a finite set of \emph{states};
    \item $F$ is a subset of~$Q$ called the \emph{accepting states};
    \item $\iota : \Gamma \to Q$
      is an \emph{initialization function}
      which determines the state of the automaton on a leaf node from the label
      of that node;
    \item $\Delta : \Gamma \times Q^2
      \to Q$
      is a \emph{transition function} which determines the state of the
      automaton on an internal node
      from its label and the state of the automaton on its two children.
\end{enumerate}
  As our trees are unordered, we require that the
  order in which the automaton reads the children of a node never matters, i.e., for
  every $l \in \Gamma$ and $q_1, q_2 \in Q$, we have $\delta(l, q_1, q_2) =
  \delta(l, q_2, q_1)$.

  Given a $\Gamma$-tree $T$, we define the \emph{run} of $A$ on~$T$
  as the function $f : T \to Q$ defined by:
\begin{enumerate}
  \item For each leaf $l$ of~$T$, set $f(l) \colonequals
    \iota(\lambda(l))$;
  \item For each internal node $n$ of~$T$ with children $n_1$ and $n_2$, 
    set $f(l) \colonequals \delta(\lambda(n), f(n_1), f(n_2))$.
\end{enumerate}
We say that the bDTA $A$ \emph{accepts} a $\Gamma$-tree $T$ if, letting $n_\r$
  be the root of~$T$, the run $f$ of~$A$ on~$T$ is such that $f(n_\r) \in F$.
\end{definition}

We now use the well-known fact that Boolean MSO formulae on $\Gamma$-trees can
be rewritten to equivalent $\Gamma$-bDTAs, using the standard translation of
Thatcher and
Wright~\cite{thatcher1968generalized} and standard techniques to determinize the
automaton~\cite{tata}:

\begin{theorem}[\cite{thatcher1968generalized}]
  \label{thm:thatcherwright}
  For any tree alphabet $\Gamma$ and Boolean MSO formula $\chi$ on
  $\Gamma$-trees, we can compute a $\Gamma$-bDTA $A$ such that, for any
  $\Gamma$-tree $T$, we have that $T$ satisfies $\chi$ iff $T$ is accepted by~$A$.
\end{theorem}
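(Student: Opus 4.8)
The plan is to follow the classical route of Thatcher, Wright and Doner: structural induction on the formula, translating each subformula to a deterministic bottom-up tree automaton reading trees annotated with a valuation of the free variables. First I would put $\chi$, and every subformula, into a normal form that uses only monadic second-order quantification: each first-order variable $x$ is replaced by a fresh second-order variable $X_x$ together with a constant-size MSO guard asserting $X_x$ is a singleton, and the atoms involving $x$ (equality, the edge relation, label tests) are rewritten over the $X_x$ accordingly. Thus it suffices to treat formulas $\psi(X_1,\dots,X_m)$ whose only quantifiers are of the form $\exists X$ and whose free variables are all second-order. The invariant I maintain is: for such a $\psi$ I build a $\Gamma$-bDTA $A_\psi$ over the extended alphabet $\Gamma_m \colonequals \Gamma \times \{0,1\}^m$ such that, for every $\Gamma$-tree $T$ and every tuple $(B_1,\dots,B_m)$ of node sets of $T$, the automaton $A_\psi$ accepts the $\Gamma_m$-tree obtained by annotating each node $n$ with the bit vector recording, for each $i$, whether $n \in B_i$, if and only if $T \models \psi(B_1,\dots,B_m)$. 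Applied to the Boolean formula $\chi$, which has $m = 0$ free variables and hence lives over $\Gamma_0 = \Gamma$, this is exactly the statement. I also keep the symmetry condition $\delta(l,q_1,q_2) = \delta(l,q_2,q_1)$ throughout: the automata for atomic formulas never inspect child order (MSO on unordered trees cannot), and the closure operations below all preserve symmetry.

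For the base cases, each atom of the normal form — inclusion $X_i \subseteq X_j$, the singleton guard on some $X_i$, a label test on a singleton, the edge atom between two singletons — is recognized by a deterministic automaton with a constant number of states that checks a local consistency condition on the annotation bits and propagates a constant number of Boolean flags upward (e.g., whether the subtree contains the node marked by $X_i$, whether that node is the subtree's root, and whether the required relation has already been witnessed below). For the Boolean connectives I take products: $A_{\psi_1 \wedge \psi_2}$ and $A_{\psi_1 \vee \psi_2}$ are the product automaton of $A_{\psi_1}$ and $A_{\psi_2}$ with the accepting set defined by intersection (resp.\ union); and since $A_\psi$ is deterministic with a total transition function, $A_{\neg\psi}$ is obtained simply by complementing its set of accepting states.

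The interesting case, and the one I expect to be the main obstacle, is existential quantification $\psi'(X_1,\dots,X_m) = \exists X_{m+1}\, \psi(X_1,\dots,X_{m+1})$. I would start from $A_\psi$ over $\Gamma_{m+1}$ and \emph{project away} the last annotation bit, obtaining a \emph{nondeterministic} bottom-up automaton over $\Gamma_m$ whose runs on an annotated tree correspond exactly to the choices of a set $B_{m+1}$; hence it accepts iff some $B_{m+1}$ makes $\psi$ true. I then determinize this automaton by the standard subset construction for bottom-up tree automata: writing $Q$ for the state set of $A_\psi$, the new state set is $2^Q$, the transition on a label and two subsets collects all states reachable via the nondeterministic transitions from pairs of component states, and a subset is accepting iff it meets the old accepting set. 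This is the step producing the (generally nonelementary, because of quantifier alternation) blow-up in the number of states; it requires the most care but is entirely routine. The only checks are that projection and the subset construction preserve the child-order symmetry, which holds because the underlying nondeterministic transition relation is itself symmetric.

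Composing these constructions along the parse tree of $\chi$ — which is finite, and each of whose nodes triggers one of the effective operations above — yields the desired $\Gamma$-bDTA $A$, which completes the proof; this is the argument of Thatcher and Wright~\cite{thatcher1968generalized}, with the determinization done as in~\cite{tata}. One small point, already handled by the preprocessing described before the statement, is that the construction assumes full binary trees, so no extra work is needed there.
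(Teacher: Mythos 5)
Your proposal is correct and is exactly the standard Thatcher--Wright argument (first-order variables as singleton set variables, annotated alphabets, products and complementation for the Boolean connectives, projection followed by the subset construction for existential quantification, with child-order symmetry preserved throughout), which is precisely what the paper imports by citing Thatcher and Wright for the translation and the standard determinization; the paper gives no independent proof, so your route coincides with the one it relies on.
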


Having fixed our Boolean formula $\chi$ on $\overline{\Gamma}$-trees, let us
compute accordingly such a \mbox{$\overline{\Gamma}$-bDTA}~$A$. Remember that,
given a $\Gamma$-tree $T$,
we want to compute
a circuit whose captured set under zero-suppressed semantics 
is the set of assignments representing leaf
valuations $\nu$ of~$T$ such that $A$ accepts $\nu(T)$. We call this the
\emph{assignment set} of the automaton $A$ on the tree~$T$. 
The following definition is inspired by the provenance notions
in~\cite{amarilli2015provenance}, but changed to work only on leaves.

\begin{definition}
Let $A$ be a $\overline{\Gamma}$-bDTA, and $T$ be a $\Gamma$-tree.
The 
  \emph{assignment set}
  $\alpha(A, T)$ of~$A$ on~$T$ is the set $\{\alpha(\nu) \mid \nu
\in \LVal(T), A \text{~accepts~} \nu(T)\}$. 
\end{definition}

We then give a construction inspired to Proposition~3.1
of~\cite{amarilli2015provenance}, but rephrased in the terminology of factorized
representations, and simplified by limiting the uncertain labels to leaves. We also
observe that the result is deterministic thanks to the determinism of the
automaton, as in Theorem~6.11 of~\cite{amarilli2016tractable}.

\begin{proposition}
\label{prp:provenance}
For any tree alphabet $\Gamma$, given a $\overline{\Gamma}$-bDTA $A = (Q, F, \iota, \delta)$ and a
  full (binary) $\Gamma$-tree $T$,
  we can compute in time $O(\card{A} \cdot \card{T})$ a
  monotone circuit $C$ which is a d-DNNF in zero-suppressed semantics, such that $S(C) = \alpha(A, T)$.
\end{proposition}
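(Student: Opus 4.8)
The plan is to build $C$ directly by a single bottom-up pass over $T$, introducing one gate per pair (node of $T$, state of $A$) that mimics the run of $A$. For every node $n$ of $T$ and state $q \in Q$, I introduce a gate $g_{n,q}$ whose intended captured set $S(g_{n,q})$ (in zero-suppressed semantics, identifying an assignment with the subset of leaves it sets to~$1$, and the leaves of~$T$ with the corresponding singletons so that $S(C)$ matches $\alpha(A,T)$) is the set of leaf valuations of the subtree $T_n$ rooted at~$n$ whose run under~$A$ assigns state~$q$ to~$n$. If $n$ is a leaf with label~$\ell$, then $\nu(T_n)$ labels~$n$ by $(\ell, \nu(n))$, so only $q_0 \colonequals \iota((\ell,0))$ (reached when $\nu(n)=0$) and $q_1 \colonequals \iota((\ell,1))$ (reached when $\nu(n)=1$) are relevant: I make $g_{n,q_0}$ an AND-gate with no inputs, $g_{n,q_1}$ the variable gate for~$n$, and every other $g_{n,q}$ an OR-gate with no inputs; in the degenerate case $q_0=q_1$ I instead use a single OR-gate whose two inputs are a fresh empty AND-gate and the variable gate for~$n$. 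If $n$ is internal with label~$\ell$ and children $n_1,n_2$, the run assigns~$n$ the state $\delta((\ell,0),f(n_1),f(n_2))$ (internal nodes always carry label component~$0$), so I make $g_{n,q}$ an OR-gate whose inputs are, for each pair $(q_1,q_2)\in Q^2$ with $\delta((\ell,0),q_1,q_2)=q$, a fresh AND-gate $h_{n,q,q_1,q_2}$ with inputs $g_{n_1,q_1}$ and $g_{n_2,q_2}$. Finally the output gate of~$C$ is a fresh OR-gate with inputs $g_{n_\r,q}$ for all $q\in F$, where $n_\r$ is the root of~$T$. The resulting circuit is monotone and has no range gates, so the zero-suppressed semantics for decomposable monotone circuits (Remark~\ref{rmk:zerosuppressed}) applies.

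Correctness, i.e.\ that $S(g_{n,q})$ is exactly the intended set, I would prove by bottom-up induction using the inductive characterization of captured sets in Lemma~\ref{lem:botup}. The leaf cases are immediate. For an internal node, decomposability of the gates $h_{n,q,q_1,q_2}$ (checked below) means the union appearing in the AND-clause of Lemma~\ref{lem:botup} is a genuine disjoint product over the leaves of $T_{n_1}$ and $T_{n_2}$; together with the determinism of~$A$, which forces a unique state at each node for a given leaf valuation, this yields that $S(g_{n,q})$ collects exactly the leaf valuations of $T_n$ whose run reaches~$q$ at~$n$. Taking the OR of the $g_{n_\r,q}$ over $q\in F$ then gives $S(C)=\alpha(A,T)$.

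It remains to verify that $C$ is a d-DNNF in zero-suppressed semantics and to bound its size. Decomposability: the only AND-gates of fan-in two are the $h_{n,q,q_1,q_2}$, and $g_{n_1,q_1}$ (resp.\ $g_{n_2,q_2}$) can only reach leaves of $T_{n_1}$ (resp.\ $T_{n_2}$), which are disjoint node sets, so no variable reaches both inputs. Determinism: at the output gate, the sets $S(g_{n_\r,q})$ for $q\in F$ are pairwise disjoint since~$A$ is deterministic; at an OR-gate $g_{n,q}$ for internal~$n$, if $(q_1,q_2)\neq(q_1',q_2')$, say $q_1\neq q_1'$, then $S(g_{n_1,q_1})\cap S(g_{n_1,q_1'})=\emptyset$ by the correctness claim (again using determinism of~$A$), and since an element of $S(h_{n,q,q_1,q_2})$ determines its restriction to the leaves of $T_{n_1}$, the sets $S(h_{n,q,q_1,q_2})$ and $S(h_{n,q,q_1',q_2'})$ are disjoint; hence by the disjoint-union rephrasing of determinism, $g_{n,q}$ is deterministic. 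The degenerate leaf OR-gates are deterministic because the all-zero valuation and the one setting only~$n$ to~$1$ differ. For each node of~$T$ the construction adds $O(\card{Q}^2)$ gates and wires, so $\card{C}=O(\card{Q}^2\cdot\card{T})=O(\card{A}\cdot\card{T})$, and the whole pass runs in that time.

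The main obstacle is not the idea but the bookkeeping: making the captured-set/automaton-run correspondence fully precise through Lemma~\ref{lem:botup}, handling cleanly the facts that internal nodes always carry label component~$0$ and that the leaf case may be degenerate when $\iota((\ell,0))=\iota((\ell,1))$, and accounting for the many empty (unsatisfiable) gates that the naive ``one gate per $(n,q)$'' scheme creates — these are harmless for the d-DNNF conditions but must be addressed when writing out the determinism argument in full.
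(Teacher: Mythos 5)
Your proposal is correct and follows essentially the same construction as the paper: one gate per (node, state) pair capturing the leaf valuations of the subtree whose run reaches that state, intermediate AND-gates over child-state pairs, an OR over accepting states at the root, correctness by bottom-up induction via Lemma~\ref{lem:botup}, decomposability from disjoint subtrees, and determinism inherited from the determinism of the automaton. The only (inessential) difference is that the paper wraps every leaf-level gate in an OR-gate, which handles your degenerate case $\iota((\ell,0))=\iota((\ell,1))$ without a case distinction.
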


Note that $C$ is not an augmented circuit, but as it is decomposable, 
the set $S(C)$ of assignments of~$C$ in zero-suppressed semantics (in the sense of
Definition~\ref{def:zss}, or Lemma~\ref{lem:botup}) is well-defined (recall Remark~\ref{rmk:zerosuppressed}).

\begin{proof}[Proof of Proposition~\ref{prp:provenance}]
  We compute the circuit
  $C$ in a bottom-up fashion on~$T$.
  We consider each node $n$ of~$T$ with label $\lambda(n) \in \Gamma$.

  If $n$ is a leaf node,
  for $b \in \{0, 1\}$ we let $q_b \colonequals \iota((\lambda(n), b))$, and we
  create the following gates in~$C$:
  
  \begin{itemize}
    \item One OR-gate $g^q_n$ for each $q \in Q$ with the following inputs:
    \begin{itemize}
    \item If $q = q_0$, one AND-gate with no inputs.
    \item If $q = q_1$, one variable gate corresponding to the node $n$
    \end{itemize}
  \end{itemize}

  If $n$ is an internal node with children $n_1$ and $n_2$, we create the
  following gates in~$C$:

  \begin{itemize}
    \item One AND-gate $g^{q_1,q_2}_n$ for each $q_1, q_2 \in Q$ whose
      inputs are $g^{q_1}_{n_1}$ and $g^{q_2}_{n_2}$;
    \item One OR-gate $g^q_n$ for each $q \in Q$ with inputs the
      $g^{q_1,q_2}_n$ for each $q_1, q_2 \in Q$ such that $\delta((\lambda(n),
      0), q_1, q_2) = q$.
  \end{itemize}

  The output gate $g_0$ is a $\vee$-gate of the $g^q_{n_\r}$ for
  $q \in F$, where $n_\r$ is the root of~$T$.

  It is clear that the construction of~$C$ runs in the prescribed time bound, because
  the processing that we perform at each node of~$T$ is linear in $\card{A}$,
  specifically, in the table of the transition function $\delta$ of~$A$.

  It is clear that $C$ is decomposable,
  because AND-gates
  that have inputs are of the form $g^{q_1,q_2}_n$ for internal nodes $n$ of~$T$, in
  which case the inputs are $g^{q_1}_{n_1}$ and~$g^{q_2}_{n_2}$. Now, it is
  immediate that, for $i \in \{1, 2\}$,
  only descendant leaves of~$n_i$ can
  appear in $\bigcup S(g^{q_i}_{n_i})$. As these sets of
  descendant leaves for the two sibling nodes~$n_1$ and $n_2$ are disjoint,
  the decomposability condition is indeed satisfied.

  It is now easy to show the following inductive correctness claim on~$C$: 
  for each $q \in Q$ and $n \in T$
  the assignment set $S(g^q_n)$ captured by the gate $g^q_n$ precisely
  describes
  the leaf valuations $\nu$ of the subtree $T_n$ of~$T$ rooted at~$n$ such that the run
  of~$A$ on $\nu(T_n)$ reaches $q$ on the root node~$n$ of~$\nu(T_n)$. Indeed, for a leaf node $n$
  of~$T$ and for $q \in Q$, the assignments corresponding to the possible
  leaf valuations are $\semp$ and $\{n\}$, and we have $\semp \in
  S(g^q_n)$ iff $q = \iota((\lambda(n), 0)$ and $\{n\} \in S(g^q_n)$ iff $q =
  \iota((\lambda(n), 1))$. For an internal node $n$ of~$T$ with children $n_1$
  and~$n_2$ and $q \in Q$,
  an assignment $a$ corresponding to a leaf valuation $\nu$ belongs to $S(g^q_n)$ iff there is a pair $q_1, q_2 \in
  Q$ of states such that $\delta((\lambda(n), 0), q_1, q_2) = q$ and, for each $i \in \{1, 2\}$,
  the assignment $a_i$ of the restriction $\nu_i$ of~$\nu$ to the subtree $T_{n_i}$ rooted at $n_i$ belongs to
  $S(g^{q_i}_{n_i})$. By induction hypothesis, for any $q_1, q_2 \in Q$, for
  each $i \in \{1, 2\}$, this
  happens iff the run of~$A$ on~$\nu(T_{n_i})$ reaches $q_i$ on the root node $n_i$
  of~$\nu(T_{n_i})$.
  Hence, the condition is equivalent to requiring that there is $q_1, q_2 \in Q$
  such that $\delta((\lambda(n), 0), q_1, q_2) = q$ and, for all $i \in \{1, 2\}$, 
  the run of~$A$ on~$\nu(T_{n_i})$ reaches $q_i$ on the root node $n_i$. By
  definition of $\delta$, this is the case iff 
  the run of~$A$ on the subtree $T_n$ of~$T$ rooted at~$n$ reaches~$q$ on
  the root node~$n$. This concludes the inductive proof of the correctness
  claim.

  This clearly implies that the set captured by the decomposable circuit $C$ is the
  union of the assignment sets $\nu$ such that the run of~$A$ on $\nu(T)$
  reaches a final state at the root, i.e., the assignments $\nu(T)$ for which
  $A$ accepts $\nu(T)$, so the construction is correct.
  
  It remains to show that $C$ is deterministic. The only OR-gates that we
  introduce are the $g^q_n$, and the output gate $g_0$. For a leaf
  node $n \in T$, it is clear from their definition that the $g^q_n$ are deterministic. For an
  internal node $n \in T$ with children $n_1$ and $n_2$,
  the fact that the $g^q_n$ are deterministic is thanks to the determinism of
  the automaton: 
  for every valuation $\nu \in \LVal(T)$, by the inductive invariant, for each
  $i \in \{1, 2\}$, there is
  exactly one $q_i \in Q$ such that $\nu \in g^{q_i}_{n_i}$. Hence, there is
  exactly one $q_1, q_2 \in Q$ such that $\nu \in S(g^{q_1,q_2}_n)$. This
  implies that there could not be a gate $g^q_n$ for some $q \in Q$ such that
  $\nu$ is in the captured set of two of its inputs. For the 
  output gate $g_0$, determinism follows again from the determinism of the
  automaton, as for every leaf valuation $\nu$ of~$T$ the automaton $A$ reaches
  exactly one state on the root of~$\nu(T)$. Thus, $C$ is deterministic.
  This concludes the proof.
\end{proof}

This allows us to recap the proof of Theorem~\ref{thm:facrep}:

\begin{proof}[Proof of Theorem~\ref{thm:facrep}]
  Fix the MSO formula $\phi(X_1, \ldots, X_k)$ on~$\Gamma$-trees, compute the MSO formula
  $\psi(Y)$ on~$\Gamma^{\mathbf{X}}$-trees by Lemma~\ref{lem:simplify}, and the
  Boolean MSO formula $\chi$ on $\overline{\Gamma^{\mathbf{X}}}$-trees by
  Lemma~\ref{lem:boolform}. Rewrite $\chi$ to~$\chi'$ by adding one fresh symbol
  $\bot'$ that can be used to make input trees full, relativizing quantification
  to exclude $\bot'$-nodes from consideration but asserting that they never
  carry the label $(\bot', 1)$, and let
  $\overline{\Gamma'}$ be the resulting alphabet, where $\Gamma' =
  \Gamma^{\mathbf{X}} \cup \{\bot'\}$, the union being disjoint as $\bot'$ is
  fresh. Now, use
  Theorem~\ref{thm:thatcherwright} to compute a $\overline{\Gamma'}$-bDTA
  for~$\chi'$.

  Given the input $\Gamma$-tree $T$, rewrite it in linear time following the
  process of Lemma~\ref{lem:simplify} to a $\Gamma^{\mathbf{X}}$-tree $T'$, and
  complete it with $\bot'$-nodes to a $\Gamma'$-tree $T''$ which is binary and full.
  Now, use
  Proposition~\ref{prp:provenance} to compute a deterministic circuit
  $C$ that captures the assignments of~$A$ on~$T''$ and is a d-DNNF in the
  zero-suppressed semantics. Finally, rewrite $C$ in linear
  time to $C'$ by considering each variable gate~$n$ and
  doing the following:
  
  \begin{itemize}
    \item If $n$ is a leaf of $T''$ which is not in~$T'$ (i.e., it was added
      just to make the tree full), replace $n$ with an OR-gate with no
      inputs. Recalling that $\chi'$ enforces that such nodes are never
      annotated with~$1$ in a valuation, this does not change the captured set
      of the circuit.
      Further, it clearly cannot alter
      decomposability, nor can it alter determinism because the captured set
      $S(g)$ of each gate~$g$
      after this transformation are a subset of the set previously captured by
      gate~$g$.
    \item If $n$ is a leaf of~$T''$ which is in~$T'$, recalling that its label
      $\lambda(n)$ 
      is necessarily in $X_1, \ldots, X_k$, replace the singleton $n$ by $\langle \lambda(n): \mu(n)
      \rangle$. By the condition on~$\mu$, this cannot break decomposability or
      determinism, because it is a bijective renaming of the variable gates.
  \end{itemize}

  Hence, the result $C'$ is a monotone d-DNNF in zero-suppressed
  semantics. We now show that
  it captures the assignments of~$\phi$ on~$T$. For the forward direction,
  consider an assignment $A$ of $\phi$ on~$T$. 
  By Lemma~\ref{lem:simplify}, there is a subset $U$ of leaves
  of~$T'$ such that $\alpha(\{\langle Y : n \rangle \mid n \in U\}) = A$ and $T'$ satisfies $\phi(U)$. By
  Lemma~\ref{lem:boolform}, the leaf valuation $\nu_{U}$ obtained from $U$ is such that
  $\nu_U(T')$ satisfies $\chi$, and clearly if we expand $\nu_{U}$ to a valuation
  of~$T''$ that sets to~$0$ the additional leaves of~$T'$ we know that
  $\nu_{U}(T'')$ satisfies $\chi'$. Hence, by Theorem~\ref{thm:thatcherwright}, we
  know that $A$ accepts $\nu_{U}(T'')$, so by Proposition~\ref{prp:provenance}
  the assignment $U$ corresponding to~$\nu_{U}$
  is captured by~$C$. Now, our rewriting ensures that, as $A = \alpha(\{\langle
  Y: n\rangle \mid n \in U\})$, the
  circuit $C'$ captures~$A$.

  For the backward direction, consider an assignment $A$ captured by the
  monotone d-DNNF $C'$.
  Considering its preimage in~$C$, this means that $C$ captures an
  assignment $U$, i.e., a set of leaves of~$T''$, that are all in
  $T'$ and such that $\alpha(\{\langle Y : n \rangle \mid n \in U\}) = A$.
  Now, by Proposition~\ref{prp:provenance}
  we know that, letting $\nu_{U}$ be the leaf valuation of~$T''$ defined by
  setting the nodes of~$U$ to~$1$ and setting all other nodes to~$0$, the
  automaton $A$ accepts $\nu_{U}(T'')$. By
  Theorem~\ref{thm:thatcherwright}, this implies that $\nu_{U}(T'')$ satisfies
  $\chi'$, hence $\nu_{U}(T')$ satisfies~$\chi$, hence, by
  Lemma~\ref{lem:boolform}, $T'$ satisfies $\psi(U)$, and by
  Lemma~\ref{lem:simplify} we know that $T$ satisfies $\phi(A)$. This concludes
  the correctness proof.
\end{proof}

  \subsection{Proof of MSO Enumeration Results}
  \label{apx:msoenum}

  We now explain formally how our results can be used to re-prove the existing
  result of~\cite{bagan2006mso,kazana2013enumeration}, once we have restricted
  to~$\Gamma$-trees. Note that, unlike what we defined in the main text, this
  result does not only focus on data complexity: the goal is to justify the $O(k \cdot
  \card{T})$ claim for the delay given in the main text.

  \begin{theorem}
  \label{thm:msoenum}
    For any fixed tree alphabet $\Gamma$,
    given an MSO formula~$\phi$ with $k$ free variables and a $\Gamma$-tree $T$,
    we can enumerate the answers to~$\phi$ on~$T$ with the following
    complexities:
    \begin{itemize}
      \item the preprocessing has linear data complexity, i.e., 
        it is in $O(f(\card{\phi}) \cdot \card{T})$ for some fixed function~$f$;
      \item the delay is linear in each produced valuation and independent from the query
    except for~$k$, in particular, it is in $O(k \cdot \card{T})$;
  \item the memory usage is linear in the size of the largest valuation and again
    independent from the query except for~$k$, so again in particular in $O(k \cdot
        \card{T})$.
    \end{itemize}
    If all free variables of~$\phi$ are first-order, the delay and memory
    usage are in $O(k)$.
  \end{theorem}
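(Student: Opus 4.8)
The plan is to feed the circuit produced by Theorem~\ref{thm:facrep} directly into the preprocessing and enumeration pipeline of Sections~\ref{sec:normal}--\ref{sec:enum}, skipping Proposition~\ref{prp:reducing} (we are already in zero-suppressed semantics), and, for the memory bound, replacing the OR-index by the sharper index of Appendix~\ref{apx:memory}.

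First I would apply Theorem~\ref{thm:facrep} to the fixed formula $\phi$ and the input $\Gamma$-tree $T$: this yields, in time $O(\card{A}\cdot\card{T})$, a monotone d-DNNF $C$ in zero-suppressed semantics with $S(C)$ equal to the set of assignments of $\phi$ on $T$, where $\card{A}$ is the size of the bDTA computed from $\phi$ by Theorem~\ref{thm:thatcherwright}. As $\card{A}$ depends only on the query, setting $f(\card{\phi}) \colonequals \card{A}$ (a fixed, in general nonelementary, function) shows the preprocessing is linear in data complexity, and also $\card{C} = O(f(\card{\phi})\cdot\card{T})$. The circuit $C$ has no range gates and carries no compatible order, but it is decomposable, so by Remarks~\ref{rmk:zerosuppressed} and~\ref{rmk:rangegates} the remaining steps apply verbatim: after dispatching the two degenerate cases --- $S(C)=\emptyset$, detected in linear time, where nothing is enumerated, and $S(C)=\ssemp$, where the unique answer is the all-empty $k$-tuple, which we output directly --- I would normalize $C$ with Proposition~\ref{prp:normalize} to a normal d-DNNF $C'$ with $S(C')=S(C)\setminus\ssemp$ in linear time, build its OR-index (Theorem~\ref{thm:orindex}), and enumerate $S(C')$ with Proposition~\ref{prp:enum}, producing each assignment $t$ with delay $O(\card{t})$. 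Converting $t$ (a set of singletons $\langle X_i : n\rangle$) into the corresponding $k$-tuple of subsets of $T$ costs $O(\card{t})$, so the bounds are unaffected. Since the delay of Proposition~\ref{prp:enum} is $O(\card{t})$ with constants not depending on $C'$, it is independent of the query except through $k$; as any assignment has Hamming weight at most $k\cdot\card{T}$, the delay is in $O(k\cdot\card{T})$. When the free variables of $\phi$ are first-order, I would view them as second-order variables forced to be singletons by a constant-sized MSO conjunct and then follow the route of Theorem~\ref{thm:constant}, inserting Proposition~\ref{prp:homogenize} with bound $k$ after Theorem~\ref{thm:facrep} (this stays within $O(f(\card{\phi})\cdot\card{T})$ and, since $k\le\card{\phi}$, absorbs the $k^2$ factor into $f$); every assignment then has Hamming weight $k$, so the delay is $O(k)$.

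The remaining issue is the memory bound, and this is where the real work lies: the OR-index of Theorem~\ref{thm:orindex} contributes $O(\card{\calO}\cdot\log\card{C})$, carrying an unwanted $\log\card{T}$ factor. To remove it I would verify that the circuit built in Theorem~\ref{thm:facrep} satisfies the upwards-determinism condition of Appendix~\ref{apx:memory} and that this property survives the normalization of Proposition~\ref{prp:normalize}; this lets us substitute the more frugal index of Appendix~\ref{apx:memory} for the OR-index, bringing the memory usage down to $O(\card{\calO})$, i.e.\ $O(k\cdot\card{T})$ for free second-order variables and $O(k)$ for free first-order variables. I expect this verification of upwards-determinism --- rather than the otherwise routine assembly of the earlier results --- to be the main obstacle.
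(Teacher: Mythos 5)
Your proposal is correct and follows essentially the same route as the paper: build the circuit with Theorem~\ref{thm:facrep}, feed it directly into the pipeline of Sections~\ref{sec:normal}--\ref{sec:enum} (skipping Proposition~\ref{prp:reducing}, justified by the remarks on decomposable circuits without range gates), and obtain the memory bound by replacing the OR-index with the upwards-determinism machinery of Appendix~\ref{apx:memory}. The verification you flag as the main obstacle is exactly what the paper supplies as Claim~\ref{clm:facrepud} together with the preservation arguments in the proof of Theorem~\ref{thm:linmem}, so invoking those completes your argument; your extra homogenization step in the first-order case is harmless but unnecessary, since after the standard translation all satisfying assignments already have Hamming weight exactly $k$.
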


\begin{proof}
  For the preprocessing phase, we use Theorem~\ref{thm:facrep} to compute in
  linear-time a monotone circuit $C$ which is a d-DNNF in zero-suppressed
  semantics and captures the assignments
  of~$\phi$ on~$T$.
  Note that we have not shown a compatible order for~$C$, but it has no range
  gates, so we know by Remark~\ref{rmk:rangegates}
  that we can apply the results of
  Section~\ref{sec:normal} to the circuit, and the same is immediately true for
  Sections~\ref{sec:multitrees} and~\ref{sec:enum}.

  We further know that this circuit is upwards-deterministic by Claim~\ref{clm:facrepud}
  (see Appendix~\ref{apx:memory}), so we can apply the linear-time preprocessing
  scheme of Theorem~\ref{thm:linmem} as well as its enumeration scheme.
  This runs in delay linear in each
  assignment, which is always in $O(k \cdot \card{T})$, i.e., constant delay
  ($O(k)$) if the size of assignments is constant, which is in
  particular the case if the free variables of~$\phi$ are second-order translations of
  free first-order variables. We then rewrite each assignment (set of
  singletons) to the answer that it represents, in time linear in
  each assignment. The memory usage is linear in each assignment thanks to
  Theorem~\ref{thm:linmem}.
\end{proof}
\end{toappendix}

\subparagraph*{Factorized representations.}
\begin{toappendix}
  \subsection{Factorized Representations}
\end{toappendix}
Our second application is the \emph{factorized representations}
of~\cite{olteanu2015size}, a concise
way to represent database relations \cite{abiteboul1995foundations} by
``factoring out'' common parts. The atomic factorized relations are the empty relation
$\emptyset$, the relation $\langle \rangle$ containing only the empty tuple,
and singletons $\langle A : a \rangle$ where $A$ is an attribute and $a$ is an
element. Larger relations are built using the relational union and Cartesian product operators on
sub-relations with compatible schemas.
For
example, $\langle A_1 : a_1 \rangle \times (\langle A_2 : a_2
\rangle \cup \langle A_2 : a_2' \rangle)$ is a factorized
representation of the relation on attributes $A_1, A_2$ containing 
the tuples $(a_1, a_2)$ and $(a_1, a_2')$.
A \emph{d-representation} is a factorized representation given as a DAG, to
reuse common sub-expressions. We show that
d-representations can be seen as circuits in
zero-suppressed semantics:

\begin{lemmarep}
  \label{lem:factorized}
  For any d-representation $D$, let $C$ be the monotone circuit obtained by replacing $\times$
  and $\cup$ by AND and OR, replacing 
  $\emptyset$ and $\langle \rangle$ by AND-gates and OR-gates with no inputs,
  and keeping singletons as variables. Then all AND-gates of~$C$
  are decomposable, and $S(C)$ (defined as in
  Section~\ref{sec:reducing}) is exactly the
  database relation represented by~$D$.
\end{lemmarep}

\begin{proof}
  The fact that $C$ is decomposable, i.e., a DNNF, is thanks to the requirement
  on d-representations which imposes that gates have a schema, with union always
  having input gates of the same schema, and product always having input gates
  of disjoint schemas. This requirement clearly disallows in particular that
  some singleton has a path to two different inputs to a product gates.
  Note that $C$ is not an augmented circuit, but as it is decomposable, its set
  of assignments $S(C)$ in zero-suppressed semantics (in the sense of
  Definition~\ref{def:zss}, or Lemma~\ref{lem:botup}) is well-defined (recall
  Remark~\ref{rmk:zerosuppressed}).
  The claimed result on~$S(C)$ then follows immediately from Lemma~\ref{lem:botup}. 
\end{proof}

Hence, our results in Theorem~\ref{thm:constant} can be rephrased in terms of
factorized representations:

\begin{theoremrep}
  \label{thm:enumfact}
  The tuples of a deterministic d-representation $D$ over a schema $\calS$ can
  be enumerated with linear-time preprocessing, delay $O(\card{\calS})$, and memory
  $O(\card{\calS}\log\card{D})$.
\end{theoremrep}

\begin{proof}
  Let $D$ be a deterministic d-representation, and let $C$ be the corresponding
  monotone circuit as in the statement of Lemma~\ref{lem:factorized}, such that the set $S(C)$
  captured by $C$ is the relation represented by~$D$: we know that $C$ is
  decomposable.
  The circuit $C$ is not exactly deterministic
  because the determinism requirement of~\cite{olteanu2015size} only requires
  that they are no duplicate tuples in the captured set of the 
  output gate $g_0$. However, it is easy to see that this requirement implies that,
  for every OR-gate $g$, there are no duplicates when computing $S(g)$, unless
  $g$ has no directed path to~$g_0$ or it is ``absorbed'' later in the circuit
  (i.e., we only use its value conjoined with gates capturing~$\emptyset$).
  Hence, we rewrite $C$ to~$C'$ by removing gates with no directed path to~$g_0$, and by
  computing bottom-up in linear time which gates capture exactly~$\emptyset$ (as
  in Lemma~\ref{lem:empprune}), and replace them by OR-gates with no inputs:
  this does not change the set captured by~$C$ (indeed, the sets captured by all
  remaining gates), and $C'$ is still decomposable. Now, it is clear that the
  determinism requirement of~\cite{olteanu2015size} on $S(g_0)$ in~$C$, hence
  on~$S(g_0)$ in~$C'$, imposes that all OR-gates are deterministic, because any
  violation of determinism on a gate~$g$ would imply a duplicate in~$S(g)$,
  hence in $S(g_0)$, following a directed path from $g$ to~$g_0$, and observing
  that the duplicate can never be lost at an OR-gate along the path, or at an
  AND-gate (this uses the fact that no gate captures $\emptyset$). Hence, $C'$
  is a d-DNNF in zero-suppressed semantics such that $S(C')$ is the relation
  represented by~$D$.

  We note that $C'$ does not have a compatible
  order, but again it is decomposable and does not have range gates, so the process in
  Sections~\ref{sec:normal}--\ref{sec:enum} still applies to it (see in
  particular Remark~\ref{rmk:rangegates}), because the process does not
  introduce range gates, and does not use the order except to define the
  semantics of range gates and to guarantee decomposability.
  So we can simply use Proposition~\ref{prp:normalize} to
  compute a normal monotone circuit~$C''$ capturing the same set as $C'$ (it is
  not necessary to apply homogenization because $C'$ already captures tuples of
  the correct weight), we apply Theorem~\ref{thm:multitreeidx}, and last we
  enumerate following Proposition~\ref{prp:enum}. We handle the special cases of~$\ssemp$
  and of circuits capturing~$\emptyset$ like in the proof of
  Theorem~\ref{thm:main} in Appendix~\ref{apx:putting}. Thus, we can enumerate
  the tuples of~$C'$, hence of~$D$, with linear-time preprocessing, delay in
  $O(\card{\calS})$, and memory $O(\card{\calS}\log\card{D})$ as in
  Theorem~\ref{thm:main}.
\end{proof}

Note that the existing enumeration result on factorized representations
(Theorem~4.11 of~\cite{olteanu2015size}) achieves a constant memory bound,
unlike ours. However, this existing result applies only to
deterministic d-representations that are \emph{normal} (Definition~4.6
of~\cite{olteanu2015size}), whereas ours does not assume this.
Normal d-representations are intuitively pruned and
collapsed circuits where \emph{no OR-gate is an input to an
OR-gate}, which avoids, e.g., the need for the constructions of 
Section~\ref{sec:multitrees}. Observe that the circuits that we build for MSO
queries are \emph{not} normal in this sense, so we cannot prove
Theorem~\ref{thm:facrep} directly from Theorem~4.11 of~\cite{olteanu2015size}.

\begin{toappendix}
  \section{Constant-Memory Enumeration for Upwards-Deterministic Circuits}
\label{apx:memory}
Remember that our enumeration results of Theorem~\ref{thm:main} and
Theorem~\ref{thm:constant} use memory in $O(\card{\calO} \log \card{\calI})$, where
$\card{\calI}$ is the size of the input circuit and $\card{\calO}$ is the size of each
output. The factor in~$\card{\calO}$, which is constant for constant-sized
outputs, is obviously difficult to avoid. However, the same is not true of the logarithmic factor in the
input, which comes from the indexing construction on
multitrees of Theorem~\ref{thm:orindex} in Section~\ref{sec:multitrees}. 

In this appendix, we explain how the memory usage of the enumeration phase of
Theorem~\ref{thm:main} and Theorem~\ref{thm:constant} can be improved to
$O(\card{\calO})$, under an
additional hypothesis on the input circuit which allows us to bypass
Theorem~\ref{thm:orindex}. We first present this condition, called
\emph{upwards-determinism}, and claim that enumeration for such circuits can be performed using
memory linear in the size of each valuation (Theorem~\ref{thm:linmem}). Second, we show that the circuits produced for 
MSO enumeration in Theorem~\ref{thm:facrep} are upwards-deterministic. Third, we
prove Theorem~\ref{thm:linmem}.

\subsection{Upwards-Deterministic Circuits}

We define upwards-deterministic circuits in the following way:

\begin{definition}
  A wire $(g, g')$ of~$C$ is \emph{pure} if $g'$ is an OR-gate, or if $g'$ is an
  AND-gate and all its other inputs are 0-valid.
  A gate $g$  is \emph{upwards-deterministic} if $g$ is unsatisfiable or there is at
  most one gate $g'$ such that $(g, g')$ is a pure wire of~$C$.
  We call $C$ \emph{upwards-deterministic} if every AND-gate and OR-gate in~$C$ is
  upwards-deterministic.
\end{definition}

In particular, when a wire $(g, g')$ of a monotone circuit is pure,
it intuitively means that $g'$ evaluates to~$1$ whenever $g$ does, and $S(g)
\subseteq S(g')$ in zero-suppressed semantics. Upwards-determinism imposes that
$g$ is an input to at most one such $g'$.

If we assume upwards-determinism, we can show the analogue of our main
results of Theorem~\ref{thm:main} and Theorem~\ref{thm:constant}, but with
memory usage linear in each output. Namely:

\begin{theorem}
  \label{thm:linmem}
  Given a structured upwards-deterministic d-DNNF $C$ with its v-tree $T$, we
  can enumerate its satisfying assignments with linear-time preprocessing and
  delay and memory usage linear in each valuation.
  Further, for any $k \in \NN$, we can enumerate the satisfying assignments of
  Hamming weight $\leq k$ with preprocessing $O(k^2 \card{C})$ and with delay and
  memory usage in $O(k)$, i.e., constant delay and constant memory.
\end{theorem}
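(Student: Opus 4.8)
The plan is to re-use the entire construction behind Theorems~\ref{thm:main} and~\ref{thm:constant} verbatim, changing only the one ingredient responsible for the logarithmic memory factor, namely the OR-index of Theorem~\ref{thm:orindex} built via the multitree scheme of Theorem~\ref{thm:multitreeidx}. Indeed, in the enumeration of Proposition~\ref{prp:enum} every other contribution to memory is already linear in the current output: the state of the compressed-trace enumeration of Proposition~\ref{prp:enumtraces} is essentially a subset of the gates of the compressed trace being produced, which has size $O(\card{\nu})$ by Lemma~\ref{lem:compress}, and the valuation enumeration of Proposition~\ref{prp:enumsol} keeps one combination pointer per leaf interval, again $O(\card{\nu})$ overall. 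The only offending term is that, while a compressed trace with $O(\card{\nu})$ OR-gates is being enumerated, each such OR-gate carries a stack of size $O(\log \card{C})$ from Theorem~\ref{thm:orindex}. So it suffices to re-establish Theorem~\ref{thm:orindex} with $O(1)$ memory per OR-gate query, under the upwards-determinism hypothesis.

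First I would track upwards-determinism through the preprocessing of Sections~\ref{sec:reducing}--\ref{sec:normal}. The point is that the notion of \emph{pure wire} was tailored so that each surgery in the pipeline is ``local'' with respect to it: a fan-in-one AND-gate's input wire and an exit OR-gate's input wire are pure, so upwards-determinism already constrains them, and one checks step by step (completion and monotonization in Proposition~\ref{prp:reducing}, homogenization in Propositions~\ref{prp:homogenize}/\ref{prp:homogenize2}, $\emptyset$- and $\semp$-pruning, reduction to arity two, collapsing, and making discriminative) that none of these can make a satisfiable gate feed two different pure parents. For a \emph{normal} circuit (Definition~\ref{def:normalform}) there are no $0$-valid gates and no fan-in-one AND-gates, so a wire $(g,g')$ is pure iff $g'$ is an OR-gate; hence upwards-determinism of a normal d-DNNF says exactly that every gate is an input of at most one OR-gate. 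Equivalently, in each OR-component $K$ (Definition~\ref{def:orcomponent}) every vertex has out-degree at most one in $K$, and since $K$ is connected this forces $K$ to be an \emph{in-tree}, refining the multitree statement of Lemma~\ref{lem:multitrees}. The exits of a gate $g$ in~$K$ are then precisely the leaves of the subtree of~$K$ hanging below~$g$.

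Next I would build the replacement index. After reversing, $K$ becomes an arborescence $\vec K$ rooted at the old root of~$K$, whose leaves are the exits of~$K$; compute in linear time a DFS pre-order of $\vec K$, so that the descendants of any node form a contiguous interval, store for each node $g$ pointers to the first and last \emph{leaf} of its interval, and thread all leaves of~$\vec K$ by a successor pointer in that order. To enumerate the exits of an OR-gate $g$, output $\mathrm{first\text{-}leaf}(g)$ and then repeatedly follow the successor pointer, stopping once $\mathrm{last\text{-}leaf}(g)$ has been output: this is constant delay and uses two pointers, i.e.\ constant memory. Running this over all OR-components is linear-time preprocessing, and gives the analogue of Theorem~\ref{thm:orindex} with $O(1)$ memory.

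Finally I would redo the accounting. Substituting this index into the enumeration of Propositions~\ref{prp:enumtraces} and~\ref{prp:enum} (and handling the special cases $S(C) = \emptyset$ and $S(C) = \ssemp$ exactly as in the proof of Theorem~\ref{thm:main}), each OR-gate in the current compressed trace now contributes $O(1)$ memory, so the whole state is $O(\card{\nu})$, while the delay bound $O(\card{\nu})$ is unchanged; together with the unchanged linear preprocessing this is the first half of the statement. For the bounded case we additionally apply homogenization (Proposition~\ref{prp:homogenize}) with parameter $k$ before normalization, as in Theorem~\ref{thm:constant}: this bounds every compressed trace by $O(k)$, yielding preprocessing $O(k^2 \card{C})$ and delay and memory $O(k)$. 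I expect the main obstacle to be precisely the bookkeeping of the second step---confirming that no preprocessing transformation creates a gate with two pure parents, the fiddly point being the $0$-validity of $\geq 0$-range gates before pruning has been performed; if some transformation fails to preserve upwards-determinism verbatim, the fallback is to carry instead the weaker invariant ``every OR-component is an in-tree'', which is all the new index actually needs.
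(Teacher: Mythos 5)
Your overall route is the paper's route: preserve upwards-determinism through the preprocessing, observe that on a normal circuit pure wires are exactly wires into OR-gates so each OR-component is a reversed tree, replace the multitree index of Theorem~\ref{thm:orindex} by a first-leaf/last-leaf/next-leaf pointer scheme in prefix order (this is literally the paper's Proposition~\ref{prp:treeidx}), and redo the memory accounting, which is indeed dominated elsewhere by $O(\card{\nu})$. The gap is in the step you defer. First, ``reuse the constructions verbatim'' does not work: the standard arity-two rewriting (balanced trees of same-type gates, Lemma~\ref{lem:arity2}) can destroy upwards-determinism before pruning. If an AND-gate has inputs $g_1,g_2,g_3$ with $g_2,g_3$ $0$-valid and $g_1$ not, then in the original circuit the wire from $g_2$ is impure, but after grouping $g_2$ with $g_3$ under a fresh binary AND-gate the wire from $g_2$ becomes pure, so $g_2$ may acquire a second pure parent. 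The paper's Claim~\ref{clm:arity2ud} has to modify the construction (an unbalanced chain with the $0$-valid inputs attached highest) precisely to avoid this; similar bespoke case analyses are needed for completion (Claim~\ref{clm:reducingud}) and homogenization (Claim~\ref{clm:homogud}), and this is the bulk of the paper's proof, not routine bookkeeping.

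Second, your fallback invariant is too weak to carry through the pipeline as stated. ``Every OR-gate is an input of at most one OR-gate'' is not preserved by the collapsing step: merging a fan-in-one AND-gate $a$ with its input $h$ transfers $a$'s parents to $h$, so an OR-gate $h$ feeding one OR-gate directly and another OR-gate through $a$ ends up with two OR-parents. Full upwards-determinism excludes this configuration beforehand (the wire $(h,a)$ is pure since $a$ has no other inputs, so $h$ would already have two pure parents), which is exactly why the paper tracks purity through AND-gates rather than only counting OR-parents; note also that the weak invariant does not even hold of the input circuit, since unsatisfiable gates and variable gates are exempt from the upwards-determinism condition. So to complete your proof you must either carry full upwards-determinism (with the modified arity-two rewriting and the per-step case analyses), or formulate and verify a genuinely invariant intermediate property; the one you propose is not it.
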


\begin{proofsketch}
  We show that upwards-determinism can be preserved in our preprocessing
  in Sections~\ref{sec:reducing}--\ref{sec:normal}. Once the circuit is normal,
  upwards-determinism ensures that each OR-gate is the input to at most one
  OR-gate, so OR-components in Section~\ref{sec:multitrees} are actually
  reversed trees,
  and we can replace Theorem~\ref{thm:multitreeidx} with a much simpler
  constant-memory indexing scheme.
\end{proofsketch}

The complete proof of Theorem~\ref{thm:linmem} is technical, and presented in
Appendix~\ref{apx:linmemproof}.

\subsection{Upwards-Deterministic Circuits for MSO Enumeration}

We now show the claim that Theorem~\ref{thm:facrep} produces
circuits whose underlying circuit is upwards-deterministic. This implies
the constant memory bound for MSO enumeration in Theorem~\ref{thm:msoenum},
using Theorem~\ref{thm:linmem}.

\begin{claim}
  \label{clm:facrepud}
  Theorem~\ref{thm:facrep} produces upwards-deterministic circuits.
\end{claim}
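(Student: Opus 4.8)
The plan is to unwind the construction behind Theorem~\ref{thm:facrep}: the circuit it returns is the provenance circuit $C$ of Proposition~\ref{prp:provenance}, post-processed only by renaming variable gates and by turning some leaf variable gates into input-less (hence unsatisfiable) OR-gates. So I would first verify that the circuit $C$ of Proposition~\ref{prp:provenance} is upwards-deterministic, and then check that the post-processing preserves this. Since upwards-determinism is a condition on every AND-gate and OR-gate, I would go through the two gate kinds in turn.

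The AND-gates of $C$ are the $g^{q_1,q_2}_n$ at internal nodes $n$ with children $n_1,n_2$, plus the input-less AND-gates feeding the leaf OR-gates; in both cases such a gate has a \emph{single} outgoing wire — for $g^{q_1,q_2}_n$ it is the wire to $g^q_n$ with $q=\delta((\lambda(n),0),q_1,q_2)$, forced because $\delta$ is a function — and this wire is pure since its head is an OR-gate, so the condition holds trivially. The interesting case is an OR-gate $g^q_n$. If $n$ is the root $n_\r$, the only possible outgoing wire is the one to the output OR-gate $g_0$ (present exactly when $q\in F$), which is pure, so there is at most one. If $n$ is not the root, say it is the first child $m_1$ of its parent $m$ (the other case being symmetric), then the outgoing wires of $g^q_n$ go exactly to the AND-gates $g^{q,q'}_m$ for $q'\in Q$, and such a wire is pure iff the sibling input $g^{q'}_{m_2}$ is $0$-valid, i.e.\ iff $\semp\in S(g^{q'}_{m_2})$. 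Here I would invoke the correctness invariant established inside the proof of Proposition~\ref{prp:provenance}: $S(g^{q'}_{m_2})$ describes the leaf valuations $\nu$ of $T_{m_2}$ whose run of $A$ on $\nu(T_{m_2})$ reaches $q'$ at $m_2$, and $\semp$ corresponds to the all-zero leaf valuation. Since $A$ is \emph{deterministic}, its run on the fixed all-zero labeling of $T_{m_2}$ reaches exactly one state at $m_2$; hence exactly one $q'\in Q$ makes $g^{q'}_{m_2}$ $0$-valid, so $g^q_n$ has exactly one pure outgoing wire. The output gate $g_0$ has no outgoing wire at all. This shows $C$ is upwards-deterministic.

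For the post-processing, I would argue that neither operation can break the property. Renaming variable gates changes nothing at AND- and OR-gates. Replacing a leaf variable gate by an input-less OR-gate leaves the AND/OR topology, and hence all outgoing wires, unchanged; any gate that becomes unsatisfiable as a result is automatically upwards-deterministic; and, crucially, the operation does not change which gates are $0$-valid — a partial trace witnessing $0$-validity sets all variables to $0$, so it contains no variable-gate leaf, it survives the replacement, and no new partial traces can appear — so purity of wires is exactly preserved.

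The only non-routine point is the equivalence ``a wire into an AND-gate is pure $\iff$ the sibling input is $0$-valid'', combined with the uniqueness of the state reached by the deterministic bottom-up automaton on the all-zero labeling; once these are isolated, upwards-determinism of every OR-gate follows immediately, essentially mirroring the determinism argument already used for $C$ in Proposition~\ref{prp:provenance}. I expect the bookkeeping around the root, the input-less AND-gates, and the post-processing to be straightforward but worth stating explicitly.
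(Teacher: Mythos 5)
Your proposal is correct and follows essentially the same route as the paper: the heart of both arguments is that, for a non-root $g^q_n$ with sibling subtree $T_{n_2}$, automaton determinism together with the inductive invariant of Proposition~\ref{prp:provenance} yields exactly one $q'$ with $\semp \in S(g^{q'}_{n_2})$, hence exactly one pure outgoing wire. Your treatment of the post-processing (exact preservation of $0$-validity) is a slightly sharper version of the paper's subset/superset argument, but not a different method.
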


\begin{proof}
  The input rewriting that we perform in the proof of Theorem~\ref{thm:facrep}
  clearly cannot influence the fact that the circuit is
  upwards-deterministic. Indeed, first, the bijective renaming of inputs clearly has no
  effect. Second replacing some inputs by gates capturing $\emptyset$ ensures
  that the captured set of each gate is a subset of what it was before the
  rewriting: so 
  the set of unsatisfiable gates is a superset of what it was initially,
  and the set of 0-valid gates is a subset of what it was initially, thus
  any violation of upwards-determinism in the initial circuit implies the
  existence of a violation in the original circuit. From this, to show the
  claim for Theorem~\ref{thm:facrep}, it suffices to
  show that the circuits produced in
  Proposition~\ref{prp:provenance} are upwards-deterministic. To show this,
  consider its application to an automaton with state set $A$ and to a
  $\Gamma$-tree $T$, and let $C$ be the resulting circuit.

  In the construction, the only gates that are used as input to multiple
  gates are the $g^q_n$ for $q \in Q$ and $n \in T$ when $n$ is not the root
  of~$T$. Let $n'$ be the parent of~$n$ in~$T$, and assume that $n$ is the
  first child of~$n'$ in~$T$: the proof if $n$ is the second child is
  symmetric. Let $n_2$ be the second child of~$n'$. The gates of~$C$ that have $g^q_n$ as an input are then the
  $g^{q,q_2}_{n'}$ for~$q_2 \in Q$, and the other input to each of them is
  $g^{q_2}_{n_2}$. Now, by determinism of the automaton,
  using the inductive invariant in the proof of
  Proposition~\ref{prp:provenance}, we know that there is exactly one $q_2$
  such that $\semp \in g^{q_2}_{n_2}$, i.e., $g^{q_2}_{n_2}$. Hence, the only outgoing wire
  of~$g^q_n$ which is pure is the one to~$g^{q,q_2}_{n_2}$, so $g^q_n$ does
  not violate upwards-determinism. This concludes the proof.
\end{proof}

\subsection{Proof of Theorem~\ref{thm:linmem}}
\label{apx:linmemproof}

  To show Theorem~\ref{thm:linmem}, we revisit the proofs of
  Sections~\ref{sec:reducing}--\ref{sec:enum}. Specifically:

  \begin{enumerate}
    \item We must show that the preprocessing steps of Sections~\ref{sec:reducing}--\ref{sec:normal}
  preserve upwards-determinism. We must specifically show this for the reduction
      to zero-suppressed semantics (Proposition~\ref{prp:reducing}),
      the homogenization (Proposition~\ref{prp:homogenize}), and the
      normalization (Proposition~\ref{prp:normalize}).
\item We must show that we can replace the use of Theorem~\ref{thm:multitreeidx} in
  Section~\ref{sec:multitrees} by a constant-memory indexing result. To do this,
      we can use the
      assumption that OR-components are reversed trees (i.e., rooted trees,
      where edges are reversed and go from the leaves to the root), because this is guaranteed by
      upwards-determinism on normal circuits. Indeed, a gate $g$ with two
      different children in an OR-component would necessarily be satisfiable
      (because a normal circuit is $\emptyset$-pruned), and its two outgoing
      wires in the OR-component would be pure.
    \item We must show that enumeration in Section~\ref{sec:enum} with the indexes of
  Proposition~\ref{prp:treeidx} uses linear memory.
  \end{enumerate}

  We first show the second point:

  \begin{proposition}
    \label{prp:treeidx}
    Given a reversed tree $T$, we can compute in linear time a data structure allowing
    us to perform the following: given $n \in T$, enumerate in constant delay and
    constant memory the leaves of~$T$ that have a directed path to~$n$.
  \end{proposition}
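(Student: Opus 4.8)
The plan is to reduce to the classical depth-first-interval trick. Write~$r$ for the root of the reversed tree~$T$ — the unique node with no outgoing edge — and, for $n\in T$, let $T_n$ be the set of nodes having a directed path to~$n$ (the descendants of~$n$ in the rooted tree obtained by ignoring edge directions). Since a leaf of~$T$ is a node with no incoming edge, the leaves of~$T$ with a directed path to~$n$ are exactly the leaves lying in~$T_n$; so it suffices to enumerate the leaves of the subtree rooted at~$n$.

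For the preprocessing I would first build the child relation (reverse the adjacency lists), locate~$r$, and run a depth-first traversal of~$T$ seen as the rooted tree with root~$r$. Numbering the leaves $\ell_1,\ell_2,\ldots$ in order of discovery and linking them by pointers $\mathrm{next}(\cdot)$, I would then compute, in one further linear-time post-order pass, for every node~$m$ two pointers $\mathrm{first}(m)$ and $\mathrm{last}(m)$ to the first and last leaf of~$T_m$ in this order: if $m$ is a leaf then $\mathrm{first}(m)=\mathrm{last}(m)=m$, and otherwise, with children $c_1,\ldots,c_d$ in traversal order, set $\mathrm{first}(m)\colonequals\mathrm{first}(c_1)$, $\mathrm{last}(m)\colonequals\mathrm{last}(c_d)$, and $\mathrm{next}(\mathrm{last}(c_i))\colonequals\mathrm{first}(c_{i+1})$ for $1\le i<d$. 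Everything here is clearly in time $O(\card{T})$.

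The key invariant, shown by bottom-up induction on~$m$, is that the leaves of~$T_m$ are exactly $\mathrm{first}(m),\mathrm{next}(\mathrm{first}(m)),\ldots,\mathrm{last}(m)$, i.e., they occupy a contiguous block of the leaf list — this is the standard fact that a DFS visits precisely the nodes of~$T_m$ (hence its leaves) between entering and leaving~$m$, together with the observation that the above construction never re-routes a $\mathrm{next}$-pointer out of a subtree. Granting it, the enumeration for a query node~$n$ is immediate: output $\mathrm{first}(n)$, and while the last leaf~$\ell$ output differs from $\mathrm{last}(n)$, set $\ell\colonequals\mathrm{next}(\ell)$ and output it; stop once $\mathrm{last}(n)$ has been produced (when $n$ is itself a leaf this outputs just~$n$). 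The state is the pair $(\ell,\mathrm{last}(n))$, so memory is $O(1)$; each step does $O(1)$ work and emits one leaf, so delay is constant; and by the invariant the output is exactly the set of leaves of~$T$ having a directed path to~$n$, each produced once.

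There is no genuine difficulty; the one point worth emphasising is that, in contrast with the general multitree index of Theorem~\ref{thm:multitreeidx}, no auxiliary stack is needed here — because $T$ is a tree rather than merely a multitree, walking the leaf list with two pointers suffices, and this is exactly what reduces the memory from $O(\log\card{T})$ to constant.
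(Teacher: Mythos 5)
Your proof is correct and matches the paper's own argument: both store, during a linear-time traversal, next-leaf pointers in prefix order together with first/last-leaf pointers at every node, and then enumerate by walking the leaf list from $\mathrm{first}(n)$ to $\mathrm{last}(n)$ with constant delay and constant memory. Your write-up merely spells out the contiguity invariant that the paper treats as immediate.
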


  \begin{proof}
    We traverse the tree in prefix order in linear time and store at each leaf a
    pointer to the next leaf. We then traverse the tree bottom-up and store,
    for each internal node $n$ of the tree, a pointer to its first leaf in the
    prefix order (i.e., the first leaf that has a directed path to~$n$), and a
    pointer to its last leaf in the prefix order. This can clearly be performed
    in linear time.

    To perform the enumeration, given a node $n$, we jump to its first leaf $n'$,
    remember its last leaf $n''$, and we enumerate the leaves in prefix order from
    $n'$ to $n''$. This process is clearly correct, constant delay, and uses
    only a constant amount of memory.
  \end{proof}

  We next argue for the third point: the process of Section~\ref{sec:enum} takes memory linear
  in each produced valuation (and in particular constant when valuations have
  bounded size). Indeed, the only place in this section where memory usage did
  not satisfy this property was when using the OR-indexes, but the indexes of
  Proposition~\ref{prp:treeidx} only require constant memory, so the overall
  memory usage is linear in the produced valuations.

  We last take care of the first point.
  We first show that rewriting circuits to arity-two can be performed in
  linear-time without breaking upwards-determinism, extending
  Lemma~\ref{lem:arity2}:

  \begin{claim}
    \label{clm:arity2ud}
    Every upwards-deterministic Boolean circuit $C$ can be rewritten
     in linear time to an arity-two circuit $C'$ that is equivalent
    to~$C$ in standard semantics (i.e., captures the same function) and
    that is upwards-deterministic. We can further do so while preserving
    a compatible order~$<$.

    For every $k \in \NN$, every upwards-deterministic monotone
    $k$-augmented Boolean circuit $C$ can be rewritten it in linear time to an
    arity-two monotone $k$-augmented Boolean circuit $C'$ that is equivalent to~$C$ in zero-suppressed
    semantics and is upwards-deterministic. Further, all 
    properties preserved in Lemma~\ref{lem:arity2} are still preserved.
  \end{claim}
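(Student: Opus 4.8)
The plan is to reuse the standard arity-two rewriting of Lemma~\ref{lem:arity2} --- replace every gate $g$ of fan-in $n>2$ by a binary tree of $n-1$ gates of the same type as $g$, using associativity of $\land$ and $\lor$ --- but to choose the shape of each binary tree carefully so that upwards-determinism is preserved; this works uniformly for the two parts of the claim, $0$-validity being read either in the standard semantics or, in the monotone augmented case, as $\semp \in S(g)$. The first thing to observe is exactly which wires of $C$ are affected: rewriting a gate $g$ with inputs $h_1,\ldots,h_n$ only replaces, for each input $h_i$, the wire $(h_i,g)$ by a wire from $h_i$ to the gate of $g$'s binary tree that has $h_i$ as a direct input; all other wires are untouched, and, crucially, the \emph{purity} of every untouched wire is unchanged. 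Indeed, a wire into an OR-gate is always pure, a wire $(h,g')$ into an AND-gate is pure iff all other inputs of $g'$ are $0$-valid, and $0$-validity is preserved by the rewriting: an intermediate AND-gate is $0$-valid iff all the inputs of $g$ that it aggregates are $0$-valid, while the $0$-validity of OR-gates and of the unchanged gates does not change. The $0$-valid gates can be precomputed in linear time bottom-up, so it suffices to guarantee that, for each rewritten gate $g$ and each input $h_i$ of $g$ that is an AND- or OR-gate, the number of pure outgoing wires of $h_i$ does not increase --- equivalently (since $h_i$ had at most one pure outgoing wire in $C$, being unsatisfiable otherwise), that if $(h_i,g)$ was not pure in $C$ then the new wire out of $h_i$ into $g$'s tree is not pure in $C'$.

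For an OR-gate $g$ this is automatic: every wire into an OR-gate is pure, and in a binary OR-tree every internal gate is again an OR-gate with a single outgoing wire, so an input of $g$ merely trades its pure wire into $g$ for a pure wire into the tree, and no gate of the tree has two pure outgoing wires. For an AND-gate $g$ with inputs $h_1,\ldots,h_n$ (written in the order given by the compatible order $<$, which the tree must respect to keep $<$ compatible), the wire $(h_i,g)$ is pure iff every $h_j$ with $j\neq i$ is $0$-valid, while the new wire from $h_i$ to its parent $p$ in the tree is pure iff the other input of $p$ --- necessarily a contiguous block $[h_a,h_b]$ of inputs of $g$ --- is $0$-valid, i.e.\ iff every $h_c$ with $a\le c\le b$ is $0$-valid. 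Hence it suffices to build the binary tree on $h_1,\ldots,h_n$ so that, whenever $h_i$ has some non-$0$-valid input of $g$ as a ``neighbour'' (so $(h_i,g)$ is not pure), the block opposite $h_i$ in the tree contains a non-$0$-valid input. When $g$ has at most one non-$0$-valid input we obtain such a tree by growing it from that input (or from an arbitrary input if there is none), at each step attaching the remaining adjacent input on the left or on the right; when $g$ has at least two non-$0$-valid inputs we shape the tree so that the block opposite every input reaches one of them. The remaining requirements --- equivalence, monotonicity, being $\emptyset$-pruned, being $\semp$-pruned, being a d-DNNF in the relevant semantics, keeping any compatible order $<$ --- are inherited directly from Lemma~\ref{lem:arity2}, since our construction is a special case of the one there.

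The main obstacle is precisely the last point of the AND-gate analysis: $0$-valid inputs that sit \emph{between} two non-$0$-valid inputs in the compatible order, which $<$ forces the tree to separate without letting us regroup the inputs. Handling this requires a somewhat delicate case analysis on the positions of the $0$-valid inputs of each AND-gate. We note, however, that wherever this claim is actually invoked in the preprocessing of Theorem~\ref{thm:linmem} the circuit can be taken to be $\semp$-pruned, hence to contain no $0$-valid gate at all: then \emph{no} wire into an AND-gate is pure, every input of an AND-gate keeps a non-pure wire into the tree regardless of its shape, and preservation of upwards-determinism is immediate.
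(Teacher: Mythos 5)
There is a genuine gap, and it sits exactly where you say it does: the AND-gate case with a $0$-valid input lying between two non-$0$-valid inputs is not a corner case that can be deferred to ``a somewhat delicate case analysis'' --- under your self-imposed constraint that the binary tree must group the inputs of $g$ contiguously with respect to~$<$, no tree shape works. Take an AND-gate $g$ with inputs $h_1<h_2<h_3$ where only $h_2$ is $0$-valid: in any order-respecting binary tree, $h_2$ is first paired with $h_1$ or with $h_3$, and that sibling then gets a pure wire into the tree (its only co-input is the $0$-valid $h_2$) even though its wire into $g$ was not pure in $C$ (the third input is not $0$-valid); if that sibling is satisfiable and already had its one allowed pure wire elsewhere, upwards-determinism is destroyed. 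The paper escapes precisely by dropping the constraint you insist on: after a linear-time bottom-up computation of the $0$-valid gates, it builds a \emph{linear} (caterpillar) tree and \emph{reorders} the inputs so that all $0$-valid ones are attached above the non-$0$-valid ones. Then, if $g$ has at least two non-$0$-valid inputs, no wire into the tree is pure; if it has exactly one, say $g^*$, the only pure wire into the tree is the one leaving $g^*$, and $(g^*,g)$ was already pure in~$C$; so purity in $C'$ always implies purity in $C$ and any violation of upwards-determinism in $C'$ yields one in $C$. (The paper does not attempt to keep the new gates' children grouped contiguously; it treats the remaining properties of Lemma~\ref{lem:arity2} as preserved as before, the order being needed downstream only for range-gate semantics and decomposability, both of which survive the reordering.) The pieces of your argument that do go through --- purity of untouched wires is unchanged, $0$-validity is computable in linear time and preserved, OR-gates are automatic --- coincide with the paper's, but the decisive construction is missing.

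Your fallback --- that wherever the claim is invoked the circuit may be assumed $\semp$-pruned, hence free of $0$-valid gates --- is factually wrong about how the paper uses the claim, and in any case would only prove a strictly weaker statement than the one asserted (which is about arbitrary upwards-deterministic circuits, in the first part even non-monotone ones in standard semantics). Concretely, Claim~\ref{clm:reducingud} invokes the arity-two step inside the completion of Lemma~\ref{lem:completion}, applied to the original d-DNNF in standard semantics, which can contain $0$-valid gates (negated variables, no-input AND-gates), and the completion itself adds ${\geq}0$-range gates, which are $0$-valid; and Claim~\ref{clm:homogud} invokes it at the start of Proposition~\ref{prp:homogenize2}, which in the preprocessing of Theorem~\ref{thm:linmem} runs (via Lemma~\ref{lem:pruned}) \emph{before} $\semp$-pruning, on the monotonization output, which is full of $0$-valid gates (the no-input AND-gates replacing NOT-gates and the ${\geq}0$-range gates) --- indeed eliminating gates capturing $\semp$ is the very purpose of that homogenization step. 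So the case you skip is the one the proof actually needs.
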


  \begin{proof}
    We will use the same construction to show the two claims, and it will essentially
    be the same general construction that we used to show
    Lemma~\ref{lem:arity2}: we rewrite each gate with fan-in greater than 2 to a
    tree of gates of the same type with fan-in two. For this reason, we will not
    argue that the same properties as before are preserved, because this will
    still be true for the same reasons as before. To preserve
    upwards-determinism, we will simply be more specific about the way in which
    we construct each tree.

    We first preprocess the circuit once to compute which gates are
    0-valid. This can clearly be performed in linear time, as in
    Lemma~\ref{lem:pruned}.

    Whenever we wish to rewrite a gate $g$ with input gates $g_1, \ldots, g_n$,
    with $n > 2$, the tree of gates of the same type that we introduce will be
    linear (i.e., as unbalanced as possible). Specifically, we remove the wires
    from $g_i$ to~$g$ for $1 \leq i \leq n$, we introduce
    gates $g'_i$ of the same type as~$g$ for $1 < i < n-1$,
    we set the inputs of~$g$ to be $g_1$ and $g'_1$, the inputs of each $g'_i$
    for $1 < i < n - 2$ to be $g_{i+1}$ and $g'_{i+1}$, and the inputs of $g'_{n-2}$ to
    be $g_{n-1}$ and~$g_n$.
    This ensures that all gates have arity-two, and that the circuit is
    equivalent.

    We now impose a constraint on the order in which the input gates $g_1,
    \ldots, g_n$ should be considered: we require that all gates that are
    0-valid are enumerated first, so they are attached as high in the tree
    as possible.

    The only thing to show is that upwards-determinism is preserved.
    The new gates, i.e., the $g'_i$ introduced for each gate~$g$, cannot introduce a violation of
    upwards-determinism, because they have only one outgoing wire (to
    $g'_{i-1}$, or to~$g$). Hence, it suffices to consider outgoing wires for gates
    of the rewritten circuit $C'$ that stand for gates of the input circuit
    $C$, i.e., using our terminology above, it suffices to consider the wires
    from the $g_i$ to the~$g'_j$, or to~$g$. It clearly suffices to show that,
    whenever such a wire is pure, then the corresponding wire $(g_i, g)$ is pure
    in~$C$. Indeed, this implies that any violation of upwards-determinism
    in~$C'$ on a gate~$g'$ (which also exists in~$C$) would imply a violation of
    upwards-determinism on~$g'$ in~$C$.

    Hence, let us consider a wire $(g', g'')$ in~$C'$ where $g'$ exists in~$C$,
    let $g$ be the gate for which $g''$ was introduced: observe that the wire $(g', g)$
    exists in~$C$, and that $g$ and $g''$ have the same type, in fact possibly we
    have $g'' = g'$. Let us assume that $(g', g'')$ is pure in~$C'$, and show
    that it is pure in~$C$. There are four possibilities:

    \begin{itemize}
      \item \emph{The gate $g$ is an OR-gate.} In this case, the wire is pure in~$C$, and there is nothing to show.
      \item \emph{The gate $g$ is an AND-gate and all its inputs are 0-valid
        in~$C$.} In
        this case, the wire is pure in~$C$, and there is nothing to show.
      \item \emph{The gate $g$ is an AND-gate and only one of its inputs $g^*$ is
        not 0-valid in~$C$.} In this case, the only incoming pure wire of~$g$ in~$C$
        is $(g^*, g)$, and under our assumption that $(g', g'')$ is pure in~$C'$ we must show that $g'
        = g^*$. From the construction
        we know that $g^*$ is still 0-valid in~$C'$, so we know
        that $g^*$ was enumerated last in the inputs of~$g$, so it is attached
        to the lowest node in the tree of~$C'$ introduced for~$g$. As $g^*$ is
        still not 0-valid in~$C'$, we then know that $g$ is
        not 0-valid in~$C'$ and none of the $g'_i$
        is 0-valid (because there is a path from the gate
        $g^*$, which is not 0-valid, to all these gates that goes only via AND-gates). So if
        the wire $(g', g'')$ is pure, it must be the case that $g''$ is the lowest node in the
        tree of~$C'$, and the other input to~$g''$ must be 0-valid so we must
        have $g' = g^*$ which is what we wanted to show.
      \item \emph{The gate $g$ is an AND-gate and at least two of its inputs are
        not 0-valid in~$C$.} In this case, similarly to the above reasoning, $g$
        is not 0-valid in $C$ and none of the $g'_i$ are 0-valid
        in~$C$, Further, as two inputs that are not 0-valid were enumerated last,
        the lowest node in the tree has two inputs that are not 0-valid. Hence, in fact,
        this case cannot occur under our assumption that the wire $(g', g'')$ is
        pure in~$C'$.
    \end{itemize}

    This concludes the proof.
  \end{proof}

  We then show:

  \begin{claim}
    \label{clm:reducingud}
    The construction of Proposition~\ref{prp:reducing} preserves
    upwards-determinism.
  \end{claim}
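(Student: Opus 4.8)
The plan is to follow the two-step structure of the proof of Proposition~\ref{prp:reducing} — first the completion of Lemma~\ref{lem:completion}, then the monotonization — and to check that neither step destroys upwards-determinism, after one cheap preliminary cleanup of the NOT-gates. Concretely, I would first argue that we may assume without loss of generality that every NOT-gate of $C$ has fan-out one: if not, replace each NOT-gate $\neg x$ of fan-out $d$ by $d$ fresh copies, one per outgoing wire. This is linear-time, it keeps the circuit a d-DNNF with the same compatible order (each copy has the same interval $\{x\}$), and it preserves upwards-determinism because NOT-gates are exempt from the condition and because the only gates whose inputs change are the former targets of $\neg x$, whose set of ``other inputs'' — hence their $0$-validity, their satisfiability, and the purity of the wires feeding them — is unchanged. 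I would likewise assume the output gate $g_0$ has fan-out zero (otherwise put a fresh OR-gate on top, which changes nothing).

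Next, for the completion step, I would go through the transformations in the proof of Lemma~\ref{lem:completion}. The preliminary reduction to fan-in two should be performed with the upwards-determinism-preserving construction of Claim~\ref{clm:arity2ud} rather than the naive binary tree. All the remaining transformations only introduce $\geq 0$-range gates and fresh AND-gates, and these are harmless: a $\geq 0$-range gate is $0$-valid, and every fresh gate we add has fan-out one (the range gates feed only the AND-gate they were created for; the wrapper AND-gate $g_i'$ that replaces an input $g_i$ of an OR-gate feeds only that OR-gate), so none of them can ever witness a violation of upwards-determinism. It then remains to check the gates that already existed. For an AND-gate $g$ to which we attach a fresh $0$-valid range-gate child, adding a $0$-valid input changes neither the $0$-validity nor the satisfiability of $g$, nor the purity of any wire into $g$, so nothing changes. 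For the wrapping of an input $g_i$ of an OR-gate $g$, the wire $(g_i, g_i')$ is pure (all other inputs of $g_i'$ are $0$-valid range gates) exactly as the old wire $(g_i, g)$ into the OR-gate $g$ was pure, and $g_i$'s other outgoing wires are untouched, so $g_i$ keeps the same number of pure outgoing wires; meanwhile $g$ keeps the same type, outgoing wires, and satisfiability. The output fix-up only adds a fresh fan-out-zero AND-gate on top. Hence completion preserves upwards-determinism.

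For the monotonization step, each NOT-gate becomes an AND-gate with no inputs, which is $0$-valid; since (after the preliminary step, and as completion does not increase the fan-out of NOT-gates) it still has fan-out one, it trivially has at most one pure outgoing wire and is upwards-deterministic. For every other gate, its type and its outgoing wires are unchanged. Moreover, by Lemma~\ref{lem:monocap} together with the bottom-up characterization of captured sets (Lemma~\ref{lem:botup}), for each gate $g$ the set $S(g)$ in $C^*$ is exactly the set of satisfying assignments of $g$ on its scope in $C'$; since $C'$ is complete, the value of $g$ in $C'$ depends only on its scope, so $g$ is unsatisfiable, resp.\ $0$-valid, in $C^*$ if and only if it is in $C'$ (this also agrees with the obvious fact that an ex-NOT-gate $\neg x$, which is $0$-valid in $C'$, becomes a $0$-valid always-true gate). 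Consequently the purity of every wire into an AND- or OR-gate, and the satisfiability of every gate, is the same in $C^*$ as in $C'$, so $C^*$ is upwards-deterministic.

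The main obstacle — and the reason for the preliminary step — is precisely the monotonization: turning a NOT-gate into an always-true gate can in principle give it two pure outgoing wires (e.g.\ if $\neg x$ feeds two AND-gates whose remaining inputs are all $0$-valid), which would break upwards-determinism even though the original circuit was fine. Reducing NOT-gate fan-out to one beforehand neutralizes this, and the rest is a routine check that the (already-established) completion and monotonization transformations only add gates that are $0$-valid, fan-out-one, or both.
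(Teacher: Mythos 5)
Your proof is correct, and for the completion step it is essentially the paper's own argument: reduce to fan-in two via Claim~\ref{clm:arity2ud}, note that the fresh range gates and wrapper AND-gates have fan-out one and so cannot themselves violate the condition, and check that any pure outgoing wire of a pre-existing gate in the completed circuit corresponds to a pure wire of that gate in the original circuit (the paper organizes this as a case split on whether the target is an original AND-gate or a wrapper AND-gate created for an OR-input, followed by a contradiction argument, but the content matches your wire-by-wire check). Where you genuinely depart from the paper is the monotonization step. The paper dismisses it in one short paragraph --- monotonization changes neither 0-validity nor satisfiability nor which wires are pure --- and never addresses that the former NOT-gates become AND-gates, hence are newly subject to the upwards-determinism requirement; your preliminary duplication of NOT-gates to fan-out one exists precisely to cover this, and it is not a cosmetic precaution. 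Under the paper's definition (only AND- and OR-gates need be upwards-deterministic), consider an output OR-gate $g_0$ with inputs $a_1$ and $a_2$, where $a_1$ is an AND-gate over $o_1$ and $y$, $a_2$ is an AND-gate over $o_2$ and $\neg y$, and $o_1$, $o_2$ are fan-in-one OR-gates both fed by a single NOT-gate $\neg x$: this is a complete, structured, upwards-deterministic d-DNNF on which completion does nothing, yet after monotonization the former $\neg x$ is a satisfiable AND-gate with two pure outgoing wires (both its targets are OR-gates). So your route yields a proof that actually handles NOT-gates of fan-out at least two, at the cost of one extra trivial linear-time normalization, whereas the paper's shorter argument implicitly needs an additional assumption (e.g., upwards-determinism imposed on all gates, or NOT-gates of fan-out one). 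Two minor remarks: your ``fresh OR-gate on top'' normalization is innocuous only if the old output has no pure outgoing wire already, but the paper's output fix-up rests on the same tacit assumption, so this does not separate the two proofs; and your transfer of 0-validity and satisfiability through monotonization via completeness and Lemmas~\ref{lem:monocap} and~\ref{lem:botup} is exactly what the paper asserts inline without detail.
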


  \begin{proof}
    The construction first completes the input circuit using
    Lemma~\ref{lem:completion}, and then computes its
    monotonization. We argue that monotonization on a decomposable circuit cannot break
    upwards-determinism. Indeed, it does not change which gates are
    0-valid, it does not change the type of AND-gates or OR-gates except to
    introduce AND-gates with no inputs, so it does not change which wires are
    pure; and further it cannot make any gate unsatisfiable which
    wasn't unsatisfiable. We thus focus on completion.
    
    The completion
    construction in the proof of Lemma~\ref{lem:completion} first
    rewrites the circuit to an arity-two circuit $C$, which does not break
    upwards-determinism by Claim~\ref{clm:arity2ud}. Then it adds range gates as
    children to some AND-gates and rewrites inputs to OR-gates by AND-gates of
    the original gates and some fresh range gates. We explain why this does not break
    upwards-determinism.

    It is clear that, in the new circuit $C'$, the wires going out of a new range
    gate or out of a new AND-gate cannot violate upwards-determinism, because
    these gates are used as input to only one gate. So it suffices to consider
    the wires $(g', g)$ going out of gates $g'$ in~$C'$ that correspond to gates
    that already existed in~$C$. There are two cases: either $g$ is an AND-gate
    of~$C'$ that already existed in~$C$, or $g$ is an AND-gate introduced when
    rewriting an OR-gate $g''$ of~$C$.

    In the first case, we show that if the wire $(g', g)$ is pure in~$C'$, then
    it was already pure in~$C$. But this is immediate: if the wire is pure, then
    all other inputs to~$g$ in~$C'$ are 0-valid, and then from our
    rewriting it is clear that all inputs of~$g$ in~$C$ (which are a subset of
    those in~$C'$) were already 0-valid.

    In the second case, as~$g''$ was an OR-gate of~$C$, the wire $(g, g'')$ was
    necessarily pure in~$C$.

    This allows us to conclude the proof. Indeed, assume by way of contradiction
    that there is a gate $g$ of~$C'$ that violates upwards-determinism. By our
    initial reasoning, $g$ is necessarily a gate that already exists in~$C$.
    Further, $g$ captures a non-empty set in~$C'$, and by our construction we
    know that the same is true of~$g$ in~$C$. Now, let $g_1 \neq g_2$ be the
    gates of~$C'$ such that the wires $(g, g_1)$ and $(g, g_2)$ are pure
    in~$C'$. Let $g_1'$, $g_2'$ be the gates that correspond to~$g_1$ and $g_2$
    in~$C$, i.e., $g_i' = g_i$ if $g_i'$ exists in~$C$, and otherwise $g_i'$ is
    the OR-gate of $C$ for which the AND-gate $g_i$ was introduced. Our
    construction clearly ensures that $g_1' \neq g_2'$: indeed, our construction
    ensures that the gate $g$ cannot have a wire both to a fresh AND-gate
    of~$C'$ and to the original OR-gate (indeed no gates at all have wires to
    the original OR-gates), and $g$ cannot have a wire to two new AND-gates
    introduced for the same OR-gate (as we create one AND-gate for each input).
    Now, our previous claim ensures that the wires $(g, g_1')$ and $(g, g_2')$
    are pure in~$C$, so $g$ witnesses that $C$ is not upwards-deterministic,
    contradicting our assumption and concluding the proof.
  \end{proof}

  We then show the claim for Proposition~\ref{prp:normalize}.
  The construction of Lemma~\ref{lem:arity2} extends thanks to
  Claim~\ref{clm:arity2ud}. It is straightforward that
  Lemma~\ref{lem:collapsed} does not break upwards-determinism. Indeed, wires to
  AND-gates that are collapsed are necessarily pure because they have only one
  input, so collapsing the gates cannot cause a gate to have more than one
  outgoing pure wire. Further, adding exits is not problematic, because wires to
  exits were already to OR-nodes, so already pure, and each exit has exactly one
  outgoing wire.

  The $\emptyset$-pruning process of Lemma~\ref{lem:empprune} preserves upwards-determinism. Indeed,
  any gate in the output existed with the same type in the input, it is
  0-valid in the output iff it is 0-valid in the input, all gates in the output
  are satisfiable but were already satisfiable in the input, every wire in
  the output existed in the input, and it is not hard to see that if a wire in
  the output is pure then is also pure in the input: indeed, the inputs to
  AND-gates are unchanged, and changing the inputs to OR-gates is unproblematic
  because all their incoming wires are always pure.
  
  What must be shown is that upwards-determinism 
  is preserved by the pruning construction of Lemma~\ref{lem:pruned}. In this lemma, the
  actual process of $\semp$-pruning is unproblematic for similar reasons as for
  $\emptyset$-pruning: note that removing input gates to AND-gates that are
  0-valid cannot cause any of the other input wires to become pure.
  The crux of the matter is to
  show that Proposition~\ref{prp:homogenize2} preserves upwards-determinism. 
  This also takes care of proving the extension of
  Proposition~\ref{prp:homogenize}. Hence, we claim:

  \begin{claim}
    \label{clm:homogud}
    If the input $C$ to Proposition~\ref{prp:homogenize2} is upwards-deterministic,
    then its output $C'$ also is.
  \end{claim}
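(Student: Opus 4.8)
The plan is to prove that homogenization preserves upwards-determinism by showing, for each satisfiable AND- or OR-gate of the output circuit $C'$, that its pure outgoing wires are in bijection with the pure outgoing wires of a corresponding gate of $C$, so that upwards-determinism transfers from $C$ to $C'$.

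First I would reduce to the case where $C$ is arity-two: the proof of Proposition~\ref{prp:homogenize2} begins with that rewriting, and by Claim~\ref{clm:arity2ud} it can be performed while preserving upwards-determinism (and the other invariants). I then record two easy consequences of the identity $S(g^{\bowtie i}) = S^{\bowtie i}(g)$ established in that proof: the copy $g^{=0}$ of a gate $g$ of $C$ is $0$-valid in $C'$ iff $g$ is $0$-valid in $C$, whereas the copies $g^{=i}$ for $i\geq 1$ and $g^{>k}$ are never $0$-valid; and if some copy $g^{\bowtie i}$ is satisfiable then $g$ is satisfiable in $C$. I also dispose of the easy gates of $C'$: the auxiliary AND-gates introduced for two-input AND-gates of $C$ (the $g^{=i}_j$, the AND-shaped $g^{>k}_{i,j}$, the $g^{>k,1}_i$ and $g^{>k,2}_i$) have fan-out one and are hence trivially upwards-deterministic; the ``OR-gates with no inputs'' created at various points are unsatisfiable; and the copies of variable and range gates of $C$ are either variable/range gates (not subject to the condition) or unsatisfiable OR-gates.

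It then remains to treat a copy $h \in \{g^{=0},\dots,g^{=k},g^{>k}\}$ of an AND- or OR-gate $g$ of $C$, which I may assume satisfiable, so that $g$ is satisfiable in $C$. By inspecting the construction, $h$ is used as an input of $C'$ only inside the group of gates built for some AND/OR gate $g''$ of $C$ having $g$ as an input; for each such $g''$ I would check which wires leaving $h$ towards that group are pure in $C'$, splitting into the cases where $g''$ is an OR-gate, a one-input AND-gate, or a two-input AND-gate with sibling input $g^*$ (and, in the last case, tracking which of the two inputs is $g$ and whether $h=g^{=i}$ or $h=g^{>k}$). In every case one finds exactly one pure wire out of $h$ towards $g''$'s group when $(g,g'')$ is pure in $C$ --- i.e.\ $g''$ is an OR-gate, a one-input AND-gate, or a two-input AND-gate whose sibling input $g^*$ is $0$-valid, this last being detected in $C'$ by $(g^*)^{=0}$ being $0$-valid --- and no pure wire otherwise, since the remaining candidate targets are AND-gates whose sibling inputs are copies $(g^*)^{=j}$ with $j\geq 1$ or $(g^*)^{>k}$, which are never $0$-valid. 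This yields the announced bijection, so $h$ has at most one pure outgoing wire because $g$ (being satisfiable and upwards-deterministic in $C$) does; hence $h$ is upwards-deterministic. As all AND- and OR-gates of $C'$ have now been handled, $C'$ is upwards-deterministic, and for the extension of Proposition~\ref{prp:homogenize} the fresh output OR-gate placed on top of the homogenization has no outgoing wires and is trivially upwards-deterministic.

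The main obstacle is the bookkeeping in this last case analysis: enumerating, for a two-input AND-gate $g''=g_1'\wedge g_2'$, all the auxiliary gates of $C'$ fed by $g^{=i}$ or $g^{>k}$ (namely the $(g'')^{=l}_j$, the $(g'')^{>k}_{i,j}$, the $(g'')^{>k,1}_i$ and $(g'')^{>k,2}_i$), and verifying in each that purity of that wire in $C'$ is equivalent to $0$-validity of the relevant sibling copy, hence to the sibling being of the form $(g^*)^{=0}$ with $g^*$ being $0$-valid in $C$. The degenerate possibility $g_1'=g_2'$ is harmless, since the compatible order then forces that gate to capture a subset of $\ssemp$, making all copies involved unsatisfiable or $0$-valid, and the argument still goes through.
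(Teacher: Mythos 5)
Your proposal is correct and follows essentially the same route as the paper's proof: reduce to the arity-two case via Claim~\ref{clm:arity2ud}, dismiss the fresh auxiliary gates by their fan-out one (and the empty OR-gates by unsatisfiability), and then, for each copy of an AND- or OR-gate, match its pure outgoing wires in $C'$ with pure wires of the original gate in $C$ via the key observation that among the sibling copies only $(g^*)^{=0}$ can be $0$-valid, and it is $0$-valid exactly when $g^*$ is. The paper phrases this as an ``at most one pure wire per group, and purity transfers down to $C$'' claim rather than your bijection, but the substance and case analysis are the same.
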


  \begin{proof}
    Remember that the construction in the proof of
    Proposition~\ref{prp:homogenize2} first rewrites the circuit to arity-two
    with Lemma~\ref{lem:arity2}, which does not break upwards-determinism thanks
    to Claim~\ref{clm:arity2ud}; so we
    let $C$ be the arity-two version of the input circuit, which is
    upwards-deterministic.
    
    The construction then produces $C'$ by introducing, for each gate $g$ of the
    original circuit~$C$, gates of the form $g^{=i}$ for $0 \leq i \leq k$ and
    $g^{>k}$, as well as gates of the form $g_j^{=i}$ and $g_{i,j}^{>k}$,
    $g_{i,j}^{>k,1}$, $g_{i,j}^{>k,2}$, which we call \emph{fresh gates} of~$C'$.
    We will define the \emph{original gate}
    $\omega(g)$ of a gate $g$ of~$C$ as follows:

    \begin{itemize}
      \item if $g$ is of the form $g_0^{=i}$ or $g_0^{>k}$, then $\omega(g)
        \colonequals g_0$
      \item if $g$ is a fresh gate created for a AND-gate $g_0$ with two inputs
        in~$C$, then $\omega(g) \colonequals g_0$.
    \end{itemize}
    
    It is clear that fresh gates $g$ in~$C'$ cannot violate
    upwards-determinism, because in the construction any such gate $g$ is used as
    input to only one gate in~$C'$, specifically, a gate whose original gate is
    the same as that of~$g$.
    So it suffices to check upwards-determinism for gates of~$C'$ which are not
    fresh gates, i.e., wires $(g', g)$ of~$C'$ where $g'$ is not fresh, so that
    in particular $\omega(g') \neq \omega(g)$, and by construction $(\omega(g'),
    \omega(g))$ is a wire of~$C$.

    We will show the following claim (*): for every wire $(g_0', g_0)$ of~$C$
    such that $g_0'$ is an AND-gate or an OR-gate, when
    considering every wire $(g', g)$ of~$C'$ such that $\omega(g) = g_0$ and
    $\omega(g') = g_0'$, then (i) for each choice of~$g'$, at most one $g$ is such that
    the wire is pure, and (ii)
    if one such wire is pure then $(g_0', g_0)$ is also pure in~$C$. This claim implies that
    $C'$ is upwards-deterministic. Indeed, assume to the contrary that $C'$ is
    not upwards-deterministic, then it has an AND- or OR-gate $g'$ which is not
    fresh, is satisfiable, and has two pure wires $(g', g_1)$ and $(g, g_2)$ with $g_1 \neq
    g_2$. The construction then ensures that $\omega(g')$ is an AND-gate or an
    OR-gate and that $(\omega(g'), \omega(g_1))$ and $(\omega(g'), \omega(g_2))$
    are wires of~$C$. Further, by the properties of~$C'$, the set $S(g')$
    captured by $g'$ in~$C'$ is a subset of the set $S(\omega(g'))$
    of~$\omega(g')$ in~$C$, so $\omega(g')$ is satisfiable.
    By (i), we know that we must have
    $\omega(g_1) \neq \omega(g_2)$,
    and by (ii) these two wires are pure in~$C$, so $\omega(g')$ is
    not upwards-deterministic in~$C$, a contradiction.
    Hence, it suffices to show claim (*).
    
    Let us show claim (*) by considering all possible wires $(g_0', g_0)$
    of~$C$:

    \begin{itemize}
      \item If $g_0$ is an OR-gate, then the wire $(g_0', g_0)$ is always pure so (ii)
        is vacuous. Further, for each gate $g'$ of~$C'$ with $\omega(g') =
        g_0'$, there is exactly one gate $g$ of~$C'$ with $\omega(g) = g_0$
        such that the wire $(g', g)$ is in~$C'$, so (i) holds.
      \item If $g_0$ is an AND-gate, then:
        \begin{itemize}
          \item If $g_0$ has no inputs, then there are no wires to consider
            so (i) and (ii) are vacuous.
          \item If $g_0$ has one input then the wire $(g_0', g_0)$ is always pure
            so (ii) is vacuous, and (i) holds for the same reasons as for
            OR-gates.
          \item If $g_0$ has two inputs, let $g_0''$ be the input of~$g_0$
            in~$C$ which is different from~$g_0'$, i.e., the inputs of~$g_0$
            in~$C$ are $g_0'$ and $g_0''$. Observe that in the construction, for any wire $(g', g)$
            of $C$ with $\omega(g) = g_0$ and $\omega(g') = g_0'$, the gate $g$
            is always a fresh AND-gate with two inputs, and its other input is a
            non-fresh gate $g''$ such that $\omega(g'') = g_0''$. Hence, 
            the wire $(g',
            g)$ of~$C'$
            is pure only if $g''$ is 0-valid in~$C'$.
            Recalling the properties of~$C'$, remember that this can only happen
            if $g''$ is the gate $(g_0'')^{=0}$ and if $g_0''$ is 0-valid
            in~$C$, so we have shown point (ii). Further, observe from the
            construction that the only
            such wires $(g', g)$ in~$C'$ are:
            \begin{itemize}
              \item For $i \in \{0, \ldots, k\}$, the wire from $(g_0')^{=i}$ to
                $(g_0)_i^{=i}$, whose other input is $(g_0'')^{=0}$.
              \item The wire from $(g_0')^{>k}$ to $(g_0)^{>k,1}$, whose other
                input is $(g_0'')^{=0}$.
            \end{itemize}
            So indeed, for each choice of~$g$, there is at most one pure wire,
            so (i) holds too.
        \end{itemize}
    \end{itemize}

    We have thus established claim (*), which concludes the proof.
  \end{proof}

  With the above, we have finished the proof of Theorem~\ref{thm:linmem}.

\end{toappendix}

\section{Conclusion}
\label{sec:conclusion}
We have studied how to enumerate satisfying valuations of circuits, under
the structuredness, decomposability, and determinism conditions introduced in
AI: we have shown that enumeration can be performed with
linear preprocessing and delay linear in each valuation (so constant delay
for valuations of constant Hamming weight). We have given two example applications of this
result: factorized databases, and an independent proof of the MSO query enumeration
results of~\cite{bagan2006mso,kazana2013query}. Beyond these applications,
however, our method implies
efficient enumeration results for all problems studied in knowledge compilation,
when they can be compiled to structured d-DNNFs (refer back to the Introduction for examples).

A natural question is whether our constructions can be extended for other tasks, e.g.,
computing the $i$-th valuation \cite{bagan2006mso,BaganDGO08}; managing updates
on the structure \cite{losemann2014mso}; or enumerating
valuations in order of weight, or in lexicographic order: this latter problem is open for MSO \cite[Section~6.1]{Segoufin14}
though results are known for factorized representations following an
\mbox{f-tree}~\cite{bakibayev2013aggregation}. Another direction is to 
strengthen our result to constant-memory enumeration on all d-DNNF circuits, or
lift some hypotheses on the input circuits.
We also intend to study a practical implementation, which we
believe to be realistic since our construction only performs simple and modular
transformations on the input circuits, and has no hidden large constants.

\subparagraph*{Acknowledgements.}
This work was partly funded by the French ANR Aggreg project, by 
the CPER Nord-Pas de Calais/FEDER DATA Advanced data science and technologies
2015-2020, by the PEPS JCJC INS2I 2017 CODA, and by the Télécom ParisTech
Research Chair on Big Data and Market Insights.

\vfill
\pagebreak

\bibliography{main}

\end{document}